\renewcommand{\varphi}{\phi}
\newcommand{\seq}{\Rightarrow}
\newcommand{\base}[1]{\mathscr{#1}}
\newcommand{\baseB}{\base{B}}
\newcommand{\baseC}{\base{C}}
\newcommand{\baseD}{\base{D}}
\newcommand{\baseE}{\base{E}}
\newcommand{\baseX}{\base{X}}
\newcommand{\baseILL}{\base{N}}
\newcommand{\emptybase}{\varnothing}
\newcommand{\At}{\mathbb{A}}
\newcommand{\baseGeq}{\supseteq}
\newcommand{\baseLeq}{\subseteq}
\newcommand{\ILL}{ILL}
\newcommand{\provesILL}{\vdash}
\newcommand{\emptymultiset}{\varnothing}
\newcommand{\deriveBaseM}[1]{\vdash_{\!\!#1}}
\newcommand{\suppIPL}[1]{\Vdash_{ \!\!#1 }^{\!\!\mathfrak{S}}}
\newcommand{\suppM}[2]{\Vdash_{ \!\!#1 }^{ \!\!#2 }}
\newcommand{\suppL}[3]{\Vdash_{ \!\!#1 }^{{ \!\!#2 }\ctxt\hspace*{0.08cm}{ \!\!#3 }}}
\newcommand{\mand}{\otimes}
\newcommand{\mtop}{1}
\newcommand{\mto}{\multimap}
\newcommand{\aand}{\mathbin{\&}}
\newcommand{\aor}{\oplus}
\newcommand{\abot}{0}
\newcommand{\bang}{\mathop{!}}
\newcommand{\makeMultiset}[1]{\{#1\}}
\newcommand{\flatILL}[1]{{#1}^{\flat}}
\newcommand{\deflatILL}[1]{{#1}^{\natural}}
\newcommand{\openaddrule}{\{}
\newcommand{\closeaddrule}{\}}
\DeclareMathSymbol{\msetsum}{\mathrel}{bbold}{\lq\,}
\newcommand{\ctxt}{;}
\newcommand{\iplill}[1]{(#1)^\star}
\newcommand{\rn}[1]{\mathsf{#1}}
\newcommand{\ern}[1]{\rn{#1}_\mathsf{E}}
\newcommand{\irn}[1]{\rn{#1}_\mathsf{I}}
\newcommand{\promotion}{\rn{Prom}}
\newcommand{\dereliction}{\rn{Der}}
\newcommand{\weakening}{\rn{Wk}}
\newcommand{\contraction}{\rn{Ctr}}
\def\labelandtag#1#2{\begingroup
   \def\@currentlabel{#2}%
   \phantomsection\label{#1}\endgroup
}
\def\descriptionlabel#1{\hspace\labelsep \upshape #1}
\let\orgdescriptionlabel\descriptionlabel
\renewcommand*{\descriptionlabel}[1]{%
  \let\orglabel\label
  \let\label\@gobble
  \phantomsection
  \edef\@currentlabel{#1}%
  \let\label\orglabel
  \orgdescriptionlabel{(#1)}%
}
\theoremstyle{thmstylethree}
\newtheorem{theorem}{Theorem}[section]
\newtheorem{proposition}[theorem]{Proposition}%
\newtheorem{lemma}[theorem]{Lemma}%
\newtheorem*{lemma*}{Lemma}%
\newtheorem{corollary}[theorem]{Corollary}%
\newtheorem{definition}[theorem]{Definition}%
\newtheorem{defn}[theorem]{Definition}%
\begin{document}

\title[A Proof-theoretic Semantics for Intuitionistic Linear Logic]{\center{A Proof-theoretic Semantics for \\ Intuitionistic Linear Logic}}

\author*{\fnm{Yll} \sur{Buzoku}$^{[0009-0006-9478-6009]}$\footnote{The author has received funding from the European Union’s Horizon 2020 research and innovation programme under the Marie Skłodowska-Curie grant agreement No 101007627.}} \email{y.buzoku@ucl.ac.uk}
\affil{\orgdiv{Department of Computer Science}, \orgname{University College London}, \orgaddress{\street{66-72 Gower Street}, \city{London}, \postcode{WC1E 6EA}, \country{United Kingdom}}}

\abstract{The approach taken by Gheorghiu, Gu and Pym in their paper on giving a base-extension semantics for Intuitionistic Multiplicative Linear Logic is an interesting adaptation of the work of Sandqvist for IPL to the substructural setting. What is particularly interesting is how naturally the move to the substructural setting provided a semantics for the multiplicative fragment of intuitionistic linear logic. Whilst ultimately the Gheorghiu, Gu and Pym used their foundations to provide a semantics for bunched implication logic, it begs the question, what of the rest of intuitionistic linear logic? In this paper, I present just such a semantics. This is particularly of interest as this logic has as a connective the bang, a modal connective. Capturing the inferentialist content of formulae marked with this connective is particularly challenging and a discussion is dedicated to this at the end of the paper.}

\keywords{Proof-theoretic Semantics, Base-extension Semantics, Substructural Logic, Intuitionistic linear logic, exponentials, modalities, additivity}

\maketitle

\section{Introduction}\label{sec:Introduction}

Proof-theoretic Semantics (P-tS) is an alternative approach to semantics in which proof, rather than truth, takes a central role in conferring meaning to logical expressions. This can be seen as a mathematical realisation of the philosophical paradigm of Inferentialism; a position which seeks to determine the meanings of expressions through their use. In this realisation, the meanings of logical expressions are given through some notion of proof. This philosophical position has its origins in the works of Wittgenstein~\cite{Wittgenstein1953-WITPI-4} in which he says that the meaning of a word should be determined by its use, though more recent works, such as those of Brandom~\cite{brandom_ArticulatingReasons,brandom_MakingItExplicit,Brandom_ReasonsForlogic} and Dummett~\cite{Dummett_LogicalBasis_1991} have also further explored this position.

Modern proof-theoretic semantics can be seen as having two main branches, both stemming from Prawitz's original idea of a General Proof Theory\footnote{Which itself stems from Prawitz's considerations of Gentzen's statements on natural deduction in his paper~\cite{Gentzen_Untersuchungen_1935}.}~\cite{Prawitz1971ideas, prawitz1973towards}: The first, that of Proof-theoretic Validity (P-tV), concerns itself with the issue of what constitutes a valid proof. This approach to meaning can be seen as being closer to Prawitz's General Proof Theory than the alternative we present here and has been explored by authors such as Dummett~\cite{Dummett_LogicalBasis_1991}, Prawitz~\cite{Prawitz_General_Proof_Theory_1974}, Schroeder-Heister~\cite{Schroeder-Heister1991-SCHUPS,SchroederHeister_UniformPtSforLogicalConstants_1991} along with Piecha and de-Campos Sanz~\cite{Piecha2015failure, Piecha_CompletenessInPtS_2016}. The second, that of Base-extension Semantics (B-eS), concerns itself with the question of what constitutes a valid formula, and is the approach we will consider in this paper. This approach has been previously explored by Sandqvist~\cite{Tor2005, Tor_HypothesisDischarging_2015} and Schroeder-Heister, Piecha and de-Campos Sanz~\cite{Piecha_CompletenessInPtS_2016, CamposSanz2014-CAMACR-5}.
Whilst Sandqvist gives a sound and complete B-eS for classical logic in his doctoral thesis~\cite{Tor2005}, which he also later also discusses in~\cite{Tor2009}, it was really his work on Intuitionistic Propositional Logic (IPL)~\cite{Tor2015} which cemented the importance of B-eS as a viable approach to P-tS. In particular, in this work, he develops the rather elegant completeness argument that forms the basis of our own in this paper, which we shall discuss below. For a good comparison of these two approaches to Pt-S (as well as a third approach that is similar to B-eS), the reader is referred to~\cite{Antonio_Comparison2025}.

Sandqvist starts from the notion of an atomic rule $\mathcal{R}$, which we write linearly as $(P_1\seq q_1,\dots,P_n\seq q_n)\seq r$ and defines a base $\baseB$ to be a set of such rules. Such rules are to be considered as an instance of a valid inferential step that one can use in the justification of a particular atomic sentence. This is made precise through the definition of a derivability relation, $\deriveBaseM{\baseB}$, between bases, sets of atoms and individual atoms, as exemplified in Figure~\ref{fig:B-eS-IPL}. One should understand $\baseB$ as an $\rm NJ$-like object, whose elements are not schematic in nature. Consequently, just as how $\rm NJ$ can be thought of as generating the consequence relation $\vdash_{\rm NJ}$ for IPL, $\baseB$ can be thought of as generating the consequence relation $\deriveBaseM{\baseB}$. Sandqvist uses this consequence relation to thus define a support relation $\suppM{\baseB}{}$ on IPL sequents.

\begin{figure}[ht]
\hrule\small
\vspace{2mm}
\[
\frac{\begin{array}{ccc}
        [P_1] &        & [P_n] \\
        q_1   & \ldots & q_n  
      \end{array}
}{r} \, \mathcal{R}
\qquad 
{\begin{array}{rl}
\mbox{(Ref)} & \mbox{$P , p \vdash_\baseB p$} \\ 
(\mbox{App}_\mathcal{R}) & \mbox{if $((P_1 \Rightarrow q_1) , \ldots , (P_n \Rightarrow q_n)) \Rightarrow r)$ and,} \\
    & \mbox{for all $i \in [1,n]$, $P , P_i \vdash_\baseB q_i$, 
    then $P \vdash_\baseB r$} 
\end{array}}
\]
\[{
\begin{array}{rl@{\quad}rl} 
\mbox{(At)} & \mbox{for atomic $p$, $\Vdash_\baseB p$ iff $\vdash_\baseB p$} & 
    (\lor) & \mbox{$\Vdash_\baseB \phi \lor \psi$ iff, for every atomic $p$ and every } \\ 
    & & & \mbox{$\baseC \!\supseteq\! \baseB$, if $\phi \Vdash_\baseC p$ and $\psi \Vdash_\baseC p$, then $\Vdash_\baseC p$}\\
(\supset) & \mbox{$\Vdash_\baseB \phi \supset \psi$ iff $\phi \Vdash_\baseB \psi$} & (\bot) & 
    \mbox{$\Vdash_\baseB \bot$ iff, for all atomic $p$, $\Vdash_\baseB p$} \\
(\land) & \mbox{$\Vdash_\baseB \phi \land \psi$ iff $\Vdash_\baseB \phi$ and 
    $\Vdash_\baseB \psi$} & \mbox{(Inf)} & \mbox{for $\Gamma \neq \emptyset$, 
        $\Gamma \Vdash_\baseB \phi$ iff, for every $\baseC \supseteq \baseB$,}  \\ 
    & & & \mbox{if $\Vdash_\baseC \gamma$, for every $\gamma \in \Gamma$, then $\Vdash_\baseC \phi$}     
\end{array}}
\] \vspace{-2mm}
\caption{Sandqvist's B-eS for Intuitionistic Propositional Logic}
\label{fig:B-eS-IPL} 
\vspace{2mm}
\hrule
\end{figure}  

The need to consider base extensions in the semantics arises from the requirement that the meaning of a sentence should be given by some notion of a  construction of it. In B-eS, one generally equates the concept of contruction of a sentence to that of the sentence satisfying the support relation relative to a base, with the base witnessing the construction of the sentence and thus, containing the ``meaning'' of the sentence\footnote{That the support relation is a satisfactory notion of construction is an issue that we shall not discuss in great detail in this paper. However I feel it prudent to make at least the following point: If one considers the case of atomic sentences, then we see that for the support relation to be satisfied, we require that the atomic sentence be provable in the base. Since, in Sandqvist's semantics for IPL and in our work, as we shall see later, the support relation is a conservative extention of the atomic derivability relation, it feels fair to say that the constructability of complex sentences is indeed being inductively captured by the support relation.}. With this in mind, it seems reasonable that, in the case of intuitionistic implication, a construction of $\varphi\supset\psi$, when combined with a construction of $\varphi$, should yield a construction of $\psi$. Without the base extension, we would be required to accept that a construction of $\psi$ may be obtained \emph{even without} a construction of $\varphi$. This would follow vacuously as, were we to define $\varphi\supset\psi$ according to the clause
\[
\suppM{\baseB}{}\varphi\supset\psi\text{ iff }\suppM{\baseB}{}\varphi \Rightarrow \,\suppM{\baseB}{}\psi
\]
then we would have that $\suppM{\baseB}{}\varphi\supset\psi$ and $\not\suppM{\baseB}{}\varphi$ imply $\suppM{\baseB}{}\psi$; something that we find quite undesirable in our constructive world view. On the other hand, by considering base extensions, we lose Prawitz's original idea, that bases are supposed to fix the meaning of sentences, as the meaning of sentences is now allowed to change. These issues are well discussed by Prawitz in~\cite{Prawitz1971ideas} but again more recently by Sandqvist~\cite{Tor2005, Tor_HypothesisDischarging_2015,Tor_5thPtSSymposiumTalk}. It is important to note that, in principle, the requirement for the extension is no different to the requirement that appears in Kripke semantics, where implication requires a condition on all accessible worlds\footnote{And in fact, base extensions in the semantics play a similar role to hypothetical derivations in natural deduction.}. It is with a support relation that one can give a semantics to the full logic. We briefly show how by giving an overview of the soundness and completeness arguments used by Sandqvist in~\cite{Tor2015}.

As mentioned, the semantics in Figure~\ref{fig:B-eS-IPL} are sound and complete with respect to IPL. In this case, an IPL sequent $(\Gamma : \varphi)$ is defined to be valid if and only if $\Gamma\suppM{\baseB}{}\varphi$ for all bases $\baseB$. We see this notion of validity is very similar to the usual conception of validity for Kripke semantics, where a formula $\varphi$ is considered to be valid if and only if for all models $\mathcal{M}$, we have $\mathcal{M}\vDash \varphi$. This similarity is discussed in detail by Schroeder-Heister in~\cite{Schroeder2007modelvsproof}. Though we have, until now, emphasised the similarities between the usual Kripke semantics for IPL and the B-eS of Figure~\ref{fig:B-eS-IPL}, it is imperative to note, however, that models and bases \emph{are not} the same thing. A sceptic of this claim need only look at the form of the definitional clauses. Disjunction, for example, can be seen to be defined very closely to the interpretation one gives the disjunction elimination rule of $\rm NJ$, in contrast to the usual meta-logical disjunction seen in the Kripke semantics. This seemingly contradicts the famous quote by Gentzen~\cite{Gentzen_Investigations_1964} that ``the introductions present, so to speak, the `definitions' of the symbols concerned". For details on why this definition of disjunction is necessary in B-eS, the reader is referred to~\cite{Tor2015,Piecha2015failure,Alex_FromPtV2BeS_2022,Prawitz_NatDed_2006,Pym-Ritter-Robinson_CategoricalP-tS_2025}.

Let us now sketch the arguments for soundness and completeness.

\begin{theorem}[IPL Soundness]
    If $\Gamma\vdash_{\rm NJ}\varphi$ then $\Gamma\suppM{\baseB}{}\varphi$, for all $\baseB$.
\end{theorem}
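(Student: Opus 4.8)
The plan is to proceed by induction on the structure of the $\rm NJ$ derivation of $\Gamma\vdash_{\rm NJ}\varphi$, showing that for each rule of $\rm NJ$ the relation $\Vdash_{\baseB}$ defined in Figure~\ref{fig:B-eS-IPL} is closed under that rule, uniformly in the base $\baseB$. Before running the induction I would establish three structural facts, each read off quickly from clause $(\mbox{Inf})$ and the atomic clauses: \emph{Monotonicity}, that $\Gamma\Vdash_{\baseB}\varphi$ and $\baseC\supseteq\baseB$ imply $\Gamma\Vdash_{\baseC}\varphi$; \emph{Reflexivity}, that $\varphi\Vdash_{\baseB}\varphi$ for every $\baseB$; and a \emph{Cut} principle, that $\Gamma\Vdash_{\baseB}\varphi$ together with $\Delta,\varphi\Vdash_{\baseB}\psi$ yields $\Gamma,\Delta\Vdash_{\baseB}\psi$. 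These ensure that $\Vdash_{\baseB}$ behaves as a consequence relation, which is exactly what is needed in order to mimic the $\rm NJ$ rules.

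For the base case $\varphi\vdash_{\rm NJ}\varphi$ we invoke Reflexivity. For the inductive step I would case on the last rule applied. The rules for $\land$ fall straight out of clause $(\land)$ combined with $(\mbox{Inf})$: $\land$-introduction is literally the conjunction of the two inductive hypotheses, and $\land$-elimination reads off one conjunct. The rules for $\supset$ are handled using clause $(\supset)$: for $\supset$-introduction one passes from $\Gamma,\varphi\Vdash_{\baseB}\psi$ to $\Gamma\Vdash_{\baseB}\varphi\supset\psi$ by unwinding $(\mbox{Inf})$ and using Monotonicity, while for $\supset$-elimination clause $(\supset)$ turns $\Gamma\Vdash_{\baseB}\varphi\supset\psi$ into $\Gamma,\varphi\Vdash_{\baseB}\psi$, which combines with $\Gamma\Vdash_{\baseB}\varphi$ by Cut. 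Finally $\lor$-introduction follows from clause $(\lor)$ together with Cut and Monotonicity: once $\Vdash_{\baseC}\varphi$ holds, any $\baseD\supseteq\baseC$ with $\varphi\Vdash_{\baseD}p$ and $\psi\Vdash_{\baseD}p$ already yields $\Vdash_{\baseD}p$, and the hypothesis-carrying version follows via $(\mbox{Inf})$.

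The two delicate cases are $\lor$-elimination and $\bot$-elimination, since in both the conclusion is an arbitrary formula $\chi$ whereas clauses $(\lor)$ and $(\bot)$ only speak about atoms. The tool that bridges this gap is the lemma that, for every base $\baseB$, every context $\Gamma$ and every formula $\chi$, one has $\Gamma\Vdash_{\baseB}\chi$ \emph{if and only if} for every $\baseC\supseteq\baseB$ and every atom $p$, $\chi\Vdash_{\baseC}p$ implies $\Gamma\Vdash_{\baseC}p$; the forward direction is immediate from Monotonicity and Cut, and the backward direction is proved by induction on $\chi$. Granting this, $\bot$-elimination is immediate from clause $(\bot)$, and $\lor$-elimination follows by pushing the two minor premises through clause $(\lor)$ at the atomic level and then lifting the result back to $\chi$. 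I expect the backward direction of this atomic-to-arbitrary lemma — in particular making its $\supset$- and $\lor$-cases interact correctly with passage to larger bases — to be the main obstacle; the rest of the argument is routine manipulation of $(\mbox{Inf})$ and Monotonicity.
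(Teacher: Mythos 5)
Your proposal is correct and follows essentially the same route as the paper's sketch (and Sandqvist's original argument): a rule-by-rule induction on the $\rm NJ$ derivation, underpinned by monotonicity, reflexivity and cut for $\Vdash_\baseB$. In particular you correctly isolate the one non-routine ingredient --- the lemma that $\Gamma\Vdash_\baseB\chi$ holds iff, for all $\baseC\supseteq\baseB$ and atoms $p$, $\chi\Vdash_\baseC p$ implies $\Gamma\Vdash_\baseC p$ --- which is exactly what is needed to push $\lor$-elimination and $\bot$-elimination past the atom-only clauses $(\lor)$ and $(\bot)$, and which the paper's two-line sketch leaves implicit.
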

The proof of this theorem amounts to using the fact that derivations in $\rm NJ$ are inductively defined and so it suffices to show that if the hypothesis of every rule of $\rm NJ$ is valid then the conclusion also is. For example, in the case of $\irn \land$, we suppose that $\Gamma\suppM{\baseB}{}\varphi$ and $\Gamma\suppM{\baseB}{}\psi$ for all bases $\baseB$ and show that therefore $\Gamma\suppM{\baseB}{}\varphi\land\psi$ for all bases $\baseB$, as required.

\begin{theorem}[IPL Completeness]
    If $\Gamma\suppM{\baseB}{}\varphi$, for all $\baseB$ then $\Gamma\vdash_{\rm NJ}\varphi$.
\end{theorem}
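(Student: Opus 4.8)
The plan is to follow Sandqvist's argument from~\cite{Tor2015}, which builds a single ``canonical'' base internalising $\rm NJ$-derivations over the subformulas of the sequent and then shuttles between semantic support and formal derivability-in-a-base. Fix a sequent $\Gamma \vdash \varphi$ and let $\Xi$ be the (finite) set of subformulas of $\Gamma \cup \{\varphi\}$. Choose an injection $\flatILL{(\cdot)}$ sending each non-atomic $\chi \in \Xi$ to a fresh atom $\flatILL{\chi}$ and acting as the identity on atoms, and let $\deflatILL{(\cdot)}$ be a left inverse, extended by the identity to every remaining atom. Define a base $\baseILL$ that internalises the introduction and elimination rules of $\rm NJ$ read through $\flat$: for $\chi = \chi_1 \land \chi_2$ the rules $\flatILL{\chi_1},\flatILL{\chi_2} \seq \flatILL{\chi}$, $\flatILL{\chi} \seq \flatILL{\chi_1}$ and $\flatILL{\chi} \seq \flatILL{\chi_2}$; for $\chi = \chi_1 \supset \chi_2$ the rules $(\flatILL{\chi_1} \seq \flatILL{\chi_2}) \seq \flatILL{\chi}$ and $\flatILL{\chi},\flatILL{\chi_1} \seq \flatILL{\chi_2}$; for $\chi = \chi_1 \lor \chi_2$ the injections $\flatILL{\chi_1} \seq \flatILL{\chi}$, $\flatILL{\chi_2} \seq \flatILL{\chi}$ together with, for every atom $p$, the elimination rule $(\flatILL{\chi},\,(\flatILL{\chi_1} \seq p),\,(\flatILL{\chi_2} \seq p)) \seq p$; and for $\chi = \bot$ the rule $\flatILL{\bot} \seq p$ for every atom $p$.

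First I would prove the Reflection Lemma: for every $\chi \in \Xi$ and every base $\baseC \baseGeq \baseILL$, $\suppM{\baseC}{}\chi$ if and only if $\deriveBaseM{\baseC}\flatILL{\chi}$ (equivalently, since $\flatILL{\chi}$ is atomic, if and only if $\suppM{\baseC}{}\flatILL{\chi}$, by the clause (At)). This goes by induction on $\chi$, using the clauses of Figure~\ref{fig:B-eS-IPL} together with the rules of $\baseILL$; the atomic case is immediate and the $\land$, $\supset$ and $\bot$ cases are direct applications of the relevant $\baseILL$-rules and semantic clauses. The disjunction case is the one requiring real care: for the left-to-right direction one instantiates the $(\lor)$ clause at the atom $\flatILL{\chi}$ and combines the two injection rules with (Inf); for the converse, to apply the $\baseILL$-elimination rule one must first turn an assumption of the form $\chi_i \suppM{\baseD}{} p$ into the atomic statement $\flatILL{\chi_i} \deriveBaseM{\baseD} p$, which is done by passing to the base $\baseD \cup \{\seq \flatILL{\chi_i}\}$, invoking the induction hypothesis, and using the standard structural properties of $\deriveBaseM{\baseB}$ --- monotonicity in the base, and the fact that adjoining the axiom $\seq q$ to a base is interderivable with placing $q$ among the assumptions.

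Next I would prove the Simulation Lemma: if $\flatILL{\Gamma} \deriveBaseM{\baseILL} \flatILL{\varphi}$, then $\Gamma \vdash_{\rm NJ} \varphi$. Here I would induct on the $\baseILL$-derivation and apply $\deflatILL{(\cdot)}$ uniformly: every rule of $\baseILL$ unflattens to an instance of an introduction or elimination rule of $\rm NJ$ (the disjunction-elimination rules become instances of $\lor$-elimination concluding $\deflatILL{p}$, and the rules for $\bot$ become instances of $\bot$-elimination), so the $\baseILL$-derivation unflattens to an $\rm NJ$-derivation; since $\deflatILL{(\cdot)}$ inverts $\flat$ on $\Xi$, this derivation witnesses $\Gamma \vdash_{\rm NJ} \varphi$.

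Finally, assume $\Gamma \suppM{\baseB}{} \varphi$ for all bases $\baseB$ and apply this to $\baseC^{\ast} := \baseILL \cup \{\seq \flatILL{\gamma} : \gamma \in \Gamma\}$. Each $\gamma \in \Gamma$ satisfies $\deriveBaseM{\baseC^{\ast}}\flatILL{\gamma}$, hence $\suppM{\baseC^{\ast}}{}\gamma$ by the Reflection Lemma, so $\suppM{\baseC^{\ast}}{}\varphi$ follows by (Inf) (or directly when $\Gamma$ is empty); the Reflection Lemma then gives $\deriveBaseM{\baseC^{\ast}}\flatILL{\varphi}$, the deduction property converts this into $\flatILL{\Gamma} \deriveBaseM{\baseILL} \flatILL{\varphi}$, and the Simulation Lemma delivers $\Gamma \vdash_{\rm NJ} \varphi$. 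I expect the main obstacle to be the disjunction case of the Reflection Lemma: it is exactly here that the indirect, non-introduction-style $(\lor)$ clause of Figure~\ref{fig:B-eS-IPL} is indispensable, and one has to be careful that the quantification over all atoms $p$ in the $(\lor)$ and $(\bot)$ clauses meshes with the finiteness of base-derivations --- which is why $\baseILL$ must carry the elimination rules for every atom $p$.
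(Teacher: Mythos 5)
Your proposal is correct and follows essentially the same route as the paper's (sketched) argument: you build the same simulation base over the subformulas via a flattening injection, prove the behavioural-equivalence ("Reflection") lemma that the paper describes, and then unflatten base-derivations into $\rm NJ$-derivations exactly as the paper indicates. The details you supply for the disjunction case and for assembling the final step via the extended base $\baseC^{\ast}$ match Sandqvist's argument that the paper is summarising, so there is nothing to correct.
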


Since we start from the hypothesis that $\Gamma\suppM{\baseB}{}\varphi$ holds in all $\baseB$, we can therefore work in a tailor-made base, $\baseILL$, whose purpose will be to simulate the rules of $\rm NJ$ in a particular way. Since bases cannot contain rule schemas\footnote{By definition, but this so because a rule scheme over atoms would degenerate the way rules confer meaning onto atomic sentences.}, we must therefore carefully pick atoms to represent all of the subformulae of the elements of the set $\{\alpha\,|\,\alpha\in\Gamma\cup\varphi\}$. After doing so, we construct $\baseILL$ to have all the rules of $\rm NJ$ instantiated with our specially chosen atoms at all positions and show that each such atom is, in fact, inductively defined by atomised versions of the definitional clauses of the formulae being represented. It then follows that any $\deriveBaseM{\baseILL}$ proof of any one of our specially chosen atoms corresponds to an $\rm NJ$ proof of the formula it is representing, thus giving the completeness argument.

Intuitionistic linear logic (ILL)~\cite{Girard_LinearLogic_1987} is the intuitionistic fragment of Linear Logic~\cite{Girard_LinearLogic_1987,girard_LLSyntaxAndSemantics}, a logic introduced by J.-Y. Girard. Linear logic is characterised by its feature that the general usage of the structural rules is strictly forbidden; their use instead being assigned to two ``structural modalities'', both having structural rules on only one side of the sequent (and an S4-like Necessitation rule on the other). This restriction on the general use of structural rules leads to a splitting of the connectives of classical logic into additive and multiplicative parts. For example, the classical connective $\land$ becomes split into the linear connectives $\mand$ and $\aand$; the former governing multiplicative reasoning and the later additive. However, more importantly, since we cannot generally weaken or contract, the notion of reflexivity of the consequence relation of the logic is now sufficiently strict as to require that $\Gamma\vdash\varphi$ if and only if $\Gamma=\{\varphi\}$. Were $\Gamma$ to contain any other formulae, this would amount to some measure of weakening. As a result, each formula can only be used once in a proof, unless there are multiple instances of it as hypothesis. This means that formulae in a proof act rather like resources that are being produced/consumed and that proofs keep track of which resources are needed to obtain what. This informal interpretation of proofs in linear logic is what has been dubbed the ``resource interpretation" of linear logic and will play a key role in guiding our choices in the sequel. We, however, will be concerned with the intuitionistic fragment of linear logic, whose consequence relation can be understood, as normal, by taking the usual sequent calculus of linear logic~\cite{Girard_LinearLogic_1987} and restricting the succeedent to contain at most one formula. This restriction trivialises the additive disjunction, the modality which has its structural rules on the right and the unit of the additive disjunction. The resulting system, whilst losing the nice symmetry properties previously exhibited across all connectives, still retains the resource interpretation. It also now becomes easier to introduce a natural deduction system for linear logic as in~\cite{Benton_Bierman_dePaiva_TermCalc4ILL_1993,Bierman_OnILL_1994,Mints_NormalNatDedForILL_1998,Troelstra_NatDetForILL_1992}. 

The purpose of this paper will be to give a base-extension semantics for intuitionistic linear logic in a manner similar to that introduced by Sandqvist for IPL. Doing so, however, is not a trivial task. As noted, the consequence relation of ILL is much more ``sensitive'' to structurality. For example, whereas $p$ is a consequence of the multiset $\{p,q\}$ in IPL, the same does not hold true in ILL. Thus, it follows that any proposed support relation for ILL cannot validate intuitionistically valid sequents such as $(\{p,q\}:p)$. Therefore, the matter at hand is more complex than simply studying a larger set of connectives using the setup of Sandqvist for IPL. Indeed, we need to somehow redefine the support relation to prevent it from validating sequents of this form. To this end, we appeal to the semantics of Gheorghiu, Gu, and Pym~\cite{AlexTaoDavid_PtS4IMLL} on the base-extension semantics for the Intuitionistic Multiplicative fragment of Linear Logic (IMLL). In this paper, the authors successfully redefine the support relation to correctly handle the issues around structurality discussed above. However, this isn't the only modification they make. We summarise their semantics in Figure~\ref{fig:B-eS-IMLL} and note another key difference; their interpretation of atomic rules is subtly, but crucially, different to that of Sandqvist's for IPL. In the semantics for IMLL, we see that atomic rules, when interpreted by atomic derivability (that is, $\deriveBaseM{\baseB}$) for IMLL, requires that each ``branch''  of an atomic rule ``brings'' its own multiset of open atomic assumptions, with the conclusion of the rule following as a consequence of the multiset union of all the multisets of open assumptions. That is to say, the $(\text{App}_{\mathcal{R}})$ clause differs considerably and meaningfully between the semantics for IMLL and IPL.

\begin{figure}[ht]
\hrule\small
\vspace{2mm}
\[
\frac{\begin{array}{ccc}
        [P_1] &        & [P_n] \\
        q_1   & \ldots & q_n  
      \end{array}
}{r} \, \mathcal{R}
\qquad 
{\begin{array}{rl}
\mbox{(Ref)} & \mbox{$p \vdash_\baseB p$} \\ 
(\mbox{App}_\mathcal{R}) & \mbox{if $((P_1 \Rightarrow q_1) , \ldots , (P_n \Rightarrow q_n)) \Rightarrow r)$ and,} \\
    & \mbox{for all $i \in [1,n]$, $C_i , P_i \vdash_\baseB q_i$, 
    then $C_1,\dots,C_n \vdash_\baseB r$} 
\end{array}}
\]
\[{
\begin{array}{rl@{\quad}rl} 
\mbox{(At)} & \mbox{for atomic $p$, $\Vdash^L_\baseB p$ iff $L\vdash_\baseB p$} & 
    (\otimes) & \mbox{$\Vdash^L_\baseB \phi \otimes \psi$ iff, for all atomic $p$, all $\baseC \!\supseteq\! \baseB$ and} \\ 
    & & & \mbox{multiset of atoms $K$, if $\phi,\psi \Vdash^K_\baseC p$, then $\Vdash^{L,K}_\baseC p$}\\[2mm]
(\multimap) & \mbox{$\Vdash^L_\baseB \phi \multimap \psi$ iff $\phi \Vdash^L_\baseB \psi$} & (I) & 
    \mbox{$\Vdash^L_\baseB I$ iff, for all $\baseC \!\supseteq\! \baseB$, atomic $p$ and} \\
    & & & \mbox{multiset of atoms $K$, if $\Vdash_{\baseC}^Kp$ then $\Vdash_{\baseC}^{L,K}p$}\\[2mm]
    & & \mbox{(Inf)} & \mbox{for $\Gamma \neq \emptyset$, 
        $\Gamma \Vdash_\baseB \phi$ iff, for every $\baseC \supseteq \baseB$,}  \\ 
    & & & \mbox{if $\Vdash_\baseC \gamma$, for every $\gamma \in \Gamma$, then $\Vdash_\baseC \phi$}     
\end{array}}
\] 
\caption{Gheorghiu, Gu, and Pym's B-eS for IMLL}
\label{fig:B-eS-IMLL} 
\vspace{2mm}
\hrule
\end{figure} 

If one takes a look at the natural deduction system for IMLL, one sees that indeed, this is expected, for this is the interpretation we give to the rules. In fact, if we formally spell out this interpretation of derivability over the natural deduction system for IMLL, we obtain individual instances of something very similar to the ($\text{App}_{\mathcal{R}}$) clause in Figure~\ref{fig:B-eS-IMLL}, with the only differences being that we consider formulae and multisets thereof instead of just atoms. However, IMLL, being a purely multiplicative logic, has the obvious side effect that it doesn't account for additive or modal behaviours. This is the situation we find ourselves in, for ILL has rules which are multiplicative, additive, contains elements of both, and has rules for a modality. Thus, it is to be expected that any proof-theoretic semantics for ILL, in the style we are interested in, must also modify its notion of atomic rule and atomic derivability, as compared to that of IMLL, lest it forgo completeness. 

Thus, we are left with trying to develop a notion of atomic rule and application of such rules, that mimics that which we have in some natural deduction system for ILL. It is at this point that we are faced with a problem. What should the general form of an inference figure in a natural deduction system for ILL take? This is a serious problem which is, to the knowledge of the author, never adequately asked nor answered in the literature. In this paper, we additionally attempt to address this issue, though the solution is, unfortunately, somewhat convoluted.

Rules in natural deduction systems for ILL tend to be 
defined in a fairly ad-hoc manner. This is a completely acceptable approach to presenting such systems as, in reality, there are only a few number of rules in such systems and so, if one can provide a nice series of rule schemas, a few of which have side conditions to ensure that the rules are interpreted properly, then one can use these systems to study the proof-theoretic properties of such logics, taking care to ensure that the few side conditions are adhered to. These side conditions can be explicitly seen in systems such as those found in~\cite{Bierman_OnILL_1994,Negri_NormalisingNatDedForILL_2002,Mints_NormalNatDedForILL_1998}, to list a few. The main side condition we are talking of here is, of course, that of the interpretation of the contexts in the rules, particularly in the case of the rules governing the behaviour of the additive connectives. The side conditions tend to be justified by additionally giving a sequent-style presentation of the natural deduction figures, where multisets of open assumptions are baked into the notation using explicit schematic context metavariables. Indeed, there is nothing wrong with doing so and, in fact, we shall do something similar in Section~\ref{sec:ILL}. Nevertheless, we show that it is indeed possible to give a notion of derivability which handles additivity systematically, without side-conditions or appeal to explicit context metavariables, using a device which we call an \emph{additive box}; a device which really acts as a scope delimiter for contexts (or more formally, multisets of open assumptions)\footnote{In fact, by using additive boxes in the proof-theory, one can formalise a reinterpretation of the context metavariables as labels on branches instead of sets of arbitrary open assumptions. Under this reinterpretation, all branches labelled with the same context metavariable are considered additively. However, using additive boxes directly is more expressive than labelling branches as it allows for ``empty'' additive boxes, something the ``branch-first'' approach usually taken with working with inference rules in natural deduction cannot do, as you cannot give a branch label to a non-existent branch. This point will be made clearer in Section~\ref{sec:ILL}, as it turns out, the rules $\ern\abot$ and $\irn\top$ require such empty additive boxes.}.

From the point of view of developing a B-eS for ILL, since we would like to develop a notion of atomic derivability that is \emph{as similar as possible} to derivability in some natural deduction system for ILL, being able to eschew ourselves from using rules with context metavariables in the proof-theory is very desirable, as rules of inference in B-eS generally do not speak of sets of open assumptions. Doing so would effectively mean that certain inferences are ``context specific'', resigning us to the position that such inferences can only be made given that we have a particular multiset of open assumptions. Given that ILL has no ``context-specific'' rules of this kind, as the context metavariables are always arbitrary multisets of formulae, it would be rather odd were our B-eS for ILL to require such constructions. Whilst it is plausible that for some logics this is indeed necessary, as we show in this paper, Intuitionistic Linear Logic is not one such logic.

However, expressing additivity systematically is not the only problem we have with natural deduction systems for ILL. Indeed, such systems tend to have rules with variably many minor premisses; something not usually seen in natural deduction. Such rules seem to be degenerating infinitely many rules into a single inference figure. This leads us to ask the question: Is an instance of such a rule a substitution instance of formulae \emph{and} a number of premisses or just a substitution instance for formulae and any number of premisses? As we shall see in Section~\ref{sec:ILL} (and also, as discussed by Negri in~\cite{Negri_NormalisingNatDedForILL_2002}), the variable number of minor premisses is really just a heavy handed way of saying that we require any arbitrary multiset of assumptions to hold at the hypothesis line. It turns out that additive boxes once again come to the rescue here, using the notion of an empty additive box, in all cases except for that of the promotion rule (c.f. Figure~\ref{fig:NatDedILL}), where one has to define a special type of additive box that we call a modal box. The modal box internalises the additional side conditions that are required when treating the modality of ILL in natural deduction, at the cost of having to be treated as a separate type of assumption. Nevertheless, by abstracting these side conditions appropriately, using these ``boxes'', one obtains a clean and systematic representation for the inference figures of the natural deduction system for ILL, where the answer to the question of what the general form of an inference figure in a natural deduction system for ILL is easily answerable. Thus, in Section~\ref{sec:ILL}, we spend considerable time developing this new language. One unfortunate side effect of this effort, however, is that we will require a somewhat complexly defined derivability relation on ILL sequents, to tell us how to interpret the rules. Nevertheless, as alluded to previously, a positive result of this effort will be that we develop a definition atomic derivability that is very similar to that for ILL; similar enough in fact, that the proof of completeness of the semantics should not be too complicated, as initially desired.

Before moving on, it is worth noting that whilst Gheorghiu, Gu, and Pym have extended their result to give a B-eS for the logic of Bunched Implications~\cite{AlexTaoDavid_PtS4BI}, at present, there is very little work done on understanding modalities from a base-extension semantics (or even more generally from the point of view of proof-theoretic semantics), and even less to understanding additivity. Some work in this direction has been undertaken by Gheorghiu, Gu, and Pym, with the goal to understand the role of resource from the perspective of base-extension semantics~\cite{AlexTaoDavid_ResourceSemantics_2024, AlexTaoDavid_ResourceSemanticsNote_2024}. K\"{u}rbis, for example, notes in his paper~\cite{Kurbis_PtSNegAndModality_2015} some conditions in his view toward a theory of modalities from the point of view of proof-theoretic semantics. However the framework he considers is considerably different from that which we consider here. Nontheless, it is Eckhardt and Pym~\cite{TimoDavid_PtSModalS4,TimoDavid_PtSModalS5} who first developed a base-extension semantics for the classical modal logics K, KT, K4, S4 and S5 in a manner closer to Sandqvist's original work for classical logic, with Buzoku and Pym~\cite{BuzokuPymIMLs2025} having developed semantics for the intuitionistic modal logics defined by Simpson~\cite{simpson1994Thesis} in a style much closer to that which we are interested in here. Unfortunately, both of these approaches, in the opinion of the author, leave something to be desired. Whether it is due to the fact that modalities generally have poor proof-theoretic properties or simply that we are lacking some deep metaphysical insight into what it means for a word of a language to be modal, with this paper, it is hoped that we will be able to at least address the issues I feel are present in the approaches previously taken towards understanding modalities from the perspective of base-extension semantics. This will be done by taking a much more intrinsically proof-theoretic approach to understanding modality in the setting of Intuitionistic Linear Logic. Our paper thus goes as follows: We start by giving an overview and introducing a new syntax for natural deduction for Intuitionistic Linear Logic, in Section~\ref{sec:ILL}. We then introduce the key semantic structures, that of atomic derivability, in Section~\ref{sec:Basic-Derivability}, and the support relation, in Section~\ref{sec:BeS}, and proceed to prove the necessary structural properties of both. We then have the key results of this paper, that of the soundness and completeness results, in Sections~\ref{sec:Soundness} and~\ref{sec:Completeness} respectively, before finally finishing with an overview of the results contained herein in Section~\ref{sec:Conc}.

\section{Intuitionistic Linear Logic}\label{sec:ILL}
For the remainder of the paper we will assume a fixed set of propositional atoms $\At$ that we refer to interchangeably as atoms or basic sentences. Unless stated otherwise, lowercase latin letters will be used to refer to atoms with uppercase latin letters being used to refer to finite multisets thereof. Similarly, lowercase greek letters will be used to represent individual formulae of Intuitionistic Linear Logic with uppercase letters being used to represent finite multisets thereof. We suppress set theoretic notation in the usual way, with the caveat that we write the multiset union of two multisets $\Gamma$ and $\Delta$ as $\Gamma\msetsum\Delta$. That is to say, if $\Gamma = \{a,a,b\}$ and $\Delta = \{b,c,a\}$ then $\Gamma\msetsum\Delta = \{a,a,a,b,b,c\}$. Finally, throughout this paper, the term ``atomic multiset" is taken to mean multiset of propositional atoms.

\begin{definition}[Intuitionistic linear formulae]\label{def:ILL-formula}
    Formulae of \ILL{} are defined by the grammar: $\phi ::= p \in \At \mid \top \mid \abot \mid \mtop \mid \phi \mto \phi \mid \phi \mand \phi \mid \phi \aand \phi \mid \phi \aor \phi \mid \bang \phi$
\end{definition}

We additionally may write $\bang\Gamma$ to mean a multiset of ILL formulae where the top-level connective is $(\bang)$; that is to say, for all $\alpha\in\bang\Gamma$, there exists an ILL formula $\gamma$, such that $\alpha=\bang\gamma$.


\begin{definition}[Sequent]\label{def:ILL-sequent}
    An intuitionistic linear sequent (or just sequent) is an ordered pair $\langle \Gamma, \phi \rangle$ which we write as $(\Gamma : \phi)$, where $\Gamma$ is a (finite) multiset of \ILL{} formulae and $\phi$ is a single \ILL{} formula. For visual clarity, we may write $(\Gamma:\varphi)$ as $\Gamma\seq\varphi$.
\end{definition}

\begin{definition}[Intuitionistic linear derivability]\label{def:ILL-derivability}
Given the sequent $\Gamma\seq\varphi$, the relation of derivability, $\provesILL$, is defined inductively according to the schemas of Figure~\ref{fig:NatDedILL}. The resulting consequence relation is written as $\Gamma\provesILL\phi$.
\end{definition}

\begin{figure}[t]\captionsetup{justification=centering,font=small}
    \hrule \vspace{2mm}
    \small\[
    \begin{array}{c}
         \infer[Ax]{\varphi\provesILL\varphi}{}
    \end{array}
    \]
    \small\[{
    \begin{array}{c@{\quad}c} 
    \infer[\irn \mto]{\Gamma\provesILL\phi\mto\psi}{\Gamma\msetsum\phi\provesILL\psi} & \infer[\ern \mto]{\Gamma\msetsum\Delta\provesILL\psi}{\Gamma\provesILL\phi\mto\psi \quad \Delta\provesILL\phi} \\[3mm]
    \infer[\irn \mand]{\Gamma\msetsum\Delta\provesILL\phi\mand\psi}{\Gamma\provesILL\phi \quad \Delta\provesILL \psi} & \infer[\ern \mand]{\Gamma\msetsum\Delta\provesILL\chi}{\Gamma\provesILL\psi\mand\psi \quad \Delta\msetsum\phi\msetsum\psi\provesILL\chi}\\[3mm]
    \infer[\irn \mtop]{\provesILL\mtop}{} & \infer[\ern \mtop]{\Gamma\msetsum\Delta\provesILL\phi}{\Gamma\provesILL\mtop \quad \Delta\provesILL\phi}\\[3mm]
    \infer[\irn \aand]{\Gamma\provesILL\phi\aand\psi}{\Gamma\provesILL\phi \quad \Gamma\provesILL\psi} & \infer[\ern{\aand_1}]{\Gamma\provesILL\phi}{\Gamma\provesILL\phi\aand\psi} \quad \infer[\ern{\aand_2}]{\Gamma\provesILL\psi}{\Gamma\provesILL\phi\aand\psi}\\[3mm]
    \infer[\irn{\aor_1}]{\Gamma\provesILL\phi\aor\psi}{\Gamma\provesILL\phi} \quad \infer[\irn{\aor_2}]{\Gamma\provesILL\phi\aor\psi}{\Gamma\provesILL\psi}  & \infer[\ern \aor]{\Gamma\msetsum\Delta\provesILL\chi}{\Gamma\provesILL\phi\aor\psi \quad \Delta\msetsum\phi\provesILL\chi \quad \Delta\msetsum\psi\provesILL\chi}\\[3mm]
    \infer[\irn \top]{\Gamma_1\msetsum\dots\msetsum\Gamma_n\provesILL\top}{\Gamma_1\provesILL\phi_1\,\dots\,\Gamma_n\provesILL\phi_n} & \infer[\ern \abot]{\Gamma_1\msetsum\dots\msetsum\Gamma_n\msetsum\Delta\provesILL\chi}{\Gamma_1\provesILL\phi_1\,\dots\,\Gamma_n\provesILL\phi_n \quad \Delta\provesILL\abot}\\[3mm]
    \infer[\promotion]{\Gamma_1\msetsum\dots\msetsum\Gamma_n\provesILL\bang\phi}{\Gamma_1\provesILL\bang\psi_1\,\dots\,\Gamma_n\provesILL\bang\psi_n \quad \bang\psi_1\msetsum\dots\msetsum\bang\psi_n\provesILL\phi} & \infer[\dereliction]{\Gamma\msetsum\Delta\provesILL\psi}{\Gamma\provesILL\bang\phi \quad \Delta\msetsum\phi\provesILL\psi}\\[3mm]
    \infer[\weakening]{\Gamma\msetsum\Delta\provesILL\psi}{\Gamma\provesILL\bang\phi \quad \Delta\provesILL\psi} & \infer[\contraction]{\Gamma\msetsum\Delta\provesILL\psi}{\Gamma\provesILL\bang\phi \quad \Delta\msetsum\bang\phi\msetsum\bang\phi\provesILL\psi}\\[3mm]
    \end{array}}
    \] 
    \vspace{-1mm}
    
    \caption{The natural deduction system $\rm N_{ILL}$ for Intuitionistic Linear Logic in sequent style. The rules $\promotion$, $\irn \top$ and $\ern \abot$ hold for all $n\geq 0$.} \vspace{2mm}
    \hrule
    \label{fig:NatDedILL}
\end{figure}

As described, Figure~\ref{fig:NatDedILL} presents the natural deduction system $\rm N_{ILL}$ in sequent style. A natural question to ask is whether it is possible to do so in a more traditional, Gentzen-Prawitz tree style? This issue is well covered in~\cite{Bierman_OnILL_1994}, but, as mentioned in the introduction, we wish to go further. If we consider the multiplicative fragment of \ILL{}, then it is clear that we can always just consider all branches of an inference figure to be multiplicative with respect to each other (that is, that they have disjoint contexts) and so we naturally obtain such a calculus. Thus, an inference figure such as 

\[
\begin{array}{ccc}
\infer[\irn \mand]{\Gamma\msetsum\Delta\provesILL\phi\mand\psi}{\Gamma\provesILL\phi & \Delta\provesILL \psi} & \text{ becomes } & \infer[\irn \mand]{\phi\mand\psi}{\phi & \psi}\\[3mm]  
\end{array}
\]

Such a system is used by the authors of~\cite{AlexTaoDavid_PtS4IMLL} to give their base-extension semantics for the multiplicative fragment of \ILL{}. If we include $(\bang)$ to this fragment, to get the mutliplicative-exponential fragment, then we need to introduce the notion of strict derivations to correctly encode the Promotion rule. This is because in the rule $\promotion$, there is the requirement that $\bang\psi_1\msetsum\dots\bang\psi_n\provesILL\varphi$ occurs \emph{without} any multiset of open assumptions, for all $n\geq 0$. We show this requirement using semantic brackets to indicate that applying the rule requires a proof of $\varphi$ from the discharge set $\bang\psi_1\msetsum\dots\bang\psi_n$. 
Thus, the tree-like inference figure for promotion becomes 
\[
\infer[\promotion]{\bang\phi}{\bang\psi_1\,\dots\,\bang\psi_n & \deduce{\phi}{\llbracket \bang\psi_1\msetsum\dots\msetsum\bang\psi_n\rrbracket}}
\]

A key point of note with regards to this rule however, is the presence of the $n$ subscript. This $n$ specifically requires that we are talking about \emph{any} and \emph{all} multisets of formulae prefixed with a ($\bang$) concurrently. An alternative, and more honest, way of writing this rule would be 

\[
\infer[\promotion]{\bang\phi}{\bang\Gamma & \deduce{\phi}{\llbracket \bang\Gamma\rrbracket}}
\]

Thus becomes apparent the real meaning of this rule; that $\varphi$ must follow from a context of formulae with $(\bang)$ as a top-level connective. This is nothing new, having been well investigated in~\cite{Benton_Bierman_dePaiva_TermCalc4ILL_1993,Bierman_OnILL_1994}, but in short, the schematic setting that we are usually in when considering inferences in Natural Deduction systems means that this sort of rule is not problematic, since we can range $\Gamma$ over all possible multisets and prefix each element of the context with a $(\bang)$. Furthermore, we intuitively understand what it means to assert $\bang\Gamma$ at the hypothesis line, but formally speaking, this would be inappropriate, thus leading to our initial characterisation of this rule. However, as discussed in the introduction, the inference figures we will be concerned with in the semantics are \emph{not} schematic in nature and that can therefore only contain \emph{known} propositional atoms; atoms which cannot have a $(\bang)$ as a top level connective, as doing so would mean the are no longer atoms! Furthermore, they cannot have a \emph{variable} number of premisses as it would no longer be a rule instance if they did. But individual instantiations of the promotion rule for any finite $n$ are insufficient, since that would mean that you only consider contexts of certain sizes; something that clearly isn't the case in the $\promotion$ rule. These restrictions in the semantics mean that, when defining atomic derivability, our usual way of expressing the promotion rule is insufficient. We therefore need a different characterisation of the promotion rule. To this end, we redefine the meaning of the semantic bracket, and write the promotion rule now as follows:

\[
\infer[\promotion]{\bang\phi}{\llbracket \varphi\rrbracket}
\]

We call the semantic bracket $\llbracket\cdot\rrbracket$ here a modal box. This rule is to be operationally interpreted as saying:
\begin{enumerate}
    \item If there is a derivation of $\varphi$ from some, possibly empty, multiset of formulae $\{\alpha_1,\dots,\alpha_n\}$ i.e. $\alpha_1\msetsum\dots\msetsum\alpha_n\provesILL\varphi$
    \item Each $\alpha_i$ is a formula which is the conclusion of some instance of a rule with only modal boxes above the inference line
    \item We have a derivation for each $\alpha_i$, i.e. $\Gamma_i\provesILL\alpha_i$.
\end{enumerate}
\vspace{2mm}
then it follows that $\Gamma_1\msetsum\dots\msetsum\Gamma_n\provesILL\bang\varphi$. That is to say, the behavioural reading of the promotion rule is exactly as it was before, including the strict derivation (for a short discussion on why this is necessary, c.f.~\cite{Bierman_OnILL_1994}), since the only rule which satisfies condition $(2)$ is $\promotion$ and thus the only formulae which $\alpha_i$ can be are formulae with $(\bang)$ as a top-level connective. However, note that at no point did we define the multiset of formulae $\{\alpha_1,\dots,\alpha_n\}$ as being a multiset of formulae with $(\bang)$ as a top-level connective. This point is crucial, as we now have a characterisation of the $\promotion$ rule that is not defined in terms of any connectives but in terms of some structural property of the rule itself. Note, that the operational reading of the rule is exactly the same as before, the only thing that has changed is the way we express this operation, as we shall see in Theorem~\ref{thm:ILL-derivability-equivalence} below.

So what of the additives? In this case, the author of~\cite{Bierman_OnILL_1994} shows that to consider the additives, the language we have developed so far gets us close. For example, we may represent the additive conjunction as 
\[
\infer[\irn \aand]{\varphi\aand\psi}{\chi_1\dots\chi_n & \deduce{\varphi}{\llbracket\chi_1\msetsum\dots\msetsum\chi_n\rrbracket} & \deduce{\psi}{\llbracket\chi_1\msetsum\dots\msetsum\chi_n\rrbracket}}
\]
where the semantic brackets here mean that we discharge all assumptions at once (as in~\cite{Bierman_OnILL_1994}) and that there are no other open assumptions on that branch, as in the original formulation of the $\promotion$ rule. Our interpretation of this rule is that that we discharge both contexts $\chi_1\msetsum\dots\msetsum\chi_n$ and re-introduce it but once. Whilst technically such a presentation is fine, it is at odds with our natural conception of additivity. 
Similarly, we can write this rule more honestly as 
\[
\infer[\irn \aand]{\varphi\aand\psi}{\Gamma & \deduce{\varphi}{\llbracket\Gamma\rrbracket} & \deduce{\psi}{\llbracket\Gamma\rrbracket}}
\]
We won't go into details again but as in the case of the Promotion rule, we take problem with rules of this form as they require that we consider arbitrary contexts at the hypothesis line\footnote{Though this point will not be explored further in this paper, the fact that additive rules can be expressed so similarly to the promotion rule is, in the opinion of the author, a strong basis for the argument that additivity is somewhat of a ``modal'' concept.}.
Indeed, the author of~\cite{Bierman_OnILL_1994} instead opts to extend the proof-theory to include the concept of additive contexts. With additive contexts, the author is then able to represent inference rules which require context sharing derivations. They do so, effectively, by labelling each branch with a unique meta-variable which explicitly represents the context multiset used by those derivations. In this scheme, the $\irn\aand$ rule becomes
\[
\infer[\irn\aand]{\varphi\aand\psi}{\deduce{\varphi}{\Gamma} & \deduce{\psi}{\Gamma}}
\]

It is implicitly understood that the $\Gamma$ in both branches is the same $\Gamma$ \footnote{Some authors label these $\Gamma$'s to make this point explicit.} and that this is a context multiset. This notation, whilst clearly functional, is somewhat undesirable for it requires us to assign meta-variables to represent contexts when discussing inferences. Therein lies the problem; the rule requires that we talk about shared contexts in terms of derivations from arbitrary contexts $\Gamma$. The problem here is the requirement of the $\Gamma$ to represent an arbitrary context\footnote{This is different to the issue of discharging, for there we are justified in having discharge be a part of the rule for we specify precisely what we discharge. Here, we have an arbitrary (multi)set of assumptions required to correctly represent the structure of the rule.}. In a derivation, such $\Gamma$'s naturally accumulate and may be consumed through discharge. However, the rule uses these contexts effectively as labels. By doing so, we end up drawing vertical sequents and blur the lines between a pure rule of inference and its application. Since our rules are schematic in nature, this in itself isn't problematic. However, as we shall see in Section~\ref{sec:Basic-Derivability} and beyond, if one were to consider a non-schematic system, this distinction becomes important. Of course, we could simply ignore this issue and use the initial method of encoding additivity. However, the problems initially identified remain. Thus, for the remainder of the paper, we sahll use the system shown in Figure~\ref{fig:NatDedILL2} to better represent the rule schemas of $\rm N_{ILL}$. That isn't to say that this presentation of the rules constitutes a new calculus; this is simply a new way of systematically and uniformally representing the rules of the natural deduction system $\rm N_{ILL}$.

In this presentation, no additional meta-variables are used to represent additive contexts. Instead, derivations are marked as being within ``shared" contexts (demarcated by curly brackets), which we call additive boxes. This is what corresponds to an additive context in~\cite{Bierman_OnILL_1994}. All derivations within an additive box must share the same multiset of open assumptions (i.e. must be ``additive" with respect to each other) and, in the context of a rule, only one copy of the open assumptions from each additive box is understood to be necessary to obtain a derivation of the conclusion of a rule. If a rule has multiple additive boxes, then each additive box is understood to have disjoint contexts from every other additive box (i.e. the additive boxes are multiplicative with respect to each other). Thus, this syntaxt extends the tree like representation used for the purely multiplicative fragment of \ILL{}, without the need for complex interpretation as in the case of the modal box. For example, the rule 
\[
\begin{array}{ccc}
    \infer[\irn \mand]{\varphi\mand\psi}{\varphi&\psi} 
    & \text{ becomes } &
    \infer[\irn \mand]{\varphi\mand\psi}{\openaddrule\varphi\closeaddrule & \openaddrule\psi\closeaddrule}
\end{array}
\]

To see how these brackets represent additivity, let us now consider some rules governing additive connectives. The case of $\irn \aand$ for example gives that 
\[
\begin{array}{ccc}
    \infer[\irn \aand]{\varphi\aand\psi}{\deduce{\varphi}{\Gamma} & \deduce{\psi}{\Gamma} } 
    & \text{ becomes } &
    \infer[\irn \aand]{\varphi\aand\psi}{\openaddrule\varphi & \psi\closeaddrule}
\end{array}
\]

This notation is very flexible as it allows us to represent even $\ern \aor$ very naturally as
\[
    \infer[\ern \aor]{\chi}{\openaddrule\phi\aor\psi\closeaddrule & \left\{\raisebox{-0.5em}{ \deduce{\chi}{[\phi]} \quad \deduce{\chi}{[\psi]}} \right\}}
\]
We now give a formal definition of our characterisation of natural deduction for ILL using this new notation. Note that henceforth, the semantic brackets will be used to refer to modal boxes \emph{only}. An important point to note is that, since this notation is nothing more than formalisation of the presentation of the system presented in Figure~\ref{fig:NatDedILL}, the system we now formally define below continues to enjoy the exact same meta-logical properties of the system presented in Figure~\ref{fig:NatDedILL}, that is, the system observes the subformula property and is strongly normalising (as shown in~\cite{Bierman_OnILL_1994}); a fact that is a consequence of Theorem~\ref{thm:ILL-derivability-equivalence}. 

\begin{figure}
[t]\captionsetup{justification=centering,font=small}
    \hrule \vspace{2mm}
    \small\[{
    \begin{array}{c@{\quad}c} 
    \infer[\irn \mto]{\phi\mto\psi}{\left\openaddrule\raisebox{-0.5em}{\deduce{\psi}{[\phi]}}\right\closeaddrule} & \infer[\ern \mto]{\psi}{\openaddrule\phi\mto\psi\closeaddrule & \openaddrule\phi\closeaddrule} \\[3mm]
    \infer[\irn \mand]{\phi\mand\psi}{\openaddrule\phi\closeaddrule & \openaddrule\psi\closeaddrule} & \infer[\ern \mand]{\chi}{\openaddrule\phi\mand\psi\closeaddrule & \left\openaddrule\raisebox{-0.5em}{\deduce{\chi}{[\phi\msetsum\psi]}}\right\closeaddrule}\\[3mm]
    \infer[\irn \mtop]{\mtop}{} & \infer[\ern \mtop]{\phi}{\openaddrule\mtop\closeaddrule & \openaddrule\phi\closeaddrule}\\[3mm]
    \infer[\irn \aand]{\phi\aand\psi}{\openaddrule \phi & \psi \closeaddrule} & \infer[\ern{\aand_1}]{\phi}{\openaddrule\phi\aand\psi\closeaddrule} \quad \infer[\ern{\aand_2}]{\psi}{\openaddrule\phi\aand\psi\closeaddrule}\\[3mm]
    \infer[\irn{\aor_1}]{\phi\aor\psi}{\openaddrule\phi\closeaddrule} \quad \infer[\irn{\aor_2}]{\phi\aor\psi}{\openaddrule\psi\closeaddrule}  & \infer[\ern \aor]{\chi}{\openaddrule\phi\aor\psi\closeaddrule & \left\{\raisebox{-0.5em}{ \deduce{\chi}{[\phi]} \quad \deduce{\chi}{[\psi]}} \right\}}\\[3mm]
    \infer[\irn \top]{\top}{\openaddrule\emptymultiset\closeaddrule} & \infer[\ern \abot]{\chi}{\openaddrule\emptymultiset\closeaddrule & \openaddrule\abot\closeaddrule}\\[3mm]
    \infer[\promotion]{\bang\phi}{\llbracket\varphi\rrbracket} & \infer[\dereliction]{\psi}{\openaddrule\bang\phi\closeaddrule & \left\openaddrule\raisebox{-0.5em}{\deduce{\psi}{[\phi]}}\right\closeaddrule}\\[3mm]
    \infer[\weakening]{\psi}{\openaddrule\bang\phi\closeaddrule & \openaddrule\psi\closeaddrule} & \infer[\contraction]{\psi}{\openaddrule\bang\phi\closeaddrule & \left\openaddrule\raisebox{-0.5em}{\deduce{\psi}{[\bang\phi\msetsum\bang\phi]}}\right\closeaddrule}\\[3mm]
    \end{array}}
    \] 
    \vspace{-1mm}
    
    \caption{An alternative representation of the natural deduction system $\rm N_{ILL}$ for Intuitionistic Linear Logic in tree style.}\vspace{2mm}
    \hrule
    \label{fig:NatDedILL2}
\end{figure}

\begin{definition}[Additive box]
    An additive box is a (possibly empty) multiset of sequents.
\end{definition}

\begin{definition}[Rule schema]
    A rule schema $\mathcal{R}$ is an ordered triple $\langle\mathbf{A},\mathbf{S},\varphi\rangle$ where $\mathbf{A}$ is a (possibly empty) multiset of additive boxes, $\mathbf{S}$ is a single additive box, called a modal box, and $\varphi$ is a formula. All formulae in an rule schema are interpreted as schemas.
\end{definition}

We now match the individual figures of Figure~\ref{fig:NatDedILL2} with their rule schemas. :
\begin{itemize}
    \item $\irn\mto$ is written as $\langle\openaddrule\varphi\seq\psi\closeaddrule,\emptymultiset,\varphi\mto\psi\rangle$
    \item $\ern\mto$ is written as $\langle\openaddrule\seq\varphi\mto\psi\closeaddrule\msetsum\openaddrule\seq\varphi\closeaddrule,\emptymultiset,\psi\rangle$
    \item $\irn\mand$ is written as $\langle\openaddrule\seq\varphi\closeaddrule\msetsum\openaddrule\seq\psi\closeaddrule,\emptymultiset,\varphi\mand\psi\rangle$
    \item $\ern\mand$ is written as $\langle\openaddrule\seq\varphi\mand\psi\closeaddrule\msetsum\openaddrule\varphi\msetsum\psi\seq\chi\closeaddrule,\emptymultiset,\chi\rangle$
    \item $\irn\mtop$ is written as $\langle\emptymultiset,\emptymultiset,\mtop\rangle$
    \item $\ern\mtop$ is written as $\langle\openaddrule\seq\mtop\closeaddrule\msetsum\openaddrule\seq\chi\closeaddrule,\emptymultiset,\chi\rangle$
    \item $\irn\aand$ is written as $\langle\openaddrule\seq\varphi\msetsum\,\seq\psi\closeaddrule,\emptymultiset,\varphi\aand\psi\rangle$
    \item $\ern\aand$ is written as $\langle\openaddrule\seq\varphi\aand\psi\closeaddrule,\emptymultiset,\varphi\rangle$ and $\langle\openaddrule\seq\varphi\aand\psi\closeaddrule,\emptymultiset,\psi\rangle$ 
    \item Both $\irn\aor$ rules are written as $\langle\openaddrule\seq\varphi\closeaddrule,\emptymultiset,\varphi\aor\psi\rangle$ and $\langle\openaddrule\seq\psi\closeaddrule,\emptymultiset,\varphi\aor\psi\rangle$
    \item $\ern\aor$ is written as $\langle\openaddrule\seq\varphi\aor\psi\closeaddrule\msetsum\openaddrule\varphi\seq\chi\msetsum\psi\seq\chi\closeaddrule,\emptymultiset,\chi\rangle$
    \item $\irn\top$ is written as $\langle\openaddrule\emptymultiset\closeaddrule,\emptymultiset,\top\rangle$, for all $n\geq0$
    \item $\ern\abot$ is written as $\langle\openaddrule\emptymultiset\closeaddrule\msetsum\openaddrule\seq\abot\closeaddrule,\emptymultiset,\chi\rangle$, for all $n\geq0$
    \item $\promotion$ is written as $\langle\emptymultiset,\seq\varphi,\bang\varphi\rangle$
    \item $\dereliction$ is written as $\langle\openaddrule\seq\bang\varphi\closeaddrule\msetsum\openaddrule\varphi\seq\psi\closeaddrule,\emptymultiset,\psi\rangle$
    \item $\weakening$ is written as $\langle\openaddrule\seq\bang\varphi\closeaddrule\msetsum\openaddrule\seq\psi\closeaddrule,\emptymultiset,\psi\rangle$
    \item $\contraction$ is written as $\langle\openaddrule\seq\bang\varphi\closeaddrule\msetsum\openaddrule\bang\varphi\msetsum\bang\varphi\seq\psi\closeaddrule,\emptymultiset,\psi\rangle$
\end{itemize}

We call the set of these rule schemas, $\mathfrak{N}$.

\begin{definition}[Alternative intuitionistic linear derivability]\label{def:ILL-derivability-alt}
    We define a relation of derivability $\provesILL^*$ on sequents of ILL inductively according to the following two clauses:
    \begin{description}
        \item[Ref] $\varphi\provesILL^*\varphi$, for any $\varphi$.
        \item[App] Given $\langle\mathbf{A},\mathbf{S},\varphi\rangle\in\mathfrak{N}$ where $|\mathbf{A}|=m$, a (possibly empty) multiset $\bang\Delta$ where $|\bang\Delta| = k$, and $n=m+k$ (possibly empty) multisets of ILL formulae, $\Gamma_i$, such that:
        \begin{itemize}
            \item For all $\mathbf{T}_i\in\mathbf{A}$ and each $\Psi\seq\psi\in\mathbf{T}_i$ we have that $\Gamma_i\msetsum\Psi\provesILL^*\psi$, for any instantiation of $\psi$ and $\Psi$, 
            \item For each $\bang\delta_i\in \bang\Delta$ we have $\Gamma_{m+i}\provesILL^*\bang\delta_i$,
            \item For all $\Theta\seq\theta\in\mathbf{S}$ we have that $\bang\Delta\msetsum\Theta\provesILL^*\theta$, for any instantiation of $\theta$ and $\Theta$.
        \end{itemize}
         Then, $\Gamma_1\msetsum\dots\msetsum\Gamma_n\provesILL^*\varphi$.
    \end{description}
\end{definition}
It is important to note the presence of the empty additive boxes in the rules $\irn\top$ and $\ern\abot$, something seemingly completely new and very different from the previous presentation of a natural deduction system for \ILL{} in Definition~\ref{def:ILL-derivability}. In the presence of the (App) clause, we observe that these empty additive boxes are nothing more than a formalisation of the requirement that we have in Definition~\ref{def:ILL-derivability}, that the conclusion of $\irn\top$ and $\ern\abot$ follows from arbitrary many formulae in disjoint contexts. In other words, we see that this means nothing more than having an arbitrary multiset at the hypothesis line, something we have previously discussed as undesirable for our requirements, but something that exists in the literature, as in the case of the natural deduction figures of Negri for these rules in~\cite{Negri_NormalisingNatDedForILL_2002}. By having empty additive boxes instead, the rules now express that there are no conditions on what needs to hold for the conclusion of the rule to hold in the presence of an arbitrary multiset of formulae, without specifying the multiset itself. This, in the opinion of the author, is much neater than the presentation of Definition~\ref{def:ILL-derivability} and removes the question of what, in fact, is an instance of the rules $\irn\top$ and $\ern\abot$. This also makes abundantly clear an important point: The context in which the conclusion of a rule follows from in \ILL{} is connected to the additive boxes above the hypothesis line in the rule and \emph{not} the branches themselves. As previously mentioned, it is really important to note that this notion of derivability is still nothing more than what was available before.

\begin{theorem}\label{thm:ILL-derivability-equivalence}
    Let $\Gamma\seq\varphi$ be a sequent. Then it holds that $\Gamma\provesILL\varphi$ if and only if $\Gamma\provesILL^*\varphi$.
\end{theorem}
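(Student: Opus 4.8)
\section*{Proof proposal}

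The plan is to prove the two implications separately, each by a straightforward structural induction on derivations, exploiting the observation that the inference schemas collected in $\mathfrak{N}$ are, family for family, precisely the rules of Figure~\ref{fig:NatDedILL}, merely repackaged so that the contexts written explicitly in the sequent-style rules are reconstructed by the ambient metavariables $\Gamma_i$ of the clause (App). The dictionary that makes this work is: a premise of a sequent-style rule carrying context $\Gamma$ and discharging a multiset $\Psi$ corresponds to a sequent $\Psi\seq\psi$ sitting inside an additive box, against which (App) demands $\Gamma_i\msetsum\Psi\provesILL^*\psi$; several premises that are forced to share a context --- which happens exactly for $\irn\aand$ (both premises on $\Gamma$) and for the two minor premises of $\ern\aor$ (both on $\Delta$) --- correspond to several sequents packed into \emph{one} additive box, so that they share the single metavariable $\Gamma_i$; premises with independent contexts correspond to \emph{distinct} singleton additive boxes, whose contexts are then combined multiplicatively by $\msetsum$ in the conclusion of (App). Once this dictionary is stated precisely the argument is essentially a finite case check.

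For $\Gamma\provesILL\varphi\Rightarrow\Gamma\provesILL^*\varphi$ I would induct on the height of the $\rm N_{ILL}$-derivation. The base case $\rn{Ax}$ is discharged by (Ref). For the inductive step let $R$ be the last rule applied; by the induction hypothesis every premise of $R$ holds for $\provesILL^*$, and reading off from those premises the assignment of the metavariables $\Gamma_i$ --- and, in the $n$-ary families $\irn\top$, $\ern\abot$, $\promotion$, the value of $n$ --- one applies (App) to the schema of $\mathfrak{N}$ matching $R$ and lands on exactly $\Gamma\provesILL^*\varphi$. The case deserving a second look is $\promotion_n$: its premises are $\Gamma_i\provesILL\bang\psi_i$ for $1\le i\le n$ together with $\bang\psi_1\msetsum\dots\msetsum\bang\psi_n\provesILL\varphi$, and these are precisely what the dedicated $\bang\Delta$-mechanism of (App) requires of the schema $\langle\emptymultiset,[\seq\varphi],\bang\varphi\rangle$ when we take $m=0$, $\bang\Delta=\{\bang\psi_1,\dots,\bang\psi_n\}$ and $\delta_i=\psi_i$, so that the conclusion of (App) is $\Gamma_1\msetsum\dots\msetsum\Gamma_n\provesILL^*\bang\varphi$.

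For the converse $\Gamma\provesILL^*\varphi\Rightarrow\Gamma\provesILL\varphi$ I would induct on the $\provesILL^*$-derivation. A use of (Ref) is handled by $\rn{Ax}$. A use of (App) on a schema $\langle\mathbf{A},\mathbf{S},\varphi\rangle$ comes equipped with derivations $\Gamma_i\msetsum\Psi\provesILL^*\psi$ for every sequent $\Psi\seq\psi$ in every box $\mathbf{T}_i\in\mathbf{A}$ and, when $\mathbf{S}\neq\emptymultiset$ (which occurs only for $\promotion$), also with $\Gamma_{m+i}\provesILL^*\bang\delta_{m+i}$ and $\bang\Delta\msetsum\Theta\provesILL^*\theta$ for $\Theta\seq\theta\in\mathbf{S}$. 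Applying the induction hypothesis converts each of these into an $\rm N_{ILL}$-derivation, whereupon the rule of Figure~\ref{fig:NatDedILL} corresponding to that schema applies verbatim to yield $\Gamma_1\msetsum\dots\msetsum\Gamma_n\provesILL\varphi$. As there are only finitely many schema families in $\mathfrak{N}$ and each is matched by exactly one rule family of $\rm N_{ILL}$, this too is a routine enumeration.

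The only obstacle I anticipate is notational rather than mathematical: one must set up the dictionary above carefully enough that the genuinely additive rules are visibly captured by multi-sequent additive boxes while the multiplicative rules are captured by several singleton boxes, and one must verify that the separate treatment of the strict premise in (App) --- the $\mathbf{S}$ component together with the auxiliary $\bang$-prefixed multiset $\bang\Delta$ --- reconstitutes $\promotion_n$ for the correct value of $n$, namely $n$ equal to the cardinality of $\bang\Delta$. The remaining cases (the units $\mtop$ and $\top$, the $n$-ary $\irn\top$ and $\ern\abot$, and $\dereliction$, $\weakening$, $\contraction$) follow the same template without incident, so the bulk of the write-up is simply tabulating the schema/rule correspondence.
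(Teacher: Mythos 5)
Your proposal is correct and follows essentially the same route as the paper: a structural induction on derivations in each direction, with the schema-to-rule dictionary doing the bookkeeping and the $\promotion_n$ case (instantiating (App) with $m=0$ and $\bang\Delta=\{\bang\psi_1,\dots,\bang\psi_n\}$) singled out as the only one needing care. The paper likewise writes out only the promotion case in both directions and declares the remaining cases analogous.
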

\begin{proof}
    We show this by induction over the structure of proofs. We give two examples; one where if $\Gamma\provesILL\varphi$ holds by promotion, then $\Gamma\provesILL^*\varphi$ and vice-versa, and the other where if $\Gamma\provesILL\varphi$ holds by $\irn\top$ then $\Gamma\provesILL^*\varphi$ and vice-versa. All other cases follow similarly.
    Starting from the case of promotion.
    \begin{itemize}
        \item Going left to right, we suppose $\Gamma\provesILL\varphi$ by $\promotion$. As a result, $\varphi = \bang\alpha$, and for some $n\geq0$ we have that $\Gamma=\Gamma_1\msetsum\dots\msetsum\Gamma_n$ and $\Gamma_1\provesILL\bang\psi_1$ and $\dots$ and $\Gamma_n\provesILL\bang\psi_n$ and that $\bang\psi_1\msetsum\dots\msetsum\bang\psi_n\provesILL\alpha$. By the inductive hypothesis, we have that $\Gamma_1\provesILL^*\bang\psi_1$ and $\dots$ and $\Gamma_n\provesILL^*\bang\psi_n$ and that $\bang\psi_1\msetsum\dots\msetsum\bang\psi_n\provesILL^*\alpha$. Thus, we can use the (App) clause to derive $\Gamma_1\msetsum\dots\msetsum\Gamma_n\provesILL^*\bang\alpha$, as required.
        \item Going right to left, we suppose $\Gamma\provesILL^*\varphi$ holds by (App) using the $\promotion$ rule. Recall that $\promotion$ says that $\langle\emptymultiset,[\seq\alpha],\bang\alpha\rangle$. Thus, $\varphi=\bang\alpha$ and we have, for some $n\geq 0$, a partition of $\Gamma=\Gamma_1\msetsum\dots\msetsum\Gamma_n$ and a (possibly empty) multiset $\bang\Psi =\{\bang\psi_1,\dots,\bang\psi_n\}$ such that $\Gamma_i\provesILL^*\bang\psi_i$ and such that $\bang\psi_1,\dots,\bang\psi_n\provesILL^*\alpha$. By the inductive hypothesis, we therefore have that $\Gamma_i\provesILL\bang\psi_i$ and such that $\bang\psi_1,\dots,\bang\psi_n\provesILL\alpha$. Thus, we apply $\promotion$ to obtain $\Gamma_1\msetsum\dots\msetsum\Gamma_n\provesILL\bang\alpha$, as required.
    \end{itemize}
    Now let us consider the case of $\irn\top$.
    \begin{itemize}
        \item Going left to right, we have that $\Gamma\provesILL\varphi$ by $\irn\top$. Thus, $\varphi=\top$ and for some arbitrary $n\geq0$, we have that $\Gamma_i\provesILL\psi_i$ for $i\in\{0,n\}$ and for some arbitrary formulae $\psi_1,\dots,\psi_n$ and partition of $\Gamma$ into $n$ multisets $\Gamma_1,\dots,\Gamma_n$. We are left to show that $\Gamma\provesILL^*\varphi$. By the (App) clause with the rule $\irn\top$, it follows that $\top$ follows from any arbitrary multiset of formulae vacuously. Thus $\Gamma\provesILL^*\top$ holds, as required.
        \item Going right to left, we have that $\Gamma\provesILL^*\top$ holds by $\irn\top$. We want to show that $\Gamma\provesILL\top$. Let $\Gamma=\{\gamma_1,\dots,\gamma_n\}$. Since $\gamma_i\provesILL\gamma_i$, we thust have, by $\irn\top$, that $\gamma_1\msetsum\dots\msetsum\gamma_n\provesILL\top$, i.e. $\Gamma\provesILL\top$, as required.
        
    \end{itemize}
    
\end{proof}

Thus, we see that what we have introduced with Figure~\ref{fig:NatDedILL2}, under the interpretation of Definition~\ref{def:ILL-derivability-alt}, is really nothing more than a more rigourous and uniform representation of the rules of $\rm N_{ILL}$, under the interpretation of Definition~\ref{def:ILL-derivability}. The underlying notion of derivability in the two systems remains exactly the same. However, the new notation allows us to write our inference figures in such a way that they are totally abstracted from the derivations in which they will be used, something which will be important in the remainder of this paper. As a result, for the remainder of this paper, when talking of individual inference rules (or rule schemes in the context of $\rm N_{ILL}$) we will use our Gentzen-Prawitz tree notation extended with additive and modal boxes as in Figure~\ref{fig:NatDedILL2} and stick to using sequent style inference figures to represent actual rule applications and derivations as a whole. 
\par
Before moving on, we make clear the point that, henceforth, when we write $\provesILL$, we mean the derivability relation that has hitherto been written as $\provesILL^*$. We do so, as the two notions of derivability discussed in this section are, as a consequence of Theorem~\ref{thm:ILL-derivability-equivalence}, equivalent. As a result of this, we choose the new derivability relation to be the ``standard'' for the rest of the paper, for the reasons mentioned previously but also, as the notion of atomic derivability we will introduce in the next section will be, purposefully, very similar to Definition~\ref{def:ILL-derivability-alt}. As mentioned previously, a consequence of this is that completeness will be much easier to prove.

\section{Substructural Basic Derivability}\label{sec:Basic-Derivability}
We now begin our discussion of the semantics we will develop in this paper for \ILL{}. We start by introducing the notions of an atomic rule and basic derivability which will form the core of our semantic theory.

\begin{defn}[Atomic sequent]
    An atomic sequent is an ordered pair $\langle P, p \rangle$. This ordered pair is conventionally written as $P\seq p$.
\end{defn}

\begin{defn}[Atomic additive box]
    An atomic additive box (or just additive box when the context is clear) is a (possibly empty) multiset of atomic sequents. 
\end{defn}
We use the notation $\openaddrule P\seq p, Q\seq q \closeaddrule$ to mean the atomic additive box with the two atomic sequents $P\seq p$ and $Q\seq q$. The length of an  atomic additive box is understood to mean the number of elements it contains. We conventionally write this as $l$, or possibly $l_{\mathbf{S}}$, if the atomic additive box is called $\mathbf{S}$. 

\begin{defn}[Atomic rule]\label{def:atomic-rule}
    An atomic rule is an ordered triple $\langle \mathbf{A}, \mathbf{S}, p\rangle$ where $\mathbf{A}$ is a (possibly empty) multiset of atomix additive boxes, $\mathbf{S}$ is an atomic additive box and $p$ is an atom. 
\end{defn}

Atomic rules are meant to be structurally very similar to the rule schemas of natural deduction. We shall sometimes write them graphically in a similar way too, using a tree-like notion to represent the rules. We do so as follows:\\ 
Given the atomic rule $\mathcal{R}=\langle\mathbf{A},\mathbf{S},p\rangle$, where $\mathbf{A}=\{\openaddrule Q^1_1\seq q^1_1, \dots, Q^1_{l_1}\seq q^1_{l_1}\closeaddrule,\dots,\openaddrule Q^1_1\seq q^1_1, \dots, Q^n_{l_n} \seq q^n_{l_n} \closeaddrule\}$ and $\mathbf{S} = \{U_1\seq v_1, \dots, U_m\seq v_m\}$, then the graphical form of this rule is given as:
\vspace{2mm}
\small
\small
\[
\infer[\mathcal{R}]{p}{\left\{\raisebox{-0.9em}{ \deduce{q^1_1}{[Q^1_1]}\,\dots \, \deduce{q^1_{l_1}}{[Q^1_{l_1}]}} \right\}\raisebox{-0.9em}{\,\dots\,} \left\{\raisebox{-0.9em}{ \deduce{q^n_1}{[Q^n_1]} \,\dots \, \deduce{q^n_{l_n}}{[Q^n_{l_n}]}} \right\} \quad\left\llbracket\raisebox{-0.9em}{\deduce{v_1}{[U_1]}\,\dots\,\deduce{v_m}{[U_m]}}\right\rrbracket}
\]
\normalsize

\begin{defn}[Base]\label{def:base}
    A \emph{base} is a set of atomic rules. 
\end{defn}

\begin{defn}[Persistent atom]\label{def:persistent-atom}
    An atom $p$ is said to be persistent in the base $\baseB$, if there exists a rule $\langle \emptymultiset, \mathbf{S}, p\rangle$ in $\baseB$ with non-empty $\mathbf{S}$. 
\end{defn}

When the base in question is clear, we will simply call such atoms persistent. We see that for an atom to be deemed persistent, the base in question must contain a rule whose shape closely mirrors the promotion rule in $\mathfrak{N}$. In fact, the following definition shows us that, relative to our notion of derivability in a base, persistent atoms play a role that is very similar to the role played by formulae of the shape $\bang\varphi$ in the definition of $\provesILL$.  This correspondence is important and we will use this fact in the proof of completeness of the semantics in Section~\ref{sec:Completeness}. 

\begin{definition}[Derivability in a base]~\label{def:derivability-base-ILL}
    The relation of derivability in a base $\baseB$, denoted as $\deriveBaseM{\baseB}$, is a relation, indexed by the base $\baseB$, on atomic sequents, defined inductively according to the following two clauses:
    \begin{description}
        \item[Ref\label{eq:derive-ref}] 
        $p \deriveBaseM{\baseB} p$. 
        \item[App\label{eq:derive-app}] 
        Given $\langle \mathbf{A}, \mathbf{S}, p\rangle \in \baseB$ where $|\mathbf{A}| = m$, a (possibly empty) multiset of persistent atoms, $D$, where $|D|=k$, and $n = m + k$ (possibly empty) atomic multisets $C_i$, such that:
        \begin{itemize}
            \item For each additive box $\mathbf{T}_i \in \mathbf{A}$ and each atomic sequent $Q\seq q \in \mathbf{T}_i$, we have $C_i\msetsum Q\deriveBaseM{\baseB}q$,
            \item For each $d_i\in D$, we have $C_{m+i}\deriveBaseM{\baseB}d_i$,
            \item For all $U\seq v\in \mathbf{S}$ we have $D\msetsum U\deriveBaseM{\baseB}v$
        \end{itemize}
        Then, $C_{1}\msetsum\dots\msetsum C_{n}\deriveBaseM{\baseB} p$.
        \end{description}
\end{definition}

When $L\deriveBaseM{\baseB}p$ holds, we say that $p$ follows (or is derivable) from $L$ in base $\baseB$. The multiset $L$ is called the multiset of hypotheses (or sometimes hypothesis multiset) of the derivation. 
As a result of the similarity between the notion of derivation in a base and that of $\provesILL$, we can, in fact, graphically represent derivations in a base using a sequent-like notation very similar to that shown in Figure~\ref{fig:NatDedILL}. This graphical notation proves very useful when considering explicit and long atomic derivations, though we will not be making use of it in this work. 

\begin{proposition}[Monotonicity of $\deriveBaseM{\baseB}$]\label{lem:monotone-derivability}
    If $P \deriveBaseM{\baseB} p$ then for all $\baseC \baseGeq \baseB$ we also have that $P \deriveBaseM{\baseC} p$.
\end{proposition}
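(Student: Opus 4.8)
The plan is to induct on the structure of the derivation witnessing $P \deriveBaseM{\baseB} p$, following the two-clause inductive definition of $\deriveBaseM{\baseB}$ in Definition~\ref{def:derivability-base-ILL}. Fix an arbitrary $\baseC \baseGeq \baseB$. Since both clauses, (Ref) and (App), only ever \emph{enlarge} the set of derivable atomic sequents, it suffices to show that each clause, when instantiated over $\baseB$, can be re-instantiated over $\baseC$ to derive the same conclusion.

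For the base case (Ref), the derivation of $P \deriveBaseM{\baseB} p$ has the form $p \deriveBaseM{\baseB} p$, and the clause (Ref) immediately gives $p \deriveBaseM{\baseC} p$, since it refers to no base-specific data.

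For the inductive step (App), suppose $P \deriveBaseM{\baseB} p$ is obtained from a rule $\langle \mathbf{A}, \mathbf{S}, p\rangle \in \baseB$ with $|\mathbf{A}| = m$, atomic multisets $C_1, \dots, C_n$, and a multiset $D = \{d_{m+1}, \dots, d_n\}$ of atoms persistent in $\baseB$, such that $P = C_1 \msetsum \dots \msetsum C_n$ and the premise derivations $C_i \msetsum Q \deriveBaseM{\baseB} q$ (for each $\mathbf{T}_i \in \mathbf{A}$ and each $Q \seq q \in \mathbf{T}_i$), $C_{m+i} \deriveBaseM{\baseB} d_{m+i}$, and $D \msetsum U \deriveBaseM{\baseB} v$ (for each $U \seq v \in \mathbf{S}$) all hold. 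Each of these premise derivations is a proper subderivation, so the induction hypothesis transports each of them over $\baseC$. Two small observations then close the gap: first, $\langle \mathbf{A}, \mathbf{S}, p\rangle \in \baseC$ because $\baseC \baseGeq \baseB$; second, every atom persistent in $\baseB$ is persistent in $\baseC$, since persistence of $d$ is witnessed (Definition~\ref{def:persistent-atom}) by a rule $\langle \emptymultiset, \mathbf{S}', d\rangle \in \baseB$ with $\mathbf{S}'$ non-empty, and this rule also lies in $\baseC$. Hence the same choice of rule, the same $C_i$'s, and the same $D$ instantiate the (App) clause over $\baseC$, yielding $P = C_1 \msetsum \dots \msetsum C_n \deriveBaseM{\baseC} p$.

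The only point requiring genuine attention — and the one with a substructural flavour — is the preservation of the side condition that $D$ consists of persistent atoms under base extension; everything else is a routine transport of premises through the induction hypothesis. Since persistence is itself defined by the presence of a suitable rule, and bases only grow under $\baseGeq$, this side condition is monotone, and the induction goes through.
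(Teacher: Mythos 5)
Your proof is correct and follows essentially the same route as the paper's: a case split on whether the derivation ends in (Ref) or (App), with the induction hypothesis transporting the premise derivations and the observation that every rule of $\baseB$ lies in $\baseC$. You are in fact slightly more careful than the paper's one-line argument for the (App) case, since you explicitly verify that the persistence side condition on $D$ is preserved under base extension — a detail the paper leaves implicit.
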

\begin{proof}
    Supposing the hypothesis, then $P \deriveBaseM{\baseB} p$ holds in one of two ways.
    \begin{itemize}
        \item If $P \deriveBaseM{\baseB}p$ holds due to~\eqref{eq:derive-ref}, then $P=p$ and so it holds for any base $\baseX$ that $P \deriveBaseM{\baseX}p$.
        \item Else it must be the case that $P \deriveBaseM{\baseB}p$ holds by~\eqref{eq:derive-app}. The result follows by noting that if there are rules in $\baseB$ allowing a derivation of $p$ from $P$ then those rules will also be in $\baseC$ for all $\baseC \baseGeq \baseB$, and thus the derivation still holds.
    \end{itemize}
\end{proof}

\begin{lemma}[Cut admissibility for $\deriveBaseM{\baseB}$]
\label{lem:atomic-cut}
    The following are equivalent for arbitrary atomic multisets $P\msetsum  S$, atom $q$, and base $\baseB$, where we assume $P = \makeMultiset{p_1,\dots, p_n}$: 
    \begin{enumerate}
        \item $P\msetsum  S \deriveBaseM{\baseB} q$.~\label{eq:atomic-cut-1}
        \item For every $\baseC \baseGeq \baseB$, atomic multisets $T_1, \dots, T_n$ where $T_1 \deriveBaseM{\baseC} p_1, \dots, T_n \deriveBaseM{\baseC} p_n$, then $T_1\msetsum\dots\msetsum T_n\msetsum S \deriveBaseM{\baseC} q$.~\label{eq:atomic-cut-2}
    \end{enumerate}
\end{lemma}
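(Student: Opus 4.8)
The plan is to prove the two implications separately. The direction \eqref{eq:atomic-cut-2} $\Rightarrow$ \eqref{eq:atomic-cut-1} is immediate: take $\baseC = \baseB$ and $T_i = \{p_i\}$, which is legitimate since $p_i \deriveBaseM{\baseB} p_i$ by \eqref{eq:derive-ref}; then the hypothesis of \eqref{eq:atomic-cut-2} gives $\{p_1\} \msetsum \cdots \msetsum \{p_n\} \msetsum S = P \msetsum S \deriveBaseM{\baseB} q$. The content is therefore in the direction \eqref{eq:atomic-cut-1} $\Rightarrow$ \eqref{eq:atomic-cut-2}, which I would establish by induction on the derivation witnessing $P \msetsum S \deriveBaseM{\baseB} q$. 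To make the induction close, I would prove it in the form where the split of the hypothesis multiset into a ``part to be cut'' and a ``part to be kept'' is universally quantified, together with the base extension: for every derivation of $E \deriveBaseM{\baseB} q$, every decomposition $E = P \msetsum S$ with $P = \{p_1,\dots,p_n\}$, every $\baseC \baseGeq \baseB$, and every choice of $T_1,\dots,T_n$ with $T_i \deriveBaseM{\baseC} p_i$, we have $T_1 \msetsum \cdots \msetsum T_n \msetsum S \deriveBaseM{\baseC} q$. This generality is what lets the hypothesis be applied to the immediate subderivations arising in an \eqref{eq:derive-app} step, which carry different hypothesis multisets and only need to be cut after passing up to $\baseC$.

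In the base case the derivation is an instance of \eqref{eq:derive-ref}, so $E = \{q\}$ and the decomposition is either $P = \{q\}$, $S = \emptymultiset$, in which case the desired $T_1 \deriveBaseM{\baseC} q$ is exactly the assumed $T_1 \deriveBaseM{\baseC} p_1$, or $P = \emptymultiset$, $S = \{q\}$, in which case $q \deriveBaseM{\baseC} q$ holds again by \eqref{eq:derive-ref}.

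In the inductive case the last rule is \eqref{eq:derive-app}, say via $\langle \mathbf{A}, \mathbf{S}', q\rangle \in \baseB$ with $|\mathbf{A}| = m$, atomic multisets $C_1,\dots,C_k$, and persistent atoms $D = \{d_{m+1},\dots,d_k\}$, where $C_1 \msetsum \cdots \msetsum C_k = P \msetsum S$. Since $P$ is a submultiset of $C_1 \msetsum \cdots \msetsum C_k$, I would fix a way of distributing the occurrences $p_1,\dots,p_n$ among the boxes, writing $C_j = P_j \msetsum S_j$ with $P_1 \msetsum \cdots \msetsum P_k = P$ and $S_1 \msetsum \cdots \msetsum S_k = S$, and define $\hat C_j$ to be $S_j$ together with the $T_i$ attached to the occurrences that make up $P_j$, so that $\hat C_1 \msetsum \cdots \msetsum \hat C_k = S \msetsum T_1 \msetsum \cdots \msetsum T_n$. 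Applying the induction hypothesis to the subderivation of $C_i \msetsum Q \deriveBaseM{\baseB} q'$ (for each $Q\seq q' \in \mathbf{T}_i \in \mathbf{A}$) with cut-part $P_i$, kept-part $S_i \msetsum Q$, and base $\baseC$ yields $\hat C_i \msetsum Q \deriveBaseM{\baseC} q'$; applying it to the subderivation of $C_{m+i} \deriveBaseM{\baseB} d_{m+i}$ yields $\hat C_{m+i} \deriveBaseM{\baseC} d_{m+i}$; and for each $U\seq v \in \mathbf{S}'$, Proposition~\ref{lem:monotone-derivability} upgrades $D \msetsum U \deriveBaseM{\baseB} v$ to $D \msetsum U \deriveBaseM{\baseC} v$. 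Since an atom persistent in $\baseB$ remains persistent in any $\baseC \baseGeq \baseB$ (the rule witnessing persistence survives the extension), $D$ is still an admissible multiset of persistent atoms over $\baseC$, so one application of \eqref{eq:derive-app} with the rule $\langle \mathbf{A}, \mathbf{S}', q\rangle$, the multisets $\hat C_1,\dots,\hat C_k$, and $D$ gives $\hat C_1 \msetsum \cdots \msetsum \hat C_k \deriveBaseM{\baseC} q$, which is $T_1 \msetsum \cdots \msetsum T_n \msetsum S \deriveBaseM{\baseC} q$, as required.

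The main obstacle is organisational rather than conceptual: one must phrase the induction hypothesis so that it ranges over all decompositions of the hypothesis multiset and all base extensions at once, and then keep careful track of how the occurrences of $P$ — and with them the replacement derivations $T_i$ — are apportioned among the several multisets $C_j$ produced by \eqref{eq:derive-app}, including those feeding the persistent atoms, making sure nothing is silently duplicated or dropped. The only genuinely substructural subtlety is that each $C_j$ is repaired on its own slice $P_j$ independently, and that the persistence side-condition is preserved upward along $\baseGeq$.
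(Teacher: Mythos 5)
Your proposal is correct and follows essentially the same route as the paper: the (2)$\Rightarrow$(1) direction via $\baseC=\baseB$ and $T_i=\{p_i\}$, and the (1)$\Rightarrow$(2) direction by induction on the derivation, splitting on \eqref{eq:derive-ref} versus \eqref{eq:derive-app}, apportioning the occurrences of $P$ (and their replacements $T_i$) among the premise multisets $C_j$, applying the induction hypothesis to each subderivation, and reapplying the same rule in $\baseC$ via monotonicity. Your explicit strengthening of the induction hypothesis over all decompositions and base extensions, and your remarks that the $D\msetsum U\deriveBaseM{\baseB}v$ premises need only monotonicity and that persistence survives base extension, are points the paper leaves implicit but are exactly the right bookkeeping.
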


\begin{proof}
    We begin by proving that~\eqref{eq:atomic-cut-2} implies~\eqref{eq:atomic-cut-1}. For this, we begin by taking $\baseC = \baseB$ and $T_i$ to be $\makeMultiset{p_i}$ for each $i = 1, \dots, n$. Since $p_1 \deriveBaseM{\baseB} p_1, \dots, p_n \deriveBaseM{\baseB} p_n$ all hold by~\eqref{eq:derive-ref}, it thus follows from~\eqref{eq:atomic-cut-2} that $p_1\msetsum\dots\msetsum p_n\msetsum S \deriveBaseM{\baseB} q$ which is nothing more than $P\msetsum S \deriveBaseM{\baseB} q$.
    
    To now show~\eqref{eq:atomic-cut-1} implies~\eqref{eq:atomic-cut-2}, we need to consider how $P\msetsum S \deriveBaseM{\baseB} q$ is derived; that is, we proceed by induction, considering the cases when the derivation holds due to~\eqref{eq:derive-ref}, our base case, and~\eqref{eq:derive-app} separately.
    \begin{itemize}[label={-}]
        \item $P\msetsum S \deriveBaseM{\baseB} q$ holds by~\eqref{eq:derive-ref}. This gives us that $P\msetsum S = \makeMultiset{q}$, giving $q \deriveBaseM{\baseB} q$. There are thus two cases to consider, depending on which of $P$ and $S$ is $\makeMultiset{q}$. 
        \begin{itemize}
        \item[Case 1:] $P = \makeMultiset{q}$ and $S = \emptymultiset$. In this case, the statement of~\eqref{eq:atomic-cut-2} becomes, for every $\baseC \baseGeq \baseB$ and atomic multiset $T$ where $T \deriveBaseM{\baseC} q$, then $T \deriveBaseM{\baseC} q$. This holds trivially.
        \item[Case 2:] $P = \emptymultiset$ and $S = \makeMultiset{q}$. In this case, the statement of~\eqref{eq:atomic-cut-2} becomes, for every $\baseC \baseGeq \baseB$, $S \deriveBaseM{\baseC} q$. This holds by hypothesis from~\eqref{eq:atomic-cut-1}.
        \end{itemize}
        We are now left to show that~\eqref{eq:atomic-cut-1} implies~\eqref{eq:atomic-cut-2} according to~\eqref{eq:derive-app}. We show this by induction on the structure of the derivation $P\msetsum S\deriveBaseM{\baseB}q$.
        \item $P\msetsum S \deriveBaseM{\baseB} q$ holds by~\eqref{eq:derive-app}. \\
        Start by supposing that the rule we apply~\eqref{eq:derive-app} with is $\langle\mathbf{A},\mathbf{S},q\rangle \in \baseB$, where the size of $\mathbf{A}$ is $m\leq n$. Thus, we must have partitions of $P$ and $S$ into $P=P_1\msetsum\dots\msetsum P_{n}$ and $S=S_1\msetsum\dots\msetsum S_{n}$ such that:
        \begin{itemize}
            \item We have some multiset of persistent atoms $D = \{d_{m+1},\dots,d_{n}\}$ such that the derivations $P_{m+i}\msetsum S_{m+i}\deriveBaseM{\baseB}d_{m+i}$ hold for $i\in[1,n-m]$ and that for each atomic sequent $U\seq v \in \mathbf{S}$ we have that $D\msetsum U\deriveBaseM{\baseB} v$ holds.
            \item For each $\mathbf{T}_i \in \mathbf{A}$ and $Q\seq r \in \mathbf{T}_i$, we have that $P_i\msetsum S_i\msetsum Q \deriveBaseM{\baseB}r$
        \end{itemize}
        The hypothesis gives that we have multisets $T_1, \dots, T_n$ such that $T_1 \deriveBaseM{\baseC} p_1, \dots, T_n \deriveBaseM{\baseC} p_n$ hold. Since each $P_i$ is a partition of $P$, we know that they can be written as $P_i=\makeMultiset{p_{i_1},\dots,p_{i_{l_i}}}$. 
        Therefore, we can similarly partition each $T_i$ such that $T_i = T_{i_1}\msetsum\dots\msetsum T_{i_{l_i}}$. 
        By the inductive hypothesis, we therefore have that:
        \begin{itemize}
            \item We have some multiset of persistent atoms $D = \{d_{m+1},\dots,d_{n}\}$ such that the derivations $T_{(m+i)_1}\msetsum\dots\msetsum T_{(m+i)_{l_{(m+i)}}}\msetsum S_{m+i}\deriveBaseM{\baseB}d_{m+i}$ hold for $i\in[1,n-m]$ and that for each atomic sequent $U\seq v \in \mathbf{S}$ we have that $D\msetsum U\deriveBaseM{\baseB} v$ continue to hold.
            \item For each $\mathbf{T}_i \in \mathbf{A}$ and $Q\seq r \in \mathbf{T}_i$, that $T_{i_1}\msetsum\dots\msetsum T_{i_{l_i}}\msetsum S_i\msetsum Q\deriveBaseM{\baseC}r$
        \end{itemize}
        Since $\langle\mathbf{A},\mathbf{S},q\rangle \in \baseB$ and $\baseB \baseLeq \baseC$, then, by Lemma~\ref{lem:monotone-derivability} and~\eqref{eq:derive-app}, it follows that $T_{1_1}\msetsum\dots\msetsum T_{1_{l_1}}\msetsum S_1\msetsum \dots\msetsum T_{n_1}\msetsum\dots\msetsum T_{n_{l_n}}\msetsum S_n \deriveBaseM{\baseC}q$, which, when rearranged, gives $T_1\msetsum\dots\msetsum T_n\msetsum S_1\msetsum\dots\msetsum S_n \deriveBaseM{\baseC}q$, which is nothing more than $T_1\msetsum\dots\msetsum T_n\msetsum S \deriveBaseM{\baseC}q$, completing the induction as required.
    \end{itemize}
\end{proof}

\section{Base-extention Semantics}\label{sec:BeS}
We are now ready to introduce the support relation, as mentioned in the introduction, the relation at the heart of the semantics we define in this paper
.
\begin{definition}[Support]\label{def:support}
    The relation of support, denoted as $\suppM{\baseB}{L}$, is a relation on sequents, indexed by a base $\baseB$ and a (finite) atomic multiset $L$, defined inductively according to the definitions of Figure~\ref{fig:ILL:support}. Note that $\Gamma,\Delta$ and $\Theta$ are non-empty multisets. Furthermore, the multiset $\Theta$ contains no formulae with $(\bang)$ as a top-level connective.
\begin{figure}[th]
    \hrule \vspace{1mm}
              \[
            \begin{array}{r@{\qquad}l@{\quad}c@{\quad}l}
               \mbox{(At)} & \suppM{\baseB}{L} p  & \text{ iff } &   L\deriveBaseM{\baseB} p  \\ [1mm]
                (\mto) & \suppM{\baseB}{L} \varphi \mto \psi & \text{ iff } & \varphi \suppM{\baseB}{L} \psi \\[1mm]
                (\mand) & \suppM{\baseB}{L} \varphi \mand \psi   & \text{ iff } &  \text{for any } \baseC  \text{ such that } \baseC \baseGeq \baseB \text{, atomic multisets } K \\ 
                & & & \text{and any } p \in \At, \text{ if } \varphi\msetsum\psi\suppM{\baseB}{K}p \text{ then } \suppM{\baseC}{L\msetsum K} p  \\[1mm]
                (\mtop) & \suppM{\baseB}{L} \mtop   & \text{ iff } &  \text{for any } \baseC  \text{ such that } \baseC \baseGeq \baseB \text{, atomic multisets } K \\ 
                & & & \text{and any } p \in \At, \text{ if } \suppM{\baseB}{K}p \text{ then } \suppM{\baseC}{L\msetsum K} p  \\[1mm]
                (\aand) & \suppM{\baseB}{L} \varphi \aand \psi   & \text{ iff } &   \suppM{\baseB}{L} \varphi \text{ and }   \suppM{\baseB}{L} \psi  \\[1mm] 
                (\aor) & \suppM{\baseB}{L} \varphi \aor \psi & \text{ iff } &  \text{for any } \baseC  \text{ such that } \baseC \baseGeq \baseB \text{, atomic multisets } K \\ 
                & & & \text{and any } p \in \At, \text{ if } \varphi \suppM{\baseC}{K} p \text{ and } \psi \suppM{\baseC}{K} p, \text{ then } \suppM{\baseC}{L\msetsum K} p  \\[1mm]
                (\abot) & \suppM{\baseB}{L} \abot & \text{iff} &    \suppM{\baseB}{L\msetsum K} p \text{ for any } p \in \At \text{ and } K\subset\At \\[1mm]
                (\top) & \suppM{\baseB}{L}\top & & \text{always}\\[1mm]
                (\bang) & \suppM{\baseB}{L} \bang \varphi & \text{iff} &  \text{for any } \baseC  \text{ such that } \baseC \baseGeq \baseB \text{, atomic multisets } K \\ 
                & & & \text{and any } p \in \At, \text{ if for any } \baseD \text{ such that } \baseD\baseGeq\baseC, \\
                & & & (\text{if } \suppM{\baseD}{\emptymultiset}\varphi \text{ then } \suppM{\baseD}{L}p) \text{ then } \suppM{\baseC}{L\msetsum K}p  \\[1mm]
                (\msetsum) & \suppM{\baseB}{L}\Gamma\msetsum\Delta & \text{iff} & \text{there exists multisets } K \text{ and } M \text{ such that } L=K\msetsum M \\ 
                & & &\text{and } \suppM{\baseB}{K}\Gamma \text{ and } \suppM{\baseB}{M}\Delta \\
                \mbox{(Inf)} & \hspace{-1em} \bang\Delta\msetsum\Theta \suppM{\baseB}{L} \varphi & \text{ iff } & \text{for any } \baseC \text{ such that } \baseC \baseGeq \baseB \text{, atomic multisets } K, \\
                & & & \text{if } \suppM{\baseC}{\emptymultiset}\Delta \text{ and } \suppM{\baseC}{K} \Theta \text{ then } \suppM{\baseC}{L\msetsum K} \varphi \\[1mm]
            \end{array}
            \]
    \hrule
    \vspace{1mm}
        \caption{Support for Intuitionistic Linear Logic}
       ~\label{fig:ILL:support}
        \vspace{-15pt}
       ~\labelandtag{BeS:ILL:at}{(At)}
       ~\labelandtag{BeS:ILL:mto}{$(\mto)$}
       ~\labelandtag{BeS:ILL:mand}{$(\mand)$}
       ~\labelandtag{BeS:ILL:mtop}{$(\mtop)$}
       ~\labelandtag{BeS:ILL:aand}{$(\aand)$}
       ~\labelandtag{BeS:ILL:aor}{$(\aor)$}
       ~\labelandtag{BeS:ILL:abot}{$(\abot)$}
       ~\labelandtag{BeS:ILL:atop}{$(\top)$}
       ~\labelandtag{BeS:ILL:bang}{$(\bang)$}
       ~\labelandtag{BeS:ILL:comma}{$(\msetsum)$}
       ~\labelandtag{BeS:ILL:inf}{(Inf)}
\end{figure}
\end{definition}

\begin{definition}[Validity]\label{def:ILL-validity}
    The sequent $(\Gamma:\phi)$ is said to be valid if and only if for all bases $\baseB$, it is the case that $\Gamma\suppM{\baseB}{\emptymultiset}\varphi$.
\end{definition}

That the inductive definition of $\suppM{\baseB}{L}$ is well-founded may not be immediately clear. To show this, we define the following notion of the degree of a formula:
\begin{itemize}
    \item To atoms $p$, we assign a degree of $1$.
    \item To the constants $\top$, $\mtop$ and $\abot$ assign a degree of $2$.
    \item To each formula $\varphi\mto\psi$, $\varphi\mand\psi$, $\varphi\aand\psi$ and $\varphi\aor\psi$, assign the degree the sum of the degrees of $\varphi$ and $\psi$ plus $1$.
    \item To $\bang\varphi$, assign the degree of $\varphi$ plus $1$.
\end{itemize}
For all the definitional clauses in Definition~\ref{def:support} we have that the formula being defined is always of \emph{greater} degree than any formula in its definition, thus verifying the claim. 

\begin{lemma}\label{lem:atomic-sound-and-complete}
    $L \suppM{\baseB}{K} p$ iff $L\msetsum K \deriveBaseM{\baseB} p$.
\end{lemma}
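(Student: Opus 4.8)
The statement to prove is Lemma~\ref{lem:atomic-sound-and-complete}: $L \suppM{\baseB}{K} p$ iff $L\msetsum K \deriveBaseM{\baseB} p$. The plan is to induct on the structure of the multiset $L$ of formulae, which here is a multiset of \emph{atoms}, so the only two cases are $L$ a singleton atom and $L$ a nonempty multiset split via~\eqref{BeS:ILL:comma}; the empty-$L$ case is excluded since the $(\msetsum)$ and $(\mathrm{Inf})$ clauses require nonempty antecedents, but one still needs the degenerate reading where ``$L\suppM{\baseB}{K}p$ with $L$ empty'' just means $\suppM{\baseB}{K}p$, i.e.\ $K\deriveBaseM{\baseB}p$ by~\eqref{BeS:ILL:at} --- which matches $L\msetsum K = K$.

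First I would handle the base case $L = \{q\}$ a single atom. Then $q \suppM{\baseB}{K} p$ is an instance of~\eqref{BeS:ILL:inf} with $\bang\Delta = \emptymultiset$ and $\Theta = \{q\}$ (note $q$, being atomic, has no $\bang$ at top level, so it is a legal $\Theta$): it says for every $\baseC \baseGeq \baseB$ and every atomic multiset $M$, if $\suppM{\baseC}{M} q$ then $\suppM{\baseC}{K\msetsum M} p$. By~\eqref{BeS:ILL:at} both $\suppM{\baseC}{M}q$ and $\suppM{\baseC}{K\msetsum M}p$ unfold to $M \deriveBaseM{\baseC} q$ and $K\msetsum M \deriveBaseM{\baseC} p$ respectively. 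So the claim reduces to: ($\forall \baseC\baseGeq\baseB,\ \forall M$: $M\deriveBaseM{\baseC}q$ implies $K\msetsum M\deriveBaseM{\baseC}p$) iff $K\msetsum\{q\}\deriveBaseM{\baseB}p$. This is precisely the cut-admissibility statement, Lemma~\ref{lem:atomic-cut}, with $P = \{q\}$, $S = K$ (and $n=1$). So the base case is just an application of that lemma.

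For the inductive step, write $L = L_1\msetsum L_2$ with both parts nonempty (strictly smaller multisets of atoms). By~\eqref{BeS:ILL:comma}, $L_1\msetsum L_2 \suppM{\baseB}{K} p$ holds iff there exist $K_1, K_2$ with $K = K_1\msetsum K_2$, $L_1\suppM{\baseB}{K_1}p$\,--- wait, that is not quite right: $(\msetsum)$ decomposes the \emph{support target on the right} being a multiset $\Gamma\msetsum\Delta$, not the antecedent. Here the antecedent is the multiset $L$ of atoms and the target is the single atom $p$, so the relevant clause is~\eqref{BeS:ILL:inf} directly: $L \suppM{\baseB}{K} p$ with $L = \bang\Delta\msetsum\Theta$; since $L$ is purely atomic, $\bang\Delta = \emptymultiset$ and $\Theta = L$, giving: for all $\baseC\baseGeq\baseB$ and atomic $M$, if $\suppM{\baseC}{M}L$ then $\suppM{\baseC}{K\msetsum M}p$. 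Now $\suppM{\baseC}{M}L$ with $L = \{q_1,\dots,q_n\}$ unfolds via repeated~\eqref{BeS:ILL:comma} to: there is a partition $M = M_1\msetsum\cdots\msetsum M_n$ with $\suppM{\baseC}{M_i}q_i$, i.e.\ $M_i \deriveBaseM{\baseC} q_i$, for each $i$. So~\eqref{BeS:ILL:inf} reads: for all $\baseC\baseGeq\baseB$, all partitioned $M = M_1\msetsum\cdots\msetsum M_n$ with $M_i\deriveBaseM{\baseC}q_i$, we get $K\msetsum M \deriveBaseM{\baseC} p$; equivalently, for all $\baseC\baseGeq\baseB$ and all $M_1,\dots,M_n$ with $M_i\deriveBaseM{\baseC}q_i$, $M_1\msetsum\cdots\msetsum M_n\msetsum K \deriveBaseM{\baseC} p$. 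That is exactly clause~\eqref{eq:atomic-cut-2} of Lemma~\ref{lem:atomic-cut} with $P = L$ and $S = K$, which is equivalent to~\eqref{eq:atomic-cut-1}, namely $L\msetsum K\deriveBaseM{\baseB}p$. So in fact no induction on $L$ is needed at all: the whole lemma is a direct translation through~\eqref{BeS:ILL:at} and~\eqref{BeS:ILL:inf} (plus iterated~\eqref{BeS:ILL:comma} to unpack $\suppM{\baseC}{M}L$) into the two clauses of cut admissibility.

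The one point requiring care --- and the main obstacle --- is the ``unfolding'' of $\suppM{\baseC}{M}L$ for a multiset $L$ of atoms into the existence of a matching partition of $M$ with $M_i\deriveBaseM{\baseC}q_i$. This needs a small separate induction on $|L|$ using~\eqref{BeS:ILL:comma} and~\eqref{BeS:ILL:at}, and one must check the degenerate cases ($|L| = 1$, handled by~\eqref{BeS:ILL:at}) and confirm that the associativity/commutativity of $\msetsum$ makes the nested decompositions coincide with a single $n$-fold partition. Once that bookkeeping lemma is in place, the statement follows by matching it against Lemma~\ref{lem:atomic-cut} verbatim, so I would either prove that unfolding inline as the first paragraph of the proof or isolate it as an auxiliary observation. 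I expect the rest to be entirely routine.
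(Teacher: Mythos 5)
Your proposal is correct and, after the initial detour through a planned induction, lands on exactly the argument the paper gives: unfold $L\suppM{\baseB}{K}p$ via \ref{BeS:ILL:at}, \ref{BeS:ILL:inf} and iterated \ref{BeS:ILL:comma} into the universally quantified statement of Lemma~\ref{lem:atomic-cut}\eqref{eq:atomic-cut-2}, and invoke that lemma's equivalence with \eqref{eq:atomic-cut-1}, treating $L=\emptymultiset$ separately via \ref{BeS:ILL:at}. The paper leaves the partition-unpacking of $\suppM{\baseC}{M}L$ implicit where you rightly flag it as the one piece of bookkeeping to spell out, but the substance is the same.
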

\begin{proof}
    If $L=\emptymultiset$, then the result holds immediately by~\ref{BeS:ILL:at}. So consider $L = \makeMultiset{l_1,\dots,l_n}$. Proceeding from right to left, we begin by immediately applying Lemma~\ref{lem:atomic-cut} to the hypothesis which gives us that for all $\baseC \baseGeq \baseB$ and atomic multisets $T_i$ such that $T_i\deriveBaseM{\baseC}l_i$ for $i=1,\dots,n$, that we have $T_1\msetsum\dots\msetsum T_n\msetsum K\deriveBaseM{\baseC}p$. By~\ref{BeS:ILL:at}, this means that $\suppM{\baseC}{T_1\msetsum\dots\msetsum T_n\msetsum K}p$. Since we have that $T_i\deriveBaseM{\baseC}l_i$ for $i=1,\dots,n$ then by~\ref{BeS:ILL:at} we therefore have that $\suppM{\baseC}{T_i}l_i$ for $i=1,\dots,n$. Thus by~\ref{BeS:ILL:inf} we conclude that $L\suppM{\baseB}{K}p$. Finally, because~\ref{BeS:ILL:inf},~\ref{BeS:ILL:at} and Lemma~\ref{lem:atomic-cut} are bi-implications, that therefore completes the proof.
\end{proof}

An important consequence of this theorem is that $\suppM{\baseB}{L}$ makes for a conservative extension of $L\deriveBaseM{\baseB}$ to the whole language of Intuitionistic linear logic (that is, $\suppM{\baseB}{L}p$ if and only if $L\deriveBaseM{\baseB}p$). As a result, it \emph{should} be the case that we retain montonicity in the base, as follows.

\begin{lemma}[Monotonicity of $\suppM{\baseB}{L}$]\label{lem:monotone-support}
    If $\Gamma \suppM{\baseB}{L} \varphi$ then for all $\baseC \baseGeq \baseB$, we have that $\Gamma \suppM{\baseC}{L} \varphi$ holds.
\end{lemma}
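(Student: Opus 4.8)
The plan is to proceed by induction on the degree of the formula $\varphi$ (as defined just above), with a simultaneous treatment of multisets $\Gamma$ via the clause \ref{BeS:ILL:comma}. The base case, when $\varphi$ is an atom $p$, is handled by Lemma~\ref{lem:atomic-sound-and-complete}: we have $\Gamma\suppM{\baseB}{L}p$ iff $\Gamma\msetsum L\deriveBaseM{\baseB}p$, and then Proposition~\ref{lem:monotone-derivability} (monotonicity of $\deriveBaseM{\baseB}$) gives $\Gamma\msetsum L\deriveBaseM{\baseC}p$ for all $\baseC\baseGeq\baseB$, hence $\Gamma\suppM{\baseC}{L}p$ again by Lemma~\ref{lem:atomic-sound-and-complete}. (The degenerate subcase $\Gamma=\emptymultiset$ is just \ref{BeS:ILL:at} directly with Proposition~\ref{lem:monotone-derivability}.)

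For the inductive step I would observe that almost every clause in Figure~\ref{fig:ILL:support} is phrased as a quantification over \emph{all} bases $\baseC'\baseGeq\baseB$. Concretely, for the connectives $\mand$, $\mtop$, $\aor$ and $\bang$, the right-hand side of the definitional biconditional has the shape ``for all $\baseC'\baseGeq\baseB$, \dots''; since $\baseC\baseGeq\baseB$, any quantification over bases extending $\baseC$ is a subquantification of the one over bases extending $\baseB$, so the condition transfers verbatim from $\baseB$ to $\baseC$ with no appeal to the induction hypothesis at all. The same transitivity-of-$\baseGeq$ argument handles \ref{BeS:ILL:inf}, \ref{BeS:ILL:abot} (which carries no base extension but is a universal statement over atoms $p$ and multisets $K$, and one reduces it to the atomic case or simply notes the statement ``$\suppM{\baseB}{L\msetsum K}p$ for all $p,K$'' is itself monotone by the atomic base case), and \ref{BeS:ILL:atop} (vacuous). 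The clauses \ref{BeS:ILL:mto} and \ref{BeS:ILL:aand} are the only ones that are \emph{not} of the ``for all $\baseC'\baseGeq\baseB$'' form: for $\mto$ we have $\suppM{\baseB}{L}\varphi\mto\psi$ iff $\varphi\suppM{\baseB}{L}\psi$, and here we invoke the induction hypothesis on the sequent $\varphi\seq\psi$ (whose right-hand formula $\psi$ has strictly smaller degree) to push support from $\baseB$ up to $\baseC$; for $\aand$ we split into the two conjuncts $\suppM{\baseB}{L}\varphi$ and $\suppM{\baseB}{L}\psi$, apply the induction hypothesis to each (both of strictly smaller degree), and recombine. Finally, the multiset clause \ref{BeS:ILL:comma}: if $\Gamma\msetsum\Delta\suppM{\baseB}{L}$ is witnessed by a splitting $L=K\msetsum M$ with $\suppM{\baseB}{K}\Gamma$ and $\suppM{\baseB}{M}\Delta$, the induction hypothesis applied to the (structurally smaller) multisets $\Gamma$ and $\Delta$ yields $\suppM{\baseC}{K}\Gamma$ and $\suppM{\baseC}{M}\Delta$, and the same splitting witnesses $\Gamma\msetsum\Delta\suppM{\baseC}{L}$.

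The only mild subtlety — and the step I expect to require the most care — is making sure the induction is set up on the right measure so that it covers both single formulae and non-empty multisets uniformly, and that in the clauses where no induction hypothesis is used (the ``for all $\baseC'$'' clauses) one is genuinely just exploiting transitivity of $\baseGeq$ rather than smuggling in a circular appeal to monotonicity at the same degree. In particular for \ref{BeS:ILL:bang} one must check that the inner nested quantifier ``for any $\baseD\baseGeq\baseC'$, if $\suppM{\baseD}{\emptymultiset}\varphi$ then $\suppM{\baseD}{L}p$'' is left completely untouched when we replace the outer $\baseB$ by $\baseC$ — which it is, since $\baseC'$ ranges over a (sub)set of the original bases. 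I would write out the $\mto$, $\aand$, and $\mand$ cases in full and remark that the remaining cases are either instances of the transitivity observation or entirely analogous.
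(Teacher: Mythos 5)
Your proof is correct and is essentially the paper's own argument --- the paper simply asserts that the lemma follows immediately from Proposition~\ref{lem:monotone-derivability} and the inductive clauses of Definition~\ref{def:support}, and your case analysis (transitivity of $\baseGeq$ for the clauses quantifying over all base extensions, including \ref{BeS:ILL:inf}; the induction hypothesis for \ref{BeS:ILL:mto}, \ref{BeS:ILL:aand} and \ref{BeS:ILL:comma}; atomic monotonicity at the base) is exactly the intended expansion. One small repair: Lemma~\ref{lem:atomic-sound-and-complete} only applies when the antecedent is an \emph{atomic} multiset, so for a non-empty $\Gamma$ containing compound formulae the expression $\Gamma\msetsum L\deriveBaseM{\baseB}p$ is not well-formed --- but this costs nothing, since for any non-empty $\Gamma$ the judgement $\Gamma\suppM{\baseB}{L}p$ unfolds via \ref{BeS:ILL:inf} into a statement quantified over all $\baseC\baseGeq\baseB$ and is therefore already covered by your transitivity observation.
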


The proof follows immediately from Lemma~\ref{lem:monotone-derivability} and the inductive clauses of Definition~\ref{def:support}. Note however, that the support relation is monotone \emph{only} with respect to the base, not with respect to the context. For example, we see that $\suppM{\baseB}{p}p$ holds in all bases $\baseB$, but $\suppM{\baseB}{p\msetsum p} p$ does not necessarily hold in all bases $\baseB$. Were it to be the case that the support relation was monotone with respect to the context as well it would result in us having unrestricted weakening and contraction in the context; something which, as we shall see, is undesirable for the semantics we have set up. The fact that that the support relation is monotone with respect to the base, however, is useful and in fact, allows us to give us a simpler characterisation of validity.
\begin{lemma}\label{lem:ILL-validity}
    The sequent $(\Gamma:\phi)$ is valid if and only if $\Gamma\suppM{\emptybase}{\emptymultiset}\phi$.
\end{lemma}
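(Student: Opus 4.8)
The plan is to derive this directly from Definition~\ref{def:ILL-validity} together with the monotonicity of support established in Lemma~\ref{lem:monotone-support}. The statement is a biconditional, so I would treat the two directions separately, and I expect both to be short.

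First, for the left-to-right direction, suppose $(\Gamma:\phi)$ is valid. By Definition~\ref{def:ILL-validity} this means $\Gamma\suppM{\baseB}{\emptymultiset}\phi$ holds for \emph{every} base $\baseB$. Since $\emptybase$ is itself a base (it is the empty set of atomic rules), instantiating the universal quantifier at $\baseB=\emptybase$ immediately yields $\Gamma\suppM{\emptybase}{\emptymultiset}\phi$, as required.

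For the right-to-left direction, suppose $\Gamma\suppM{\emptybase}{\emptymultiset}\phi$. Let $\baseB$ be an arbitrary base. Because $\emptybase=\varnothing$, we have $\emptybase\baseLeq\baseB$, i.e.\ $\baseB\baseGeq\emptybase$. Applying Lemma~\ref{lem:monotone-support} with the base extension $\baseB\baseGeq\emptybase$ gives $\Gamma\suppM{\baseB}{\emptymultiset}\phi$. Since $\baseB$ was arbitrary, this holds for all bases, which is exactly the condition in Definition~\ref{def:ILL-validity} for $(\Gamma:\phi)$ to be valid.

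There is no real obstacle here: the content of the lemma is entirely absorbed into Lemma~\ref{lem:monotone-support}, and the only observation needed beyond that is that the empty base is the $\baseLeq$-least base, so it serves as a canonical witness. The lemma is thus best regarded as a convenient reformulation of validity that lets subsequent soundness and completeness arguments work with a single fixed base rather than quantifying over all bases.
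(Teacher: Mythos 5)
Your proof is correct and follows exactly the same route as the paper's: the left-to-right direction is instantiation of the universal quantifier at the empty base, and the right-to-left direction is an application of Lemma~\ref{lem:monotone-support} using the fact that $\emptybase$ is a subset of every base. No issues.
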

\begin{proof}
    Going left to right, we have that for all bases $\baseB$, it is the case that $\Gamma\suppM{\baseB}{\emptymultiset}\phi$. Thus, in particular, we have that $\Gamma\suppM{\emptybase}{\emptymultiset}\varphi$. Going right to left, the result follows by monotonicity as the empty base is the smallest subset of every base.
\end{proof}
Thus, we can justifiably write the valid sequent $(\Gamma:\phi)$ as $\Gamma\suppM{}{}\phi$. Before moving on, it is worth nothing that the clause for ($\bang$), can be simplified to
\begin{align*}
    \suppM{\baseB}{L}\bang \varphi&\text{ iff for all bases }\baseC\baseGeq\baseB\text{, atomic multisets } K \text{ and } p\in\At,\\
    &\text{ if }\bang\varphi\suppM{\baseB}{K}p\text{ then } \suppM{\baseB}{L\msetsum K}p
\end{align*}
\noindent That this holds is an immediate consequence of our (Inf) clause and we make use of this form of the definition for the remainder of this paper. We now make note of three interesting structural properties of $\suppM{\baseB}{L}$, the proofs of which we defer to Appendix~\ref{secA:Proofs}.
\begin{lemma}\label{lem:mand-key-lemma}
    Given $\suppM{\baseB}{L} \varphi \mand \psi$ and $\varphi\msetsum\psi \suppM{\baseB}{K} \chi$ then $\suppM{\baseB}{L\msetsum K} \chi$ holds.
\end{lemma}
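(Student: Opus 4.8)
The plan is to prove the statement by induction on the degree of $\chi$ (the degree function introduced just after Definition~\ref{def:ILL-validity}), strengthening the statement so that the induction hypothesis quantifies over all bases and all resource multisets: for every base $\baseB$ and all atomic multisets $L, K$, if $\suppM{\baseB}{L}\varphi\mand\psi$ and $\varphi\msetsum\psi\suppM{\baseB}{K}\chi$, then $\suppM{\baseB}{L\msetsum K}\chi$, with $\varphi$ and $\psi$ held fixed throughout. Since in every clause of Definition~\ref{def:support} the formula being defined strictly dominates in degree every formula occurring on the right-hand side, this induction is well-founded.

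Before the induction I would record a few routine ``dynamic'' properties of $\suppM{\baseB}{L}$, each an immediate consequence of clauses~\ref{BeS:ILL:inf},~\ref{BeS:ILL:mto},~\ref{BeS:ILL:mand},~\ref{BeS:ILL:aand},~\ref{BeS:ILL:aor}, the simplified form of~\ref{BeS:ILL:bang}, the unit clauses~\ref{BeS:ILL:mtop} and~\ref{BeS:ILL:abot}, and Lemma~\ref{lem:monotone-support}: (i) a \emph{modus ponens} rule --- if $\Gamma\suppM{\baseB}{K}\alpha\mto\beta$ and $\suppM{\baseC}{M}\alpha$ with $\baseC\baseGeq\baseB$, then $\Gamma\suppM{\baseC}{K\msetsum M}\beta$; (ii) a family of \emph{cut} rules --- from $\Gamma\suppM{\baseB}{K}\theta$ one obtains $\Gamma\suppM{\baseB}{K\msetsum M}p$ whenever $\theta$ is $\alpha\mand\beta$ with $\alpha\msetsum\beta\suppM{\baseB}{M}p$, or $\theta$ is $\alpha\aor\beta$ with $\alpha\suppM{\baseB}{M}p$ and $\beta\suppM{\baseB}{M}p$, or $\theta$ is $\bang\alpha$ with $\bang\alpha\suppM{\baseB}{M}p$, or $\theta$ is $\mtop$ with $\suppM{\baseB}{M}p$, or $\theta$ is $\abot$ (and then for \emph{any} atomic $M$ and any $p\in\At$); and (iii) a \emph{projection} rule --- $\Gamma\suppM{\baseB}{K}\alpha\aand\beta$ implies $\Gamma\suppM{\baseB}{K}\alpha$ and $\Gamma\suppM{\baseB}{K}\beta$. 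Each is proved uniformly: unfold the context $\Gamma$ via~\ref{BeS:ILL:inf} (splitting it into its $\bang$-headed and non-$\bang$-headed parts), apply the relevant clause to the displayed conclusion, feed in the secondary hypothesis raised to the larger base by monotonicity, and fold~\ref{BeS:ILL:inf} back up.

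Given these, the induction is mechanical. For $\chi$ an atom, applying clause~\ref{BeS:ILL:mand} to $\suppM{\baseB}{L}\varphi\mand\psi$ with the universally quantified base instantiated to $\baseB$ and with the hypothesis $\varphi\msetsum\psi\suppM{\baseB}{K}\chi$ yields $\suppM{\baseB}{L\msetsum K}\chi$ at once; for $\chi=\top$ the goal holds by~\ref{BeS:ILL:atop}. For $\chi$ of the form $\chi_1\mand\chi_2$, $\chi_1\aor\chi_2$, $\bang\chi_1$, $\mtop$ or $\abot$, unfold the corresponding clause on the right, which reduces the goal to proving $\suppM{\baseC}{L\msetsum K\msetsum M}p$ for an atom $p$, a base $\baseC\baseGeq\baseB$ and an atomic multiset $M$ satisfying the appropriate premise; the matching cut rule from (ii) combines that premise with $\varphi\msetsum\psi\suppM{\baseB}{K}\chi$ to give $\varphi\msetsum\psi\suppM{\baseB}{K\msetsum M}p$, and then clause~\ref{BeS:ILL:mand} (atom case) followed by monotonicity gives $\suppM{\baseC}{L\msetsum K\msetsum M}p$. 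For $\chi=\chi_1\aand\chi_2$, projection (iii) turns $\varphi\msetsum\psi\suppM{\baseB}{K}\chi$ into $\varphi\msetsum\psi\suppM{\baseB}{K}\chi_i$ for $i=1,2$; the induction hypothesis (applicable since each $\chi_i$ has smaller degree than $\chi$) gives $\suppM{\baseB}{L\msetsum K}\chi_i$, and~\ref{BeS:ILL:aand} reassembles $\suppM{\baseB}{L\msetsum K}\chi_1\aand\chi_2$.

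The remaining and most delicate case is $\chi=\chi_1\mto\chi_2$. By~\ref{BeS:ILL:mto} and~\ref{BeS:ILL:inf} the goal $\suppM{\baseB}{L\msetsum K}\chi_1\mto\chi_2$ unfolds to: for every $\baseC\baseGeq\baseB$ and atomic multiset $M$ with $\suppM{\baseC}{M}\chi_1$, show $\suppM{\baseC}{L\msetsum K\msetsum M}\chi_2$. Fixing such $\baseC$ and $M$, the modus ponens rule (i) applied to $\varphi\msetsum\psi\suppM{\baseB}{K}\chi_1\mto\chi_2$ and $\suppM{\baseC}{M}\chi_1$ produces $\varphi\msetsum\psi\suppM{\baseC}{K\msetsum M}\chi_2$; since $\suppM{\baseC}{L}\varphi\mand\psi$ by monotonicity and $\chi_2$ has smaller degree than $\chi$, the induction hypothesis delivers exactly $\suppM{\baseC}{L\msetsum K\msetsum M}\chi_2$. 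The one point needing care here --- and, together with the verification of the auxiliary rules (i)--(iii), essentially the only place the argument is not completely routine --- is the bookkeeping when any of $\varphi$, $\psi$, $\chi_1$ has $\bang$ as its top connective, since then the invocations of~\ref{BeS:ILL:inf} must route those formulae through the $\bang\Delta$ component rather than the $\Theta$ component; this alters none of the logical structure of the argument but does have to be written out explicitly. I expect that $\bang$-bookkeeping, and the proofs of (i)--(iii), to be where the actual work of the proof lies.
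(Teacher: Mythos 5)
Your proposal is correct and follows essentially the same route as the paper's proof: induction on $\chi$ (the paper uses structural induction, you use degree, which is equivalent here), reducing each universally-quantified clause ($\mand$, $\aor$, $\mtop$, $\abot$, $\bang$) to an atomic goal, establishing $\varphi\msetsum\psi\suppM{\baseC}{K\msetsum M}p$ by cutting the clause's premise into the second hypothesis, and then discharging via the $(\mand)$ clause applied to the monotonicity-lifted first hypothesis, with $\aand$ handled by projection plus the inductive hypothesis and $\mto$ by modus ponens plus the inductive hypothesis on $\chi_2$. The only difference is presentational: you factor the inline unfold-(Inf)/apply-clause/refold arguments into named auxiliary cut rules, whereas the paper carries them out case by case.
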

\begin{lemma}\label{lem:mtop-key-lemma}
    Given $\suppM{\baseB}{L} \mtop$ and $\suppM{\baseB}{K} \chi$ then $\suppM{\baseB}{L\msetsum K} \chi$ holds.
\end{lemma}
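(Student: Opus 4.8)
The plan is to prove the statement by induction on the degree of $\chi$ (as defined just before Lemma~\ref{lem:atomic-sound-and-complete}), formulated uniformly over bases: for every base $\baseB$ and all atomic multisets $L$ and $K$, if $\suppM{\baseB}{L}\mtop$ and $\suppM{\baseB}{K}\chi$, then $\suppM{\baseB}{L\msetsum K}\chi$. The base case is $\chi$ an atom $p$: then $\suppM{\baseB}{K}p$ is just $K\deriveBaseM{\baseB}p$ by \ref{BeS:ILL:at}, and instantiating the clause \ref{BeS:ILL:mtop} for $\suppM{\baseB}{L}\mtop$ with the base $\baseB$ itself, the multiset $K$ and the atom $p$ delivers $\suppM{\baseB}{L\msetsum K}p$ at once.

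For the inductive step I would run through the clauses of Figure~\ref{fig:ILL:support} connective by connective. The genuinely inductive cases are $\chi=\chi_1\mto\chi_2$ and $\chi=\chi_1\aand\chi_2$: for implication, by $(\mto)$ and (Inf) it suffices, given any $\baseC\baseGeq\baseB$ and atomic multiset $M$ realising the antecedent concerning $\chi_1$ at $\baseC$, to produce $\suppM{\baseC}{L\msetsum K\msetsum M}\chi_2$; the hypothesis $\chi_1\suppM{\baseB}{K}\chi_2$ gives $\suppM{\baseC}{K\msetsum M}\chi_2$, and since $\suppM{\baseC}{L}\mtop$ by Lemma~\ref{lem:monotone-support}, the induction hypothesis at $\baseC$ applied to the strictly smaller $\chi_2$ yields the required support; additive conjunction is handled conjunct by conjunct in the same way, using the induction hypothesis at $\baseB$ itself. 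The cases $\chi=\chi_1\mand\chi_2$, $\chi=\chi_1\aor\chi_2$, $\chi=\abot$ and $\chi=\bang\chi_1$ all share one pattern: the defining clause quantifies over a base $\baseC\baseGeq\baseB$, an atomic multiset $M$ and an atom $p$, and from the relevant antecedent together with $\suppM{\baseB}{K}\chi$ one reads off $\suppM{\baseC}{K\msetsum M}p$; feeding this atom into the base case, applied to $\suppM{\baseC}{L}\mtop$ (valid by Lemma~\ref{lem:monotone-support}), produces $\suppM{\baseC}{L\msetsum K\msetsum M}p$, which is exactly what the clause for $\chi$ demands. The case $\chi=\top$ is immediate from $(\top)$.

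The one case that is not a mechanical unfolding is $\chi=\mtop$, where the induction hypothesis gives nothing; here one chains the two $\mtop$-hypotheses. Given $\baseC\baseGeq\baseB$, an atomic multiset $M$ and an atom $p$ realising the antecedent of \ref{BeS:ILL:mtop} for $L\msetsum K$, first apply $\suppM{\baseB}{K}\mtop$ with output base $\baseB$ to pass from that antecedent to $\suppM{\baseB}{K\msetsum M}p$, and then apply $\suppM{\baseB}{L}\mtop$ with output base $\baseC$ to obtain $\suppM{\baseC}{L\msetsum K\msetsum M}p$; the trick is that the first application is free to keep the base at $\baseB$ since $\baseB\baseGeq\baseB$. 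I expect this $\mtop$-against-$\mtop$ case, together with the bookkeeping of where each clause's base-extension quantifier sits, to be the only real obstacle — and it is resolved precisely because $\suppM{\baseB}{L}\mtop$ persists along base extensions (Lemma~\ref{lem:monotone-support}), so the base case and the induction hypothesis may be invoked at the extended base $\baseC$ whenever needed. Structurally this mirrors the proof of Lemma~\ref{lem:mand-key-lemma}, with $\mtop$ playing the unit role taken there by $\varphi\mand\psi$.
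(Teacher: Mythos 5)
Your proposal is correct and follows essentially the same route as the paper's proof in Appendix~\ref{secA:Proofs}: structural induction on $\chi$, with the inductive hypothesis genuinely needed only in the cases where the clause produces a smaller formula (as for $\aand$ and $\mto$), the remaining connectives being unwound via their defining clauses to the atomic level where the \ref{BeS:ILL:mtop} clause is applied, and Lemma~\ref{lem:monotone-support} supplying $\suppM{\baseC}{L}\mtop$ at the extended bases. The paper writes out only the additive and $\bang$ cases, deferring the multiplicatives to the IMLL treatment of~\cite{AlexTaoDavid_PtS4IMLL}, but the mechanism is identical to yours, including the chaining of the two $\mtop$-hypotheses in the $\chi=\mtop$ case.
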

\begin{lemma}\label{lem:aor-key-lemma}
    Given that $\suppM{\baseB}{L} \varphi \aor \psi$, $\varphi \suppM{\baseB}{K} \chi$ and $\psi \suppM{\baseB}{K} \chi$ all hold then $\suppM{\baseB}{L\msetsum K} \chi$.
\end{lemma}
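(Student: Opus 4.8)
The plan is to prove, by induction on the degree of $\chi$ (the well-founded measure introduced just after Definition~\ref{def:support}), that for every base $\baseB$ and all atomic multisets $L,K$: if $\suppM{\baseB}{L}\varphi\aor\psi$, $\varphi\suppM{\baseB}{K}\chi$ and $\psi\suppM{\baseB}{K}\chi$, then $\suppM{\baseB}{L\msetsum K}\chi$. The formulae $\varphi$ and $\psi$ are inert here, so this is genuinely an induction on $\chi$; the guiding observation is that every clause of Figure~\ref{fig:ILL:support} ultimately tests the formula it defines against \emph{atoms}, so each inductive step can be discharged by appealing to the induction hypothesis either at a proper subformula of $\chi$ or at an atom. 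For the base case $\chi = r\in\At$, the two hypotheses $\varphi\suppM{\baseB}{K}r$ and $\psi\suppM{\baseB}{K}r$ feed directly into clause~\ref{BeS:ILL:aor} for $\suppM{\baseB}{L}\varphi\aor\psi$ (with $\baseC := \baseB$, atomic multiset $K$, atom $r$), giving $\suppM{\baseB}{L\msetsum K}r$; the case $\chi = \top$ is immediate from~\ref{BeS:ILL:atop}.

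For the inductive step I would split on the principal connective of $\chi$, in every case first unfolding the clause for that connective in the goal $\suppM{\baseB}{L\msetsum K}\chi$. When $\chi$ is $\chi_1\aand\chi_2$ or $\chi_1\mto\chi_2$ this reduces the goal to a statement about the proper subformulae $\chi_1,\chi_2$; one then pushes each of $\varphi\suppM{\baseB}{K}\chi$, $\psi\suppM{\baseB}{K}\chi$ through~\ref{BeS:ILL:inf}, peels off the principal connective of $\chi$ inside, rearranges the atomic resources (in the $\mto$ case this absorbs $\chi_1$ into $K$, so that no side context is needed), and closes by the induction hypothesis at $\chi_1$ and $\chi_2$, whose degrees are strictly smaller. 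When $\chi$ is $\chi_1\mand\chi_2$, $\mtop$, $\chi_1\aor\chi_2$, $\abot$ or $\bang\chi_1$, the clause for the goal instead hands us an extension $\baseC\baseGeq\baseB$, an atomic multiset $N$, an atom $p$, and a clause-specific hypothesis involving the immediate constituents of $\chi$ and $p$, and asks for $\suppM{\baseC}{L\msetsum K\msetsum N}p$. Here I would first derive $\varphi\suppM{\baseC}{K\msetsum N}p$ and $\psi\suppM{\baseC}{K\msetsum N}p$ by pushing $\varphi\suppM{\baseB}{K}\chi$ and $\psi\suppM{\baseB}{K}\chi$ through~\ref{BeS:ILL:inf} and contracting the clause-specific hypothesis against the clause for $\chi$ --- this is exactly where Lemma~\ref{lem:mand-key-lemma} is invoked in the $\mand$ case, Lemma~\ref{lem:mtop-key-lemma} in the $\mtop$ case, the induction hypothesis at the atom $p$ (taken with the disjunction $\chi_1\aor\chi_2$) in the $\aor$ case, clause~\ref{BeS:ILL:abot} directly in the $\abot$ case, and the simplified $(\bang)$ clause recorded after Lemma~\ref{lem:ILL-validity} in the $\bang$ case. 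With those two facts in hand, the induction hypothesis at $p$ taken with the original disjunction $\varphi\aor\psi$ --- using $\suppM{\baseC}{L}\varphi\aor\psi$, which holds by Lemma~\ref{lem:monotone-support} --- delivers $\suppM{\baseC}{L\msetsum K\msetsum N}p$.

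Throughout, a recurring subtlety is that pushing a hypothesis $\varphi\suppM{\baseB}{K}\chi$ through~\ref{BeS:ILL:inf} requires a minor case split on whether $\varphi$ (resp.\ $\psi$) has $\bang$ as its principal connective, since this changes whether $\varphi$ is decomposed via the $\bang\Delta$-slot or the $\Theta$-slot of the clause; making this analysis uniform across all the connective cases, and keeping the splitting and recombination of the atomic resource multisets exactly balanced at every unfolding, is where the bulk of the work lies. I expect the $\bang$ case to be the main obstacle: there is by design no general cut principle for $\suppM{\baseB}{L}$ at compound formulae to fall back on, so the argument must go entirely through unfolding~\ref{BeS:ILL:inf}, the connective-specific clauses, and the already-established key lemmas as the only available glue, and the extra quantifier over base extensions buried in the clause for $\bang$ makes the repackaging step there noticeably fussier than elsewhere.
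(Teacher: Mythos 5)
Your proposal is correct and follows essentially the same route as the paper's own proof in Appendix~\ref{secA:Proofs}: induction on $\chi$, with the base case discharged directly by clause~\ref{BeS:ILL:aor}, and each compound case handled by unfolding the goal's clause, transporting $\suppM{\baseB}{L}\varphi\aor\psi$ to the extended base by Lemma~\ref{lem:monotone-support}, establishing $\varphi\suppM{\baseC}{K\msetsum N}p$ and $\psi\suppM{\baseC}{K\msetsum N}p$ via~\ref{BeS:ILL:inf} and the connective-specific clause, and closing with the $(\aor)$ clause at the atom $p$. The only (harmless) differences are presentational: you cite Lemmas~\ref{lem:mand-key-lemma} and~\ref{lem:mtop-key-lemma} where the paper unfolds the corresponding clauses inline, and you make explicit the $\bang$-versus-$\Theta$ case split in~\ref{BeS:ILL:inf} that the paper's appendix glosses over.
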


Of more interest to us is the interaction of $\suppM{\baseB}{L}$ with formulae $(\bang)$ as a top-level connective. Let us prove some structural results about these formulae.

\begin{lemma}\label{lem:bang-dereliction}
    Given $\varphi \suppM{\baseB}{L}\psi$ then $\bang\varphi\suppM{\baseB}{L}\psi$.
\end{lemma}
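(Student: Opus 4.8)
The plan is to strip the $\bang$ off the left of the sequent using the clause~\ref{BeS:ILL:inf}. Since $\bang\varphi$ always carries $(\bang)$ at its top, in the canonical decomposition $\bang\Delta\msetsum\Theta$ of the left context it must occupy the $\bang\Delta$-part (with $\Delta=\{\varphi\}$ and $\Theta$ empty), so~\ref{BeS:ILL:inf} tells us that $\bang\varphi\suppM{\baseB}{L}\psi$ holds exactly when, for every base $\baseC\baseGeq\baseB$, $\suppM{\baseC}{\emptymultiset}\varphi$ implies $\suppM{\baseC}{L}\psi$. I would therefore fix such a $\baseC$, assume $\suppM{\baseC}{\emptymultiset}\varphi$, note by Lemma~\ref{lem:monotone-support} that the hypothesis gives $\varphi\suppM{\baseC}{L}\psi$, and so reduce the lemma to the detachment step: from $\suppM{\baseC}{\emptymultiset}\varphi$ and $\varphi\suppM{\baseC}{L}\psi$ conclude $\suppM{\baseC}{L}\psi$.

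The detachment step splits on whether $\varphi$ itself has $(\bang)$ at its top. If it does not, then~\ref{BeS:ILL:inf} unfolds $\varphi\suppM{\baseC}{L}\psi$ to: for all $\baseD\baseGeq\baseC$ and all atomic multisets $K$, $\suppM{\baseD}{K}\varphi$ implies $\suppM{\baseD}{L\msetsum K}\psi$; instantiating $\baseD:=\baseC$ and $K:=\emptymultiset$ and feeding in $\suppM{\baseC}{\emptymultiset}\varphi$ gives $\suppM{\baseC}{L}\psi$ at once. The real content is the case $\varphi=\bang\varphi'$: here~\ref{BeS:ILL:inf} unfolds $\bang\varphi'\suppM{\baseC}{L}\psi$ to ``for all $\baseD\baseGeq\baseC$, $\suppM{\baseD}{\emptymultiset}\varphi'$ implies $\suppM{\baseD}{L}\psi$'', so all that is missing is the implication $\suppM{\baseC}{\emptymultiset}\bang\varphi'\Rightarrow\suppM{\baseC}{\emptymultiset}\varphi'$ --- dereliction at the level of empty-resource support --- after which one chains with the unfolded hypothesis at $\baseD:=\baseC$.

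To establish $\suppM{\baseC}{\emptymultiset}\bang\varphi'\Rightarrow\suppM{\baseC}{\emptymultiset}\varphi'$ I would exploit the clause~\ref{BeS:ILL:bang} applied to $\suppM{\baseC}{\emptymultiset}\bang\varphi'$: for every $\baseE\baseGeq\baseC$, atomic multiset $K$ and atom $p$, if it is hereditarily true that $\suppM{\baseD}{\emptymultiset}\varphi'$ entails $\suppM{\baseD}{\emptymultiset}p$ (for all $\baseD\baseGeq\baseE$), then $\suppM{\baseE}{K}p$. When $\varphi'$ is atomic, taking $p:=\varphi'$, $K:=\emptymultiset$, $\baseE:=\baseC$ makes the hereditary hypothesis a tautology and the clause returns $\suppM{\baseC}{\emptymultiset}\varphi'$ directly. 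For compound $\varphi'$ I would route through a characterisation of empty-resource support by the atoms a formula entails (that $\suppM{\baseC}{\emptymultiset}\varphi'$ amounts to $\suppM{\baseC'}{\emptymultiset}p$ holding for every $\baseC'\baseGeq\baseC$ and every atom $p$ with $\varphi'\suppM{\baseC'}{\emptymultiset}p$), built from Lemma~\ref{lem:atomic-sound-and-complete} and the definitional clauses; for formulas of the shape $\bang\sigma$ the hereditary implication needed to fire~\ref{BeS:ILL:bang} only becomes available after one more application of dereliction to $\sigma$, so the transparency claim is proved by induction on the degree of the formula. I expect this compound sub-case to be the main obstacle: the clause~\ref{BeS:ILL:bang} only ever hands back statements $\suppM{\baseE}{K}p$ with the resource $K$ free, whereas support is \emph{not} monotone in its resource parameter, so the delicate point is to feed that clause hereditary implications whose conclusions carry exactly the empty resource it demands and to keep every resource multiset matched along the way.
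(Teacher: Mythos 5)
Your opening moves coincide with the paper's: unfold $\bang\varphi\suppM{\baseB}{L}\psi$ by~\ref{BeS:ILL:inf} into ``for all $\baseC\baseGeq\baseB$, $\suppM{\baseC}{\emptymultiset}\varphi$ implies $\suppM{\baseC}{L}\psi$'', fix such a $\baseC$ with $\suppM{\baseC}{\emptymultiset}\varphi$, and discharge the non-modal case by instantiating the unfolded hypothesis at $K=\emptymultiset$. The paper itself does not case-split: it reads the hypothesis $\varphi\suppM{\baseB}{L}\psi$ uniformly as ``$\suppM{\baseX}{K}\varphi$ implies $\suppM{\baseX}{L\msetsum K}\psi$'' and instantiates at $K=\emptymultiset$, which is literally the~\ref{BeS:ILL:inf} clause only when $\varphi$ does not carry $(\bang)$ at the top. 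So you are right that the case $\varphi=\bang\varphi'$ deserves separate attention, and right that it reduces to the single implication $\suppM{\baseC}{\emptymultiset}\bang\varphi'\Rightarrow\suppM{\baseC}{\emptymultiset}\varphi'$.

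The genuine gap is that you do not prove this implication for compound $\varphi'$. Your route goes through the claim that $\suppM{\baseC}{\emptymultiset}\varphi'$ is equivalent to ``$\suppM{\baseC'}{\emptymultiset}p$ for every $\baseC'\baseGeq\baseC$ and atom $p$ with $\varphi'\suppM{\baseC'}{\emptymultiset}p$''. The right-to-left direction of that claim does not follow ``from Lemma~\ref{lem:atomic-sound-and-complete} and the definitional clauses'': for $\varphi'=\alpha\mand\beta$, say, clause~\ref{BeS:ILL:mand} quantifies over judgements $\alpha\msetsum\beta\suppM{\baseE}{K}p$ with \emph{arbitrary} resource $K$, and these are not recovered from the atomic consequences of $\varphi'$ under the empty resource; such an atomic characterisation of support is essentially a completeness-style statement and is not available at this point. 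The implication you need is true, but the economical way to it is to cut: $\bang\varphi'\suppM{\baseC}{\emptymultiset}\varphi'$ holds vacuously by~\ref{BeS:ILL:inf} (its unfolding is ``$\suppM{\baseD}{\emptymultiset}\varphi'$ implies $\suppM{\baseD}{\emptymultiset}\varphi'$''), so the $L=K=\emptymultiset$, $\psi=\varphi'$ instance of Lemma~\ref{lem:bang-cut-support-prim} --- whose proof is an induction on the structure of $\psi$ that does not use the present lemma --- turns $\suppM{\baseC}{\emptymultiset}\bang\varphi'$ into $\suppM{\baseC}{\emptymultiset}\varphi'$; alternatively one proves the implication directly by induction on $\varphi'$ using Lemmas~\ref{lem:mand-key-lemma}--\ref{lem:aor-key-lemma}. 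As it stands, the sub-case you yourself flag as ``the main obstacle'' is exactly where the proposal stops being a proof.
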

\begin{proof}
    Begin by fixing some base $\baseC\baseGeq\baseB$ such that $\suppM{\baseC}{\emptymultiset}\varphi$. We wish to show that $\suppM{\baseC}{L}\psi$. The given hypothesis is equivalent to the statement that $\suppM{\baseX}{K}\varphi$ implies $\suppM{\baseX}{L\msetsum K}\psi$ for all bases $\baseX\baseGeq\baseB$ and atomic multisets $K$. In particular, we consider when $K=\emptymultiset$ and $\baseX=\baseC$, at which point we conclude $\suppM{\baseC}{L}\psi$, as required.
\end{proof}

\begin{lemma}\label{lem:bang-necessitation}
    Given $\suppM{\baseB}{\emptymultiset}\varphi$ and for all $\baseC\baseGeq\baseB$ such that $\bang\varphi\suppM{\baseC}{L}\psi$, then $\suppM{\baseC}{L}\psi$.
\end{lemma}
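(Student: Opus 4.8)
The plan is to mimic the proof of Lemma~\ref{lem:bang-dereliction}: unfold the definition of support for a singleton context consisting of a $(\bang)$-formula, and then instantiate the resulting universally quantified base with $\baseC$ itself. First I would fix an arbitrary base $\baseC \baseGeq \baseB$ for which $\bang\varphi \suppM{\baseC}{L} \psi$ holds; the goal is then to show $\suppM{\baseC}{L}\psi$.

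Next I would note that the context $\makeMultiset{\bang\varphi}$ has the shape $\bang\Delta \msetsum \Theta$ with $\Delta = \makeMultiset{\varphi}$ and $\Theta = \emptymultiset$, so clause~\ref{BeS:ILL:inf} applies and tells us that $\bang\varphi \suppM{\baseC}{L}\psi$ is equivalent to: for every $\baseD \baseGeq \baseC$ and atomic multiset $K$, if $\suppM{\baseD}{\emptymultiset}\varphi$ and $\suppM{\baseD}{K}\Theta$ then $\suppM{\baseD}{L\msetsum K}\psi$; since $\Theta$ is empty the condition $\suppM{\baseD}{K}\Theta$ forces $K = \emptymultiset$, so this reduces to: for every $\baseD \baseGeq \baseC$, if $\suppM{\baseD}{\emptymultiset}\varphi$ then $\suppM{\baseD}{L}\psi$. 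Instantiating with $\baseD = \baseC$ and using that $\suppM{\baseB}{\emptymultiset}\varphi$ together with $\baseC \baseGeq \baseB$ gives $\suppM{\baseC}{\emptymultiset}\varphi$ by monotonicity of support (Lemma~\ref{lem:monotone-support}), I then conclude $\suppM{\baseC}{L}\psi$, as required.

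I do not anticipate any real obstacle here; the argument is a short chain of unfoldings. The only point that needs a moment's care is the application of~\ref{BeS:ILL:inf} in the degenerate case $\Theta = \emptymultiset$, i.e. confirming that a singleton context formed from a $(\bang)$-formula is covered by the (Inf) clause and that the atomic parameter $K$ there collapses to $\emptymultiset$, so that no extra resource is introduced.
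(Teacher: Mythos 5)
Your proposal is correct and matches the paper's own proof: both unfold $\bang\varphi\suppM{\baseC}{L}\psi$ via~\ref{BeS:ILL:inf} into ``for all $\baseD\baseGeq\baseC$, $\suppM{\baseD}{\emptymultiset}\varphi$ implies $\suppM{\baseD}{L}\psi$'', instantiate at $\baseD=\baseC$, and close with monotonicity (Lemma~\ref{lem:monotone-support}) applied to $\suppM{\baseB}{\emptymultiset}\varphi$. Your explicit remark that the empty $\Theta$ collapses the parameter $K$ to $\emptymultiset$ is just a slightly more careful reading of the same degenerate instance of the clause that the paper uses implicitly.
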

\begin{proof}
    We begin by fixing arbitrary $\baseC$ such that $\bang\varphi\suppM{\baseC}{L}\psi$. By (Inf), we have that $\bang\varphi\suppM{\baseC}{L}\psi$ is equivalent to the statement that $\suppM{\baseX}{\emptymultiset}\varphi$ implies $\suppM{\baseX}{L}\psi$, for all bases $\baseX\baseGeq\baseC$. In particular, we consider when $\baseX=\baseC$. Since we have that $\suppM{\baseB}{\emptymultiset}\varphi$, then by Lemma~\ref{lem:monotone-support}, we have that $\suppM{\baseX}{\emptymultiset}\varphi$ and thus $\suppM{\baseC}{L}\psi$.
\end{proof}
\begin{corollary}\label{cor:bang-pure-necessitation}
    Given $\suppM{\baseB}{\emptymultiset}\varphi$ then $\suppM{\baseB}{\emptymultiset}\bang\varphi$.
\end{corollary}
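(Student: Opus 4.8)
The plan is to treat this as a near-immediate specialisation of Lemma~\ref{lem:bang-necessitation}, once the outermost connective has been unwound. First I would unfold the definition of $\suppM{\baseB}{\emptymultiset}\bang\varphi$ using the $(\bang)$ clause in the simplified form recorded just above, specialised to $L=\emptymultiset$: to prove $\suppM{\baseB}{\emptymultiset}\bang\varphi$ it suffices to show that for every base $\baseC\baseGeq\baseB$, every atomic multiset $K$ and every $p\in\At$, if $\bang\varphi\suppM{\baseC}{K}p$ then $\suppM{\baseC}{\emptymultiset\msetsum K}p$, that is, $\suppM{\baseC}{K}p$.

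Next I would fix such a $\baseC$, $K$ and $p$ and assume $\bang\varphi\suppM{\baseC}{K}p$. At this point I simply appeal to Lemma~\ref{lem:bang-necessitation} with its schematic parameters instantiated as $L:=K$ and $\psi:=p$: since the hypothesis $\suppM{\baseB}{\emptymultiset}\varphi$ of the corollary is exactly the hypothesis of that lemma, the lemma delivers precisely the implication ``for every $\baseC\baseGeq\baseB$ with $\bang\varphi\suppM{\baseC}{K}p$, we have $\suppM{\baseC}{K}p$''. Applying this to my fixed $\baseC$ together with the assumption $\bang\varphi\suppM{\baseC}{K}p$ gives $\suppM{\baseC}{K}p$, and since $\baseC$, $K$ and $p$ were arbitrary the $(\bang)$ clause is met, so $\suppM{\baseB}{\emptymultiset}\bang\varphi$. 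In fact one can phrase the whole argument in essentially one line: unfolding $(\bang)$ at $L=\emptymultiset$ turns $\suppM{\baseB}{\emptymultiset}\bang\varphi$ into exactly the (universally quantified) conclusion of Lemma~\ref{lem:bang-necessitation}.

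I do not expect a genuine obstacle here: all the real work has already been carried out in the proof of Lemma~\ref{lem:bang-necessitation} (which itself goes through the (Inf) clause), and what remains is bookkeeping of quantifiers together with the observation that $\emptymultiset\msetsum K=K$. The one point I would flag is why one should route through Lemma~\ref{lem:bang-necessitation} rather than argue ``directly'': reading $\suppM{\baseC}{\emptymultiset}\varphi$ off from monotonicity (Lemma~\ref{lem:monotone-support}) and then picking the inner base to be $\baseC$ in the $(\bang)$ clause would only deliver $\suppM{\baseC}{\emptymultiset}p$, and the step from $\suppM{\baseC}{\emptymultiset}p$ to $\suppM{\baseC}{K}p$ for a non-empty $K$ is a weakening, which is unavailable in the linear setting. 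Lemma~\ref{lem:bang-necessitation} is exactly the formulation that absorbs this difficulty, so passing through it is the clean route.
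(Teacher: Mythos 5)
Your proposal is correct and follows essentially the same route as the paper's own proof: unfold the simplified $(\bang)$ clause at $L=\emptymultiset$, fix $\baseC\baseGeq\baseB$, $K$ and $p$ with $\bang\varphi\suppM{\baseC}{K}p$, and conclude $\suppM{\baseC}{K}p$ by Lemma~\ref{lem:bang-necessitation} with $\psi=p$. Your closing remark about why a ``direct'' argument via monotonicity would founder on the absence of weakening is a sound observation, but it is not needed for the proof itself.
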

\begin{proof}
    We start by fixing a base $\baseC\baseGeq\baseB$, atomic multiset $K$ and an atom $p$ such that $\bang\varphi\suppM{\baseC}{K}p$. We are left to show $\suppM{\baseC}{K}p$, which follows immediately by Lemma~\ref{lem:bang-necessitation}, with $\psi=p$.
\end{proof}

\begin{corollary}\label{cor:bang-promotion}
    Given $\bang \Gamma \suppM{}{} \varphi$ then $\bang \Gamma \suppM{}{} \bang \varphi$.
\end{corollary}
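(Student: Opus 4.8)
The plan is to reduce, through the~\ref{BeS:ILL:inf} clause, to a statement about support over a single arbitrary base, and then to close the argument by instantiating the hypothesis and invoking the necessitation Corollary~\ref{cor:bang-pure-necessitation}. First I would restate matters via Lemma~\ref{lem:ILL-validity} and Definition~\ref{def:ILL-validity}: the hypothesis $\bang\Gamma\suppM{}{}\varphi$ is exactly ``$\bang\Gamma\suppM{\baseB}{\emptymultiset}\varphi$ for every base $\baseB$'', and the goal is ``$\bang\Gamma\suppM{\baseB}{\emptymultiset}\bang\varphi$ for every base $\baseB$''. The degenerate case $\Gamma=\emptymultiset$ is precisely Corollary~\ref{cor:bang-pure-necessitation}, so I would record it and otherwise write $\Gamma=\{\gamma_1,\dots,\gamma_n\}$ with $n\geq 1$.

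Fix a base $\baseB$. Because $\bang\Gamma$ consists entirely of $\bang$-prefixed formulae, its non-banged part is empty, so unfolding $\bang\Gamma\suppM{\baseB}{\emptymultiset}\bang\varphi$ through the~\ref{BeS:ILL:inf} clause and using the~\ref{BeS:ILL:comma} clause (an empty resource multiset splits only into empty pieces) reduces the goal to the following: for every $\baseC\baseGeq\baseB$ with $\suppM{\baseC}{\emptymultiset}\gamma_i$ for each $i=1,\dots,n$, show $\suppM{\baseC}{\emptymultiset}\bang\varphi$.

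So let such a $\baseC$ be given. The~\ref{BeS:ILL:comma} clause turns $\suppM{\baseC}{\emptymultiset}\gamma_i$ (for all $i$) into $\suppM{\baseC}{\emptymultiset}\Gamma$. Instantiating the hypothesis at $\baseC$ gives $\bang\Gamma\suppM{\baseC}{\emptymultiset}\varphi$, and applying the~\ref{BeS:ILL:inf} clause to this --- with the quantified base taken to be $\baseC$ itself and the resource multiset empty --- yields $\suppM{\baseC}{\emptymultiset}\varphi$. Corollary~\ref{cor:bang-pure-necessitation} then upgrades this to $\suppM{\baseC}{\emptymultiset}\bang\varphi$, which is what is required; as $\baseB$ was arbitrary, the Corollary follows.

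I do not expect a real obstacle: the proof is a short chain of already-established facts (Lemma~\ref{lem:ILL-validity}, the~\ref{BeS:ILL:inf} and~\ref{BeS:ILL:comma} clauses, Corollary~\ref{cor:bang-pure-necessitation}). The only point requiring mild care is the bookkeeping around the~\ref{BeS:ILL:inf} clause for an all-banged context --- recognising that with the non-banged component empty the auxiliary resource multiset is forced to be empty, so $\suppM{\baseC}{\emptymultiset}\Gamma$ genuinely collapses to the conjunction of the pointwise conditions $\suppM{\baseC}{\emptymultiset}\gamma_i$ --- together with the observation that the outer $\bang$ on the conclusion can be introduced by pure necessitation rather than through the full $(\bang)$-clause, which is why the argument passes through $\varphi$ and only then re-applies $\bang$.
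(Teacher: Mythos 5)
Your proof is correct and follows essentially the same route as the paper's: both unfold the \ref{BeS:ILL:inf} clause for the all-banged context, obtain $\suppM{\baseC}{\emptymultiset}\varphi$ at each extension where $\Gamma$ is supported with empty resources, and then apply Corollary~\ref{cor:bang-pure-necessitation} to reintroduce the outer $\bang$ before folding the quantifier back up. Your version merely makes explicit the bookkeeping (fixing a base, instantiating the hypothesis, and the empty/non-empty $\Gamma$ split) that the paper compresses into working ``under the quantifier.''
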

\begin{proof}
    We start by considering all bases $\baseB$ where $\bang\Gamma$ is supported. Thus, our hypothesis becomes, under the quantifier, that $\suppM{\baseB}{\emptymultiset}\varphi$. By Corollary~\ref{cor:bang-pure-necessitation}, we thus have, under the quantifier, $\suppM{\baseB}{\emptymultiset}\bang\varphi$ and there fore, $\bang\Gamma\suppM{}{}\bang\varphi$, as required.
\end{proof}

The question ``Does the deduction theorem hold in modal logics?" has long been a problematic issue in the literature of modal logic~\cite{HakliNegri_DoesDeductionTheoremFailForModalLogics_2012}. These historical issues seem to be being reflected in our definition of~\ref{BeS:ILL:inf}, which seems to suggest, somewhat in line with the reasoning of Fagin et al. in~\cite{book_Reasoning_about_knowledge} for Epistemic Logic and Chagrov and Zakharyaschev~\cite{book_Modal_Logic} for normal modal logics, that modal formulae (that is, for us, formulae of the form $\bang\varphi$) need to be treated differently as hypothesis. This is troublesome for the linear logician as there are many formulae which are logically equivalent to a ``modal" formula but do not have the $(\bang)$ as a top-level connective. An example of this are the formulae 
\[
    \bang(\varphi\aand\psi) \mbox{ and } (\bang\varphi)\mand(\bang\psi)
\] 
Our desire to treat formulae of the form $\bang\varphi$ differently as hypothesis comes from the observation that the sequent 
\[
    \bang\varphi\seq\varphi\mand\dots\mand\varphi
\] 
is meant to be valid for any number of $\varphi$'s in the right-hand side. Thus, it should be the case that for any number of $\varphi$, we have that 
\[
    \bang\varphi\suppM{}{}\varphi\mand\dots\mand\varphi
\] 
However, according to the resource interpretation of this support judgement, a ``normal", resourceful, reading of $\bang\varphi$ on the left-hand side would seemingly lead to a contradiction to the fact that our sequent is valid for any number of $\varphi$ on the right. Spelt out, a ``normal" style $\mbox{(Inf)}$ clause would say that 
\begin{align*}
\bang\varphi\suppM{}{}\varphi\mand\dots\mand\varphi &\mbox{ iff for all } \baseB, \mbox{ atomic multisets } L \mbox{ and atoms } p,\\
&\mbox{ if } \suppM{\baseB}{L}\bang\varphi, \mbox{ then } \suppM{\baseB}{L}\varphi\mand\dots\mand\varphi
\end{align*}
Since $L$ must be finite, it seems preposterous that such a judgement would hold for \emph{arbitrary} many $\varphi$. Nevertheless, this indeed turns out to be the case. In fact, we dedicate the remainder of this section to showing that the ``normal" style $\mbox{(Inf)}$ clause is equivalent to our~\ref{BeS:ILL:inf} clause. That is, we will prove the following statement:
\[
    \Gamma\suppM{\baseB}{L}\varphi \text{ iff for all }\baseC\baseGeq\baseB\text{ and atomic multisets }K,\, \suppM{\baseC}{K}\Gamma\text{ implies }\suppM{\baseC}{L\msetsum K}\varphi
\]\labelandtag{BeS:ILL:gen-inf}{(Gen-Inf)}
We call this clause $\mbox{(Gen-Inf)}$, in line with~\cite{AlexTaoDavid_ResourceSemantics_2024}.
Thus, the goal for the remainder of this section is to show that~\ref{BeS:ILL:gen-inf} indeed holds. An interesting point to note is that, as a result of the soundness and completeness of the semantics, one can view the proof of this statement as a semantic argument \emph{for} the deduction theorem in ILL. Before continuing, since~\ref{BeS:ILL:gen-inf} holds (and thus there is no need for a special~\ref{BeS:ILL:inf} clause), it is worth explaining the reasoning behind sticking with such an~\ref{BeS:ILL:inf} clause: Firstly, working with such a form of the~\ref{BeS:ILL:inf} clause gives us a new way of looking at an aspect of base-extension semantics that has, so far in the literature, never needed to be investigated further. Secondly, and perhaps more pragmatically, it makes the mathematics simpler. Whether making such a choice for simplicity alone is a discussion unto itself, which we shall not be continuing here. In any case, we proceed to show that~\ref{BeS:ILL:gen-inf} is indeed equivalent to~\ref{BeS:ILL:inf}.

\begin{lemma}\label{lem:bang-cut-support-prim}
    Given $\suppM{\baseB}{L}\bang \varphi$ and $\bang \varphi \suppM{\baseB}{K}\psi$ then $\suppM{\baseB}{L\msetsum K}\psi$.
\end{lemma}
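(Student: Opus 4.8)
The plan is to prove the statement by induction on the structure of $\psi$ (well-foundedness being guaranteed by the degree function introduced before Lemma~\ref{lem:atomic-sound-and-complete}), so that the induction hypothesis is the statement of Lemma~\ref{lem:bang-cut-support-prim} itself for every proper subformula of $\psi$. Throughout I would unfold the two hypotheses once and for all. By the simplified $(\bang)$-clause, $\suppM{\baseB}{L}\bang\varphi$ says that for every $\baseC\baseGeq\baseB$, atomic multiset $N$ and atom $p$, if $\bang\varphi\suppM{\baseC}{N}p$ then $\suppM{\baseC}{L\msetsum N}p$; and by~\ref{BeS:ILL:inf}, $\bang\varphi\suppM{\baseB}{K}\psi$ says that for every $\baseC\baseGeq\baseB$, if $\suppM{\baseC}{\emptymultiset}\varphi$ then $\suppM{\baseC}{K}\psi$. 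Monotonicity (Lemma~\ref{lem:monotone-support}) is used freely to transport these facts to larger bases.

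First I would dispose of the trivial base cases: when $\psi=\top$ the goal $\suppM{\baseB}{L\msetsum K}\top$ holds by~\ref{BeS:ILL:atop}, and when $\psi$ is an atom $p$ the goal follows by instantiating the unfolded form of $\suppM{\baseB}{L}\bang\varphi$ at $\baseC=\baseB$, $N=K$, since $\bang\varphi\suppM{\baseB}{K}p$ is exactly the second hypothesis. The bulk of the work is a uniform treatment of the cases whose defining clause reduces the goal to an \emph{atomic} conclusion, namely $\psi$ of the form $\chi_1\mand\chi_2$, $\mtop$, $\chi_1\aor\chi_2$, $\abot$, or $\bang\chi$. In each such case, unfolding the relevant clause (\ref{BeS:ILL:mand}, \ref{BeS:ILL:mtop}, \ref{BeS:ILL:aor}, \ref{BeS:ILL:abot}, or the simplified $(\bang)$-clause) leaves the goal $\suppM{\baseC}{L\msetsum K\msetsum N}p$ for an arbitrary $\baseC\baseGeq\baseB$, atomic multiset $N$ and atom $p$ satisfying that clause's side condition. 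I would then apply the unfolded form of $\suppM{\baseB}{L}\bang\varphi$ at $\baseC$, $K\msetsum N$, $p$, reducing the task to establishing $\bang\varphi\suppM{\baseC}{K\msetsum N}p$; by~\ref{BeS:ILL:inf} this in turn reduces to fixing $\baseD\baseGeq\baseC$ with $\suppM{\baseD}{\emptymultiset}\varphi$ and proving $\suppM{\baseD}{K\msetsum N}p$. At that point the second hypothesis yields $\suppM{\baseD}{K}\psi$, and combining this with the monotone image of the clause's side condition via the appropriate tool — Lemma~\ref{lem:mand-key-lemma} for $\mand$, Lemma~\ref{lem:mtop-key-lemma} for $\mtop$, Lemma~\ref{lem:aor-key-lemma} for $\aor$, clause~\ref{BeS:ILL:abot} for $\abot$, and a second application of the simplified $(\bang)$-clause for $\bang\chi$ — delivers $\suppM{\baseD}{K\msetsum N}p$. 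None of these cases needs the induction hypothesis: the $(\bang)$-clause forces an atomic target, and $\bang$-headed formulae compose just as atoms do.

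The two genuinely inductive cases are $\aand$ and $\mto$. For $\psi=\chi_1\aand\chi_2$, clause~\ref{BeS:ILL:aand} splits the goal into $\suppM{\baseB}{L\msetsum K}\chi_1$ and $\suppM{\baseB}{L\msetsum K}\chi_2$; since $\bang\varphi\suppM{\baseB}{K}(\chi_1\aand\chi_2)$ immediately gives $\bang\varphi\suppM{\baseB}{K}\chi_i$ for $i=1,2$ (unfold~\ref{BeS:ILL:inf} and~\ref{BeS:ILL:aand}), the induction hypothesis applied to $\chi_1$ and $\chi_2$ finishes the case. For $\psi=\chi_1\mto\chi_2$, clause~\ref{BeS:ILL:mto} reduces the goal to $\chi_1\suppM{\baseB}{L\msetsum K}\chi_2$, and here one must split on whether $\chi_1$ is $\bang$-headed, since this governs how~\ref{BeS:ILL:inf} unfolds that judgement. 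In the non-$\bang$-headed sub-case I fix $\baseC\baseGeq\baseB$ and an atomic multiset $J$ with $\suppM{\baseC}{J}\chi_1$ and must produce $\suppM{\baseC}{L\msetsum K\msetsum J}\chi_2$; applying the induction hypothesis at $\chi_2$ with resources $L$ and $K\msetsum J$ reduces this to $\suppM{\baseC}{L}\bang\varphi$ (monotone) together with $\bang\varphi\suppM{\baseC}{K\msetsum J}\chi_2$, and the latter I obtain by fixing $\baseD\baseGeq\baseC$ with $\suppM{\baseD}{\emptymultiset}\varphi$, using the second hypothesis to get $\chi_1\suppM{\baseD}{K}\chi_2$, and feeding in the monotone image $\suppM{\baseD}{J}\chi_1$ of the supposition. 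The $\bang$-headed sub-case $\chi_1=\bang\alpha$ is the same, except that the supposition is $\suppM{\baseC}{\emptymultiset}\alpha$, from which Corollary~\ref{cor:bang-pure-necessitation} recovers $\suppM{\baseC}{\emptymultiset}\bang\alpha$ to play the earlier role. I expect this $\mto$ case to be the main obstacle: it is the only place forcing a sub-case split on the shape of the antecedent, it is one of only two cases that genuinely invoke the induction hypothesis, and it requires careful bookkeeping of the nested chain $\baseB\baseLeq\baseC\baseLeq\baseD$ of base extensions together with repeated appeals to monotonicity.
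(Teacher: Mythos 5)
Your proposal is correct and follows essentially the same strategy as the paper: induction on the structure of $\psi$, dispatching the atom case by instantiating the unfolded $(\bang)$-hypothesis and handling the connectives whose clauses bottom out in an atomic conclusion by reducing to $\bang\varphi\suppM{\baseC}{K\msetsum N}p$ and discharging the resulting $\suppM{\baseD}{\emptymultiset}\varphi$ assumption against the second hypothesis. The paper only works out the atom, $\mand$ and $\aor$ cases and waves at the rest; your explicit treatment of $\aand$ and $\mto$ (the only cases that genuinely invoke the induction hypothesis, with the sub-case split on whether the antecedent of $\mto$ is $\bang$-headed) is a sound completion of exactly that argument.
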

\begin{proof}
    We proceed by induction on the structure of $\psi$. We show three cases, one multiplicative, one additive and the base case to highlight the different aspects of the induction.
    \begin{itemize}[label={-}]
        \item $\psi = p$ for some $p\in \At$. In this case our second hypothesis says $\bang \varphi \suppM{\baseB}{K}p$ and we want to show that $\suppM{\baseB}{L\msetsum K} p$. The first hypothesis is equivalent to the statement that for all $\baseX \baseGeq \baseB$, atomic multisets $M$ and atoms $q$, if $\bang \varphi \suppM{\baseX}{M}q$ then $\suppM{\baseX}{L\msetsum M}q$. Thus, considering when $\baseX=\baseB$, $M=K$ and $q=p$, we obtain that $\suppM{\baseB}{L\msetsum K}p$, as requred.
        \item $\psi = \alpha \mand \beta$. In this case, it is equivalent to show that from our original hypotheses and from the hypothesis that for all bases $\baseX \baseGeq \baseB$ atomic multisets $M$ and atoms $p \in \At$ that $\alpha\msetsum\beta\suppM{\baseC}{M}p$, that $\suppM{\baseC}{L\msetsum K\msetsum M}p$ holds. To do that we need to show that $\bang\varphi\suppM{\baseC}{K\msetsum M}p$. To show this, we first consider our second hypothesis $\bang\varphi\suppM{\baseC}{K}\alpha\mand\beta$, which holds in $\baseC\baseGeq\baseB$ by monotonicity. By (Inf), this is equivalent to considering for all $\baseD\baseGeq\baseC$ such that if $\suppM{\baseD}{\emptymultiset}\varphi$ then $\suppM{\baseD}{K}\alpha\mand\beta$. The conclusion here is equivalent to considering all extensions $\baseE\baseGeq\baseD$, atomic multisets $N$ and atoms $p$, if $\alpha\msetsum\beta\suppM{\baseE}{N}p$ then $\suppM{\baseE}{K\msetsum N}p$. By monotonicity, and in particular at $\baseE=\baseD$ and $N=M$ our additional hypothesis gives that $\suppM{\baseD}{K\msetsum M}p$. Thus, by (Inf), we have $\bang\varphi\suppM{\baseC}{K\msetsum M}p$, as required.
        To finish this proof off, we note that our first point, by ($\bang$) is equivalent to considering all $\baseX\baseGeq\baseB$, atomic multisets $N$ and atoms $p$, such that if $\bang\varphi\suppM{\baseX}{N}p$ then $\suppM{\baseX}{L\msetsum N}p$. In particular, when $\baseX=\baseC$ and $N=K\msetsum M$ we obtain our desired conslusion $\suppM{\baseC}{L\msetsum K\msetsum M}p$.
        \item $\psi = \alpha \aor \beta$. In this case, we take as additional hypothesis a base $\baseC\baseGeq\baseB$, atomic multiset $M$ and an atom $p$ such that $\alpha\suppM{\baseC}{M}p$ and $\beta\suppM{\baseC}{M}p$. Our goal will be to show that $\suppM{\baseC}{L\msetsum K\msetsum M}p$. We note that the first hypothesis, $\suppM{\baseB}{L}\bang\varphi$, implies that, if $\bang\varphi\suppM{\baseC}{K\msetsum M}p$ then $\suppM{\baseC}{L\msetsum K\msetsum M}p$. Thus, we show that $\bang\varphi\suppM{\baseC}{K\msetsum M}p$. By Lemma~\ref{lem:monotone-support}, the second hypothesis gives $\suppM{\baseC}{K}\alpha\aor\beta$. This is equivalent to considering all bases $\baseD\baseGeq\baseC$ and atomic multisets $N$ such that $\suppM{\baseD}{\emptymultiset}\varphi$ implies $\suppM{\baseD}{K}\alpha\aor\beta$. The conclusion of this implication is equivalent to considering all bases $\baseE\baseGeq\baseD$, atomic multisets $N$ and atoms $q$ such that $\alpha\suppM{\baseE}{N}q$ and $\beta\suppM{\baseE}{N}q$ imply $\suppM{\baseE}{K\msetsum N}q$. Since we have by hypothesis that $\alpha\suppM{\baseC}{M}p$ and $\beta\suppM{\baseC}{M}p$, by considering the case when $\baseE=\baseD$, $N=M$ and $q=p$, it therefore follows by (Inf) that $\bang\varphi\suppM{\baseC}{K\msetsum M}p$.
    \end{itemize}
    All other cases follow similarly.
\end{proof}

\begin{corollary}\label{cor:bang-cut-support}
    Given $\suppM{\baseB}{L}\bang \Gamma$ and $\bang \Gamma \suppM{\baseB}{K}\psi$ then $\suppM{\baseB}{L\msetsum K}\psi$.
\end{corollary}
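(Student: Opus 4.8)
The plan is to induct on the cardinality $|\Gamma|$ of the multiset $\Gamma$ (taken non-empty, in keeping with the conventions on $\suppM{\baseB}{L}$), with Lemma~\ref{lem:bang-cut-support-prim} doing all of the real work. The base case $|\Gamma| = 1$ is literally Lemma~\ref{lem:bang-cut-support-prim}, with $\varphi$ instantiated to the unique element of $\Gamma$, so there is nothing to prove there.

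For the inductive step I would write $\Gamma = \Gamma' \msetsum \{\gamma\}$, so that $\bang\Gamma = \bang\Gamma' \msetsum \bang\gamma$ and $|\Gamma'| = |\Gamma| - 1 \ge 1$. Applying the \ref{BeS:ILL:comma} clause to the first hypothesis $\suppM{\baseB}{L}\bang\Gamma$ yields a splitting $L = L' \msetsum L_\gamma$ with $\suppM{\baseB}{L'}\bang\Gamma'$ and $\suppM{\baseB}{L_\gamma}\bang\gamma$. The heart of the argument is to ``peel off'' the single banged formula $\bang\gamma$, i.e. to establish the auxiliary judgement
\[ \bang\Gamma' \suppM{\baseB}{L_\gamma \msetsum K}\psi . \]
Granting this, the inductive hypothesis applied to $\suppM{\baseB}{L'}\bang\Gamma'$ and the displayed judgement yields $\suppM{\baseB}{L' \msetsum L_\gamma \msetsum K}\psi$, which is $\suppM{\baseB}{L \msetsum K}\psi$, closing the induction.

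To prove the auxiliary judgement I would unfold \ref{BeS:ILL:inf}: it suffices to fix an arbitrary $\baseC \baseGeq \baseB$ with $\suppM{\baseC}{\emptymultiset}\Gamma'$ and derive $\suppM{\baseC}{L_\gamma \msetsum K}\psi$. Unfolding \ref{BeS:ILL:inf} on the second hypothesis $\bang\Gamma' \msetsum \bang\gamma \suppM{\baseB}{K}\psi$ gives: for every $\baseD \baseGeq \baseB$, if $\suppM{\baseD}{\emptymultiset}\Gamma'$ and $\suppM{\baseD}{\emptymultiset}\gamma$ then $\suppM{\baseD}{K}\psi$; restricting attention to $\baseD \baseGeq \baseC$ and using monotonicity (Lemma~\ref{lem:monotone-support}) to discharge the premiss $\suppM{\baseD}{\emptymultiset}\Gamma'$ for free, this is exactly the statement that $\bang\gamma \suppM{\baseC}{K}\psi$. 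On the other hand monotonicity turns $\suppM{\baseB}{L_\gamma}\bang\gamma$ into $\suppM{\baseC}{L_\gamma}\bang\gamma$, so a second appeal to Lemma~\ref{lem:bang-cut-support-prim}, this time at the base $\baseC$ and with the singleton formula $\bang\gamma$, delivers $\suppM{\baseC}{L_\gamma \msetsum K}\psi$ as required; since $\baseC$ was arbitrary, \ref{BeS:ILL:inf} then gives the auxiliary judgement.

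The only delicate point is this peeling step, and even there the difficulty is purely organisational: one must unfold \ref{BeS:ILL:inf} in the right order, keep careful track of which base extension each sub-judgement lives in, and observe that the ``no top-level $\bang$'' side condition on the non-modal context of \ref{BeS:ILL:inf} is vacuously met at every application (the relevant $\Theta$-component is always empty here). The base case, the \ref{BeS:ILL:comma}-splitting, and the final invocation of the inductive hypothesis are all routine.
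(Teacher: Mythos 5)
Your proof is correct and is essentially the paper's own argument: the paper dispatches this corollary as an ``immediate consequence'' of Lemma~\ref{lem:bang-cut-support-prim}, meaning precisely the element-by-element iteration over $\Gamma$ that you spell out via the $(\msetsum)$-splitting of $L$ and the peeling of one $\bang\gamma$ at a time. Your careful unfolding of \ref{BeS:ILL:inf} to isolate $\bang\gamma\suppM{\baseC}{K}\psi$ is sound and simply makes explicit what the paper leaves to the reader.
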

This corollary is an immediate consequence of Lemma~\ref{lem:bang-cut-support-prim}.

\begin{theorem}\label{thm:cut-support}
    $\Gamma\suppM{\baseB}{L}\varphi$ if and only if for all $\baseC\baseGeq\baseB$ and atomic multisets $K$, $\suppM{\baseC}{K}\Gamma$ implies $\suppM{\baseC}{L\msetsum K}\varphi$.
\end{theorem}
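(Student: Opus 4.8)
The plan is to set $\Gamma = \bang\Delta\msetsum\Theta$, where $\Theta$ collects exactly the formulae of $\Gamma$ that do not have $\bang$ as a top-level connective (so $\Theta$ is of the kind allowed in~\ref{BeS:ILL:inf}), and to prove the two implications separately; the right-to-left one is routine, while the left-to-right one is the substantive case. Throughout, one unfolds the left-hand side $\Gamma\suppM{\baseB}{L}\varphi$ via~\ref{BeS:ILL:inf} into: for all $\baseC\baseGeq\baseB$ and atomic multisets $K$, if $\suppM{\baseC}{\emptymultiset}\Delta$ and $\suppM{\baseC}{K}\Theta$ then $\suppM{\baseC}{L\msetsum K}\varphi$. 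The content of the theorem is precisely that the pair ``$\suppM{\baseC}{\emptymultiset}\Delta$ and $\suppM{\baseC}{K}\Theta$'' may be replaced by the single premise $\suppM{\baseC}{K}\Gamma$.

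For the direction from right to left, assume $\suppM{\baseC}{K}\Gamma$ implies $\suppM{\baseC}{L\msetsum K}\varphi$ for all $\baseC\baseGeq\baseB$ and $K$. By~\ref{BeS:ILL:inf} it suffices to take $\baseC\baseGeq\baseB$ and $K$ with $\suppM{\baseC}{\emptymultiset}\Delta$ and $\suppM{\baseC}{K}\Theta$ and deduce $\suppM{\baseC}{L\msetsum K}\varphi$. Decomposing $\suppM{\baseC}{\emptymultiset}\Delta$ repeatedly through~\ref{BeS:ILL:comma} gives $\suppM{\baseC}{\emptymultiset}\delta$ for each $\delta\in\Delta$; Corollary~\ref{cor:bang-pure-necessitation} upgrades each of these to $\suppM{\baseC}{\emptymultiset}\bang\delta$, and reassembling via~\ref{BeS:ILL:comma} yields $\suppM{\baseC}{\emptymultiset}\bang\Delta$. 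Combining this with $\suppM{\baseC}{K}\Theta$ by~\ref{BeS:ILL:comma} (splitting $K = \emptymultiset\msetsum K$) produces $\suppM{\baseC}{K}\Gamma$, and the assumption delivers $\suppM{\baseC}{L\msetsum K}\varphi$.

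For the direction from left to right, assume $\Gamma\suppM{\baseB}{L}\varphi$ and fix $\baseC\baseGeq\baseB$ and $K$ with $\suppM{\baseC}{K}\Gamma$; by~\ref{BeS:ILL:comma} write $K = K_1\msetsum K_2$ with $\suppM{\baseC}{K_1}\bang\Delta$ and $\suppM{\baseC}{K_2}\Theta$, and aim for $\suppM{\baseC}{L\msetsum K_1\msetsum K_2}\varphi$. The key intermediate claim is that $\bang\Delta\suppM{\baseC}{L\msetsum K_2}\varphi$: to prove it, unfold this via the form of~\ref{BeS:ILL:inf} with empty $\Theta$-component, so that it suffices to take $\baseD\baseGeq\baseC$ with $\suppM{\baseD}{\emptymultiset}\Delta$ and show $\suppM{\baseD}{L\msetsum K_2}\varphi$; but $\suppM{\baseC}{K_2}\Theta$ persists to $\baseD$ by Lemma~\ref{lem:monotone-support}, and instantiating~\ref{BeS:ILL:inf} applied to $\Gamma\suppM{\baseB}{L}\varphi$ at the base $\baseD\baseGeq\baseB$ with the multiset $K_2$ gives exactly $\suppM{\baseD}{L\msetsum K_2}\varphi$. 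Having established $\bang\Delta\suppM{\baseC}{L\msetsum K_2}\varphi$, we invoke Corollary~\ref{cor:bang-cut-support} together with $\suppM{\baseC}{K_1}\bang\Delta$ to cut $\bang\Delta$, obtaining $\suppM{\baseC}{K_1\msetsum L\msetsum K_2}\varphi$, i.e.\ $\suppM{\baseC}{L\msetsum K}\varphi$, as required.

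I expect the main obstacle to be this left-to-right direction, and specifically the move of routing through the auxiliary statement $\bang\Delta\suppM{\baseC}{L\msetsum K_2}\varphi$ so that the already-proved cut principle for banged hypotheses (Corollary~\ref{cor:bang-cut-support}) becomes applicable; without this intermediate step the mismatch between ``$\suppM{\baseC}{\emptymultiset}\Delta$'' (what~\ref{BeS:ILL:inf} supplies) and ``$\suppM{\baseC}{K_1}\bang\Delta$'' (what decomposing $\suppM{\baseC}{K}\Gamma$ supplies) cannot be bridged directly, since the former cannot in general be recovered from the latter. A secondary point requiring care is the bookkeeping around degenerate cases: when $\Delta$ is empty the statement is literally the pure-$\Theta$ instance of~\ref{BeS:ILL:inf} and there is nothing to prove, while when $\Theta$ is empty the $\Theta$-components above are simply dropped; both should be dispatched explicitly but quickly.
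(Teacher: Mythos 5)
Your proposal is correct and follows essentially the same route as the paper: the same decomposition $\Gamma=\bang\Delta\msetsum\Theta$, Corollary~\ref{cor:bang-pure-necessitation} to upgrade $\suppM{\baseC}{\emptymultiset}\Delta$ to $\suppM{\baseC}{\emptymultiset}\bang\Delta$ for the right-to-left direction, and the intermediate judgement $\bang\Delta\suppM{\baseC}{L\msetsum K_2}\varphi$ discharged by Corollary~\ref{cor:bang-cut-support} for the left-to-right direction. Your version merely makes explicit (via monotonicity of $\suppM{\baseC}{K_2}\Theta$) an equivalence the paper states more tersely.
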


\begin{proof}
    We begin by supposing we have a partition of $\Gamma$ into formulae with $(\bang)$ as a top level connective, which we denote as $\bang\Delta$, and those which don't, which we write as $\Theta$. Thus $\Gamma = \bang\Delta\msetsum\Theta$.

    Going left to right, we begin by fixing an arbitrary base $\baseC\baseGeq\baseB$ and atomic multiset $K$ such that $\suppM{\baseC}{K}\bang\Delta\msetsum\Theta$, which is to say, there is a partition of $K=M\msetsum N$ such that $\suppM{\baseC}{M}\bang\Delta$ and $\suppM{\baseC}{N}\Theta$. It now suffices to show $\suppM{\baseC}{L\msetsum M\msetsum N}\varphi$. To this end, we consider $\Gamma\suppM{\baseB}{L}\varphi$ which we now write as $\bang\Delta\msetsum\Theta\suppM{\baseB}{L}\varphi$. By~\ref{BeS:ILL:inf}, this is equivalent to considering all bases $\baseX\baseGeq\baseB$ and atomic multisets $Q$ such that if $\suppM{\baseX}{\emptymultiset}\Delta$ and $\suppM{\baseX}{Q}\Theta$ then $\suppM{\baseX}{L\msetsum Q}\varphi$ which itself is equivalent to considering all bases $\baseX\baseGeq\baseB$ and atomic multisets $Q$ such that if $\suppM{\baseX}{Q}\Theta$ then $\bang\Delta\suppM{\baseX}{L\msetsum Q}\varphi$. Considering when $\baseX=\baseC$ and $Q=N$, we obtain that $\bang\Delta\suppM{\baseC}{L\msetsum N}\varphi$. Since we also have by hypothesis that $\suppM{\baseC}{M}\bang\Delta$ then, by Corollary~\ref{cor:bang-cut-support}, we conclude $\suppM{\baseC}{L\msetsum M\msetsum N}\varphi$, as required.

    Going right to left, we start by fixing a base $\baseD\baseGeq\baseB$ and an atomic multiset $M$ such that $\suppM{\baseD}{\emptymultiset}\Delta$ and $\suppM{\baseD}{M}\Theta$. It remains to show that $\suppM{\baseD}{L\msetsum M}\varphi$. Observe that, by Corollary~\ref{cor:bang-pure-necessitation} applied to each element of $\Delta$, we have that $\suppM{\baseD}{\emptymultiset}\bang\Delta$. Thus, we have that $\suppM{\baseD}{M}\bang\Delta\msetsum\Theta$, which is to say, $\suppM{\baseD}{M}\Gamma$. By considering the given implication with $\baseC=\baseD$ and $K=M$ we therefore obtain $\suppM{\baseC}{L\msetsum M}\varphi$ as required.

\end{proof}

Finally, it is worth mentioning the following points related to the multiplicative unit. These are to be expected. We won't make much use of these results in what follows, except for the second point, as this allows us to prove soundness of the weakening rule, though they are nice sanity checks:
\begin{itemize}
    \item By expanding the definition of $\mtop$, one sees that $\suppM{\emptybase}{\emptymultiset}\mtop$ is valid. 
    \item Consequently, it is the case that for any $\varphi$, we have that $\suppM{\baseB}{L}\varphi \text{ iff } \mtop\suppM{\baseB}{L}\varphi$. Going right to left, it holds by Lemma~\ref{lem:mtop-key-lemma}. Going right to left, it holds immediately since we cut with $\suppM{\emptybase}{\emptymultiset}\mtop$.
    \item Finally, we have that if $\suppM{\baseB}{L}\mtop$ and $\suppM{\baseB}{K}\mtop$ hold, then $\suppM{\baseB}{L\msetsum K}\mtop$ also holds, again by cut.
\end{itemize}

We finish this section with following lemma which relates derivations of ($\bang$) to derivations of ($\mtop$).
\begin{lemma}\label{lem:bang-mtop-interplay}
    Given $\suppM{\baseB}{L}\bang\varphi$ then $\suppM{\baseB}{L}\mtop$.
\end{lemma}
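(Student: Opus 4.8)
The plan is to reduce the statement to a ``semantic weakening'' fact for $\bang\varphi$ --- that a formula of the form $\bang\varphi$ can be added harmlessly to any context that already supports an atom --- and then to cut it against the hypothesis $\suppM{\baseB}{L}\bang\varphi$.

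First I would unfold the goal via the clause~\ref{BeS:ILL:mtop}: to establish $\suppM{\baseB}{L}\mtop$ it suffices to fix an arbitrary $\baseC\baseGeq\baseB$, an atomic multiset $K$ and an atom $p$ with $\suppM{\baseB}{K}p$, and to derive $\suppM{\baseC}{L\msetsum K}p$.

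The key step is the observation that $\bang\varphi\suppM{\baseB}{K}p$ already holds. Exactly as in the proof of Lemma~\ref{lem:bang-necessitation}, the clause~\ref{BeS:ILL:inf} makes $\bang\varphi\suppM{\baseB}{K}p$ equivalent to: for every base $\baseX\baseGeq\baseB$, if $\suppM{\baseX}{\emptymultiset}\varphi$ then $\suppM{\baseX}{K}p$. But here the consequent holds outright: from $\suppM{\baseB}{K}p$ and Lemma~\ref{lem:monotone-support} (monotonicity in the base) we get $\suppM{\baseX}{K}p$ for every $\baseX\baseGeq\baseB$, so the implication is vacuously discharged. This is the one point that needs care: support is monotone only in the base, not in the context, so one cannot pass directly from $\suppM{\baseB}{K}p$ to $\varphi\suppM{\baseB}{K}p$ (that would be weakening); it is precisely the modal unfolding of $\bang\varphi$ through~\ref{BeS:ILL:inf}, which turns the ``hypothesis $\varphi$ in the context'' into the dischargeable side-condition $\suppM{\baseX}{\emptymultiset}\varphi$, that licenses the inference. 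Accordingly, I regard this observation as the only real content of the proof; everything else is bookkeeping.

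With $\bang\varphi\suppM{\baseB}{K}p$ in hand, I would apply Lemma~\ref{lem:bang-cut-support-prim} (taking $\psi := p$) to $\suppM{\baseB}{L}\bang\varphi$ and $\bang\varphi\suppM{\baseB}{K}p$, obtaining $\suppM{\baseB}{L\msetsum K}p$ --- alternatively one may appeal directly to the simplified form of the~\ref{BeS:ILL:bang} clause. A final use of Lemma~\ref{lem:monotone-support} promotes this to $\suppM{\baseC}{L\msetsum K}p$, which is exactly what~\ref{BeS:ILL:mtop} required, completing the argument.
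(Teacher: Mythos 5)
Your proof is correct and takes essentially the same route as the paper's: both hinge on the observation that the inner guard ``$\suppM{\baseX}{\emptymultiset}\varphi$ implies $\suppM{\baseX}{K}p$'' in the unfolding of $\bang\varphi$ is discharged vacuously, because its consequent already holds by the hypothesis $\suppM{}{K}p$ together with monotonicity in the base. The paper instantiates the nested~\ref{BeS:ILL:bang} clause directly at the extension $\baseC$ rather than first establishing $\bang\varphi\suppM{\baseB}{K}p$ and then cutting via Lemma~\ref{lem:bang-cut-support-prim}, but this is only a presentational difference.
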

\begin{proof}
    We start by fixing a base $\baseC\baseGeq\baseB$, atomic multiset $K$ and an atom $p$ such that $\suppM{\baseC}{K}p$ and note that must show that $\suppM{\baseC}{L\msetsum K}p$. By hypothesis, and Lemma~\ref{lem:monotone-support}, we have that $\suppM{\baseC}{L}\bang\varphi$. This is equivalent to for all $\baseD\baseGeq\baseC$, atomic multisets $M$ and atoms $q$, if, for all $\baseE\baseGeq\baseD$, $\suppM{\baseE}{\emptymultiset}\varphi$ implies $\suppM{\baseE}{M}q$, then $\suppM{\baseD}{L\msetsum M}q$. We note that in the case when $\baseE=\baseD=\baseC$,  $M=K$ and $q=p$ we have that $\suppM{\baseC}{\emptymultiset}\varphi$ implies $\suppM{\baseC}{M}p$ since the conclusion holds by hypothesis. Thus, we obtain $\suppM{\baseC}{L\msetsum M}p$, as required.
\end{proof}

We are now ready to prove the main results of this paper, that this semantics is indeed sound and complete for \ILL{}.

\section{Soundness}\label{sec:Soundness} 
\begin{theorem}[Soundness]\label{thm:Soundness}
    If $\Gamma \provesILL \varphi$ then $\Gamma \suppM{}{}\varphi$.
\end{theorem}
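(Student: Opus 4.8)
The plan is to induct on the height of the $\rm N_{ILL}$-derivation witnessing $\Gamma\provesILL\varphi$. Since $\provesILL$ is generated inductively by the rules of Figure~\ref{fig:NatDedILL}, it suffices to show that each rule preserves validity, i.e.\ that whenever the premise sequents are valid the conclusion sequent is valid; recall that ``$(\Gamma:\varphi)$ is valid'' unfolds, via~\ref{BeS:ILL:gen-inf} (Theorem~\ref{thm:cut-support}), to: for every base $\baseC$ and every atomic multiset $K$, if $\suppM{\baseC}{K}\Gamma$ then $\suppM{\baseC}{K}\varphi$. So in each case I would fix such a $\baseC$ and $K$, use~\ref{BeS:ILL:comma} to split $K$ according to the multiset structure of the conclusion's context, feed the pieces into the unfolded inductive hypotheses, and recombine. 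Monotonicity in the base (Lemma~\ref{lem:monotone-support}) is used freely to pass to extensions. The base case $\varphi\provesILL\varphi$ is immediate, since $\suppM{\baseC}{K}\varphi$ trivially implies $\suppM{\baseC}{K}\varphi$.

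For the non-modal logical rules this is routine once one reads off the matching clause of Definition~\ref{def:support}. The cases $\irn\mto,\ern\mto$ reduce to the clause $(\mto)$ together with~\ref{BeS:ILL:gen-inf} and~\ref{BeS:ILL:comma}; $\irn\mand$ uses the clause $(\mand)$ after splitting the context resource into the parts supporting the two premises; $\ern\mand$ is precisely Lemma~\ref{lem:mand-key-lemma} applied to the unfolded premises; $\irn\mtop$ is immediate and $\ern\mtop$ is Lemma~\ref{lem:mtop-key-lemma}; $\irn\aand$ and the two $\ern\aand$ rules use the clause $(\aand)$ directly --- and here it is essential that the shared context $\Gamma$ receives the \emph{same} resource $K$ in both premises of $\irn\aand$, which is exactly where additive behaviour departs from the multiplicative splitting; $\irn{\aor_1},\irn{\aor_2}$ use $(\aor)$ and $\ern\aor$ is Lemma~\ref{lem:aor-key-lemma}; $\irn\top$ is trivial as $\suppM{\baseB}{L}\top$ always holds; and $\ern\abot$ uses the clause $(\abot)$, with the leftover context resources of the side premises absorbed by ex falso, together with a short sub-induction on the conclusion formula $\chi$ to pass from atoms to arbitrary formulas.

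The exponential rules are the substantive part, and all four route through the $\bang$-lemmas of Section~\ref{sec:BeS}. For $\promotion_n$, the inductive hypotheses give $\bang\psi_1\msetsum\dots\msetsum\bang\psi_n\suppM{}{}\varphi$ and $\Gamma_i\suppM{}{}\bang\psi_i$ for each $i$; Corollary~\ref{cor:bang-promotion} upgrades the former to $\bang\psi_1\msetsum\dots\msetsum\bang\psi_n\suppM{}{}\bang\varphi$, and one then composes with the $n$ judgements $\Gamma_i\suppM{}{}\bang\psi_i$ by splitting the conclusion's context resource into $n$ parts (via~\ref{BeS:ILL:comma}) and chaining~\ref{BeS:ILL:gen-inf}, giving $\Gamma_1\msetsum\dots\msetsum\Gamma_n\suppM{}{}\bang\varphi$. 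For $\dereliction$, from the hypotheses $\Gamma\suppM{}{}\bang\varphi$ and $\Delta\msetsum\varphi\suppM{}{}\psi$, one splits the context resource as $K_1\msetsum K_2$, uses the $\Delta$-part of the second hypothesis to obtain $\varphi\suppM{\baseC}{K_2}\psi$, promotes this to $\bang\varphi\suppM{\baseC}{K_2}\psi$ by Lemma~\ref{lem:bang-dereliction}, and cuts against $\suppM{\baseC}{K_1}\bang\varphi$ using Corollary~\ref{cor:bang-cut-support}. For $\weakening$, from $\Gamma\suppM{}{}\bang\varphi$ and $\Delta\suppM{}{}\psi$, Lemma~\ref{lem:bang-mtop-interplay} converts the supported $\bang\varphi$ into a supported $\mtop$ on the same resource, which Lemma~\ref{lem:mtop-key-lemma} then absorbs. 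For $\contraction$, the key observation is that under the~\ref{BeS:ILL:inf} clause a pair of identical $\bang$-hypotheses on the empty resource collapses to one --- since by~\ref{BeS:ILL:comma} the statement $\suppM{\baseC}{\emptymultiset}\{\varphi,\varphi\}$ is equivalent to $\suppM{\baseC}{\emptymultiset}\varphi$ --- so $\Delta\msetsum\bang\varphi\msetsum\bang\varphi\suppM{}{}\psi$ collapses to $\Delta\msetsum\bang\varphi\suppM{}{}\psi$ and the $\dereliction$ argument applies verbatim.

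The proof has no single hard step; the difficulty is diffuse and amounts to getting the resource-multiset bookkeeping right everywhere --- in particular, keeping straight where~\ref{BeS:ILL:comma} forces a split of the context resource (all the multiplicative rules, and the left premise of every exponential rule) versus where the resource must be shared unchanged (the additive rule $\irn\aand$). The subtlest points are the exponential rules, where one must invoke exactly the right $\bang$-lemma: that $\weakening$ and $\contraction$ come out sound is, respectively, Lemma~\ref{lem:bang-mtop-interplay} and the collapse of duplicated $\bang$-hypotheses under~\ref{BeS:ILL:inf}, while $\promotion_n$ is sound only because Corollary~\ref{cor:bang-promotion} has already been established; the heaviest bookkeeping is the $n$-fold context split in the $\promotion_n$ case.
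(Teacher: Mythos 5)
Your proposal is correct and follows essentially the same route as the paper: induction on the derivation, with each rule handled by the matching support clause or by the corresponding key lemma (Lemmas~\ref{lem:mand-key-lemma},~\ref{lem:mtop-key-lemma},~\ref{lem:aor-key-lemma} for the eliminations, and Lemma~\ref{lem:bang-dereliction}, Corollaries~\ref{cor:bang-promotion} and~\ref{cor:bang-cut-support}, and Lemma~\ref{lem:bang-mtop-interplay} for the exponentials), including the same collapse of the duplicated $\bang\varphi$ hypothesis under~\ref{BeS:ILL:inf} in the $\contraction$ case. The only cosmetic difference is that you unfold validity via~\ref{BeS:ILL:gen-inf} where the paper works directly with~\ref{BeS:ILL:inf}, which Theorem~\ref{thm:cut-support} renders immaterial.
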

\begin{proof}
    By the inductive definition of $\provesILL$, it suffices to prove the following: 
    \begin{description}
        \item[Ax\label{eq:soundness-axiom}] $\varphi \suppM{}{} \varphi$ 
        \item[$\mto$I\label{eq:soundness-implication-intro}] If $\Gamma\msetsum \varphi \suppM{}{} \psi$ then $\Gamma \suppM{}{} \varphi \mto \psi$. 
        \item[$\mto$E\label{eq:soundness-implication-elim}] If $\Gamma \suppM{}{} \varphi \mto \psi$ and $\Delta \suppM{}{} \varphi$ then $\Gamma\msetsum \Delta \suppM{}{} \psi$. 
        \item[$\mand$I\label{eq:soundness-mult-conjunction-intro}] If $\Gamma \suppM{}{} \varphi$ and $\Delta \suppM{}{} \psi$ then $\Gamma\msetsum \Delta \suppM{}{} \varphi \mand \psi$. 
        \item[$\mand$E\label{eq:soundness-mult-conjunction-elim}] If $\Gamma \suppM{}{} \varphi \mand \psi$ and $\Delta\msetsum \varphi\msetsum \psi \suppM{}{} \chi$ then $\Gamma\msetsum \Delta \suppM{}{} \chi$. 
        \item[$\mtop$I\label{eq:soundness-mtop-intro}] $\suppM{}{} \mtop$
        \item[$\mtop$E\label{eq:soundness-mtop-elim}] If $\Gamma \suppM{}{} \mtop$ and $\Delta \suppM{}{} \varphi$ then $\Gamma\msetsum \Delta \suppM{}{} \varphi$.
        \item[$\aand$I\label{eq:soundness-add-conjunction-intro}] If $\Gamma \suppM{}{} \varphi$ and $\Gamma \suppM{}{} \psi$ then $\Gamma \suppM{}{} \varphi \aand \psi$.
        \item[$\aand$E\label{eq:soundness-add-conjunction-elim}] If $\Gamma \suppM{}{} \varphi \aand \psi$ then $\Gamma \suppM{}{} \varphi$ and $\Gamma \suppM{}{} \psi$.
        \item[$\aor$I\label{eq:soundness-add-disjunction-intro}] If $\Gamma \suppM{}{} \varphi$ or $\Gamma \suppM{}{} \psi$ then $\Gamma \suppM{}{} \varphi \aor \psi$.
        \item[$\aor$E\label{eq:soundness-add-disjunction-elim}] If $\Gamma \suppM{}{} \varphi \aor \psi$ and $\Delta\msetsum \varphi \suppM{}{} \chi$ and $\Delta\msetsum \psi \suppM{}{} \chi$ then $\Gamma\msetsum \Delta \suppM{}{} \chi$.
        \item[$\top$I\label{eq:soundness-atop-intro}] $\Gamma\suppM{}{}\top$, for any $\Gamma$.
        \item[$\abot$E\label{eq:soundness-abot-elim}] If $\Delta \suppM{}{} \abot$ then $\Gamma\msetsum\Delta \suppM{}{} \chi$, for any $\Gamma$. 
        \item[Promotion\label{eq:soundness-bang-promotion}] If $\Gamma_1 \suppM{}{} \bang\psi_1,\dots,\Gamma_n \suppM{}{} \bang\psi_n$ and $\bang\psi_1\msetsum \dots\msetsum \bang\psi_n\suppM{}{} \varphi$ then $\Gamma_1\msetsum \dots\msetsum \Gamma_n \suppM{}{} \bang \varphi$.
        \item[Dereliction\label{eq:soundness-bang-dereliction}] If $\Gamma \suppM{}{} \bang\varphi$, and $\Delta,\varphi \suppM{}{} \psi$ then $\Gamma\msetsum \Delta \suppM{}{} \psi$ holds.
        \item[Weakening\label{eq:soundness-bang-weakening}] If $\Gamma \suppM{}{} \bang\varphi$ and $\Delta \suppM{}{} \psi$ then $\Gamma\msetsum \Delta \suppM{}{} \psi$.
        \item[Contraction\label{eq:soundness-bang-contraction}] If $\Gamma \suppM{}{} \bang\varphi$ and $\Delta\msetsum \bang\varphi\msetsum \bang\varphi \suppM{}{} \psi$ then $\Gamma\msetsum \Delta \suppM{}{} \psi$.
    \end{description}
    We now proceed through the cases, noting that~\eqref{eq:soundness-bang-promotion},~\eqref{eq:soundness-atop-intro} and~\eqref{eq:soundness-abot-elim} hold for all $n\geq 0$.

    \begin{itemize}[label={-}]
        \item~\eqref{eq:soundness-axiom} This case is immediate.
        \item~\eqref{eq:soundness-implication-intro} We suppose $\Gamma\msetsum\varphi\suppM{}{}\psi$ and want to show $\Gamma\suppM{}{}\varphi\mto\psi$. To this end, it suffices to show that for all $\baseB$ and atomic multisets $L$ such that $\suppM{\baseB}{L}\Gamma$ and we have that $\varphi\suppM{\baseB}{L}\psi$ implies $\suppM{\baseB}{L}\varphi\mto\psi$. This follows immediately by~\ref{BeS:ILL:mto}.
        \item~\eqref{eq:soundness-implication-elim} We suppose that $\Gamma \suppM{}{} \varphi \mto \psi$ and $\Delta \suppM{}{} \varphi$ and want to show that $\Gamma\msetsum \Delta \suppM{}{} \psi$. It suffices to show that if for some $\baseB$ and atomic multisets $L$ and $K$ such that $\suppM{\baseB}{L}\Gamma$ and $\suppM{\baseB}{K}\Delta$ then $\suppM{\baseB}{L}\varphi\mto\psi$ and $\suppM{\baseB}{K}\varphi$ imply $\suppM{\baseB}{L\msetsum K}\psi$. To this end, we know that $\suppM{\baseB}{L}\varphi\mto\psi$ is equivalent to $\varphi\suppM{\baseB}{L}\psi$ by~\ref{BeS:ILL:mto} so we expand it by~\ref{BeS:ILL:inf} to get that for all $\baseC\baseGeq\baseB$ and atomic multisets $M$ such that $\suppM{\baseC}{M}\varphi$ implies $\suppM{\baseC}{L\msetsum M}\psi$. Since we have by hypothesis that $\suppM{\baseB}{K}\varphi$, we consider this implication under the assignments $\baseC=\baseB$ and $M=K$ to conclude $\suppM{\baseB}{L\msetsum K}\psi$, as required.
        \item~\eqref{eq:soundness-mult-conjunction-intro} We suppose that $\Gamma \suppM{}{} \varphi$ and $\Delta \suppM{}{} \psi$ and want to show that $\Gamma\msetsum \Delta \suppM{}{} \varphi \mand \psi$. It suffices to show that, if, for some $\baseB$ and atomic multisets $L$ and $K$ such that $\suppM{\baseB}{L}\Gamma$ and $\suppM{\baseB}{K}\Delta$, that $\suppM{\baseB}{L}\varphi$ and $\suppM{\baseB}{K}\psi$ imply $\suppM{\baseB}{L\msetsum K}\varphi\mand\psi$. To show $\suppM{\baseB}{L\msetsum K}\varphi\mand\psi$, by~\ref{BeS:ILL:mand}, we further suppose that for all $\baseC\baseGeq\baseB$, atomic multisets $M$ and atoms $p$, we have that $\varphi\msetsum\psi\suppM{\baseC}{M}p$. Our goal is to show $\suppM{\baseC}{L\msetsum K\msetsum M}p$. We note that by Lemma~\ref{lem:monotone-support} we have that $\suppM{\baseC}{L}\varphi$ and $\suppM{\baseC}{K}\psi$. By~\ref{BeS:ILL:inf}, we have that $\varphi\msetsum\psi\suppM{\baseC}{M}p$ is equivalent to considering all bases $\baseD\baseGeq\baseC$ and atomic multisets $N$ and $P$ such that $\suppM{\baseD}{N}\varphi$ and $\suppM{\baseD}{P}\psi$ imply $\suppM{\baseD}{N\msetsum P\msetsum M}p$. Since we have that $\suppM{\baseC}{L}\varphi$ and $\suppM{\baseC}{K}\psi$, then we consider this implication under the assignments $\baseD=\baseC$, $N=L$ and $P=K$ to get that $\suppM{\baseC}{L\msetsum K\msetsum M}p$, as required.
        \item~\eqref{eq:soundness-mult-conjunction-elim} We suppose that $\Gamma \suppM{}{} \varphi \mand \psi$ and $\Delta\msetsum \varphi\msetsum \psi \suppM{}{} \chi$ and want to show that $\Gamma\msetsum \Delta \suppM{}{} \chi$. It suffices to show that, if, for some $\baseB$ and atomic multisets $L$ and $K$ such that $\suppM{\baseB}{L}\Gamma$ and $\suppM{\baseB}{K}\Delta$, that $\suppM{\baseB}{L}\varphi\mand\psi$ and $\varphi\msetsum\psi\suppM{\baseB}{K}\chi$ imply $\suppM{\baseB}{L\msetsum K}\chi$. This follows immediately by Lemma~\ref{lem:mand-key-lemma}.
        \item~\eqref{eq:soundness-mtop-intro} We want to show that $\suppM{}{}\mtop$. This follows immediately by~\ref{BeS:ILL:mtop}.
        \item~\eqref{eq:soundness-mtop-elim} We suppose that $\Gamma \suppM{}{} \mtop$ and $\Delta \suppM{}{} \varphi$ and want to show that $\Gamma\msetsum \Delta \suppM{}{} \varphi$. It suffices to show that, if, for some $\baseB$ and atomic multisets $L$ and $K$ such that $\suppM{\baseB}{L}\Gamma$ and $\suppM{\baseB}{K}\Delta$, that $\suppM{\baseB}{L}\mtop$ and $\suppM{\baseB}{K}\varphi$ imply $\suppM{\baseB}{L\msetsum K}\varphi$. This follows immediately by Lemma~\ref{lem:mtop-key-lemma}.
        \item~\eqref{eq:soundness-add-conjunction-intro} We assume $\Gamma \suppM{}{} \varphi$ and $\Gamma \suppM{}{} \psi$. Fix $\baseB$ and $L$ such that $\suppM{\baseB}{L} \Gamma$. Thus, by (Inf), we have that $\suppM{\baseB}{L} \varphi$ and $\suppM{\baseB}{L} \psi$. Thus by ($\aand$) we have $\suppM{\baseB}{L} \varphi \aand \psi$, and thus by (Inf) we conclude $\Gamma \suppM{}{} \varphi \aand \psi$.
        \item~\eqref{eq:soundness-add-conjunction-elim} We assume $\Gamma \suppM{}{} \varphi \aand \psi$. Fix $\baseB$ and $L$ such that $\suppM{\baseB}{L}\Gamma$, we then have by (Inf) that $\suppM{\baseB}{L}\varphi \aand \psi$. By ($\aand$) we thus get that $\suppM{\baseB}{L} \varphi$ and $\suppM{\baseB}{L} \psi$, which by (Inf) gives $\Gamma \suppM{}{} \varphi$ and $\Gamma \suppM{}{} \psi$, as required. 
        \item~\eqref{eq:soundness-add-disjunction-intro} We assume $\Gamma \suppM{}{} \varphi$ or $\Gamma \suppM{}{} \psi$ holds. Fix $\baseB$ and $L$ such that $\suppM{\baseB}{L} \Gamma$ holds. Thus, by (Inf) we have that $\suppM{\baseB}{L} \varphi$ or $\suppM{\baseB}{L} \psi$ hold. Thus by ($\aor$), we have that $\suppM{\baseB}{L} \varphi \aor \psi$ holds. Thus by (Inf) we conclude $\Gamma \suppM{}{} \varphi \aor \psi$, as required.
        \item~\eqref{eq:soundness-add-disjunction-elim} We suppose $\Gamma \suppM{}{} \varphi \aor \psi$ and that both $\Delta, \varphi \suppM{}{} \chi$ and $\Delta, \psi \suppM{}{} \chi$ hold and want to show that $\Gamma, \Delta \suppM{}{}\chi$. By (Inf), it suffices to show that given:\\ 
        \begin{enumerate*}
            \item $\suppM{\baseB}{L}\varphi\aor\psi$\hspace{5pt}
            \item $\varphi\suppM{\baseB}{K}\chi$\hspace{5pt}
            \item $\psi\suppM{\baseB}{K}\chi$\hspace{5pt}
        \end{enumerate*}
        then $\suppM{\baseB}{L\msetsum K}\chi$ holds. This is immediate by Lemma~\ref{lem:aor-key-lemma}.
        \item~\eqref{eq:soundness-atop-intro} The conclusion follows immediately by ($\top$).
        \item~\eqref{eq:soundness-abot-elim} Start by fixing an arbitrary base $\baseB$ and atomic multisets $L$ and $K$ and multiset $\Gamma$ such that $\suppM{\baseB}{K}\Gamma$ and $\suppM{\baseB}{L}\Delta$. Thus it follows that $\suppM{\baseB}{L}\abot$. It now suffices to show that $\suppM{\baseB}{L\msetsum K}\chi$. This follows immediately by induction over the structure of $\chi$.
        \item~\eqref{eq:soundness-bang-promotion} We start by fixing an arbitrary $n \geq 0$. Then, by Corollary~\ref{cor:bang-promotion}, we have that the second hypothesis gives that $\bang\psi_1\msetsum \dots\msetsum \bang\psi_n\suppM{}{} \bang\varphi$. Now, by (Inf), if we consider all bases $\baseB$ and multisets $K_i$ such that $\suppM{\baseB}{K_i}\Gamma_i$ and say $K=K_1\msetsum\dots\msetsum K_n$, then it follows that $\suppM{\baseB}{K_i}\bang \psi_i$. Thus, by Corollary~\ref{cor:bang-cut-support}, we obtain that $\suppM{\baseB}{K}\bang\varphi$, which by (Inf) gives our desired result $\Gamma_1\msetsum\dots\msetsum\Gamma_n\suppM{}{}\bang\varphi$. 
        \item~\eqref{eq:soundness-bang-dereliction} It suffices to show that given $\suppM{\baseB}{L}\bang \varphi$ and $\varphi\suppM{\baseB}{K}\psi$ that $\suppM{\baseB}{L\msetsum K}\psi$ holds. We know that the second hypothesis, by Lemma~\ref{lem:bang-dereliction}, implies that $\bang\varphi\suppM{\baseB}{K}\psi$. This, together with the hypothesis that $\suppM{\baseB}{L}\bang \varphi$, by Lemma~\ref{lem:bang-cut-support-prim}, gives $\suppM{\baseB}{L\msetsum K}\psi$, as required.
        \item~\eqref{eq:soundness-bang-weakening} It suffices to show, given $\suppM{\baseB}{L}\bang\varphi$ and $\suppM{\baseB}{K}\psi$ that $\suppM{\baseB}{L\msetsum K}\psi$. By Lemma~\ref{lem:bang-mtop-interplay}, we have that the first hypothesis implies $\suppM{\baseB}{L}\mtop$. Similarly, the second hypothesis implies $\mtop\suppM{\baseB}{K}\psi$. Thus, we conclude that $\suppM{\baseB}{L\msetsum K}\psi$, as required.
        \item~\eqref{eq:soundness-bang-contraction} It suffices to show that given $\suppM{\baseB}{L}\bang\varphi$ and $\bang\varphi\msetsum\bang\varphi\suppM{\baseB}{K}\psi$ we can obtain $\suppM{\baseB}{L\msetsum K}\psi$. By (Inf), we observe that the second hypothesis is equivalent to $\bang\varphi\suppM{\baseB}{K}\psi$. From here, by Corollary~\ref{cor:bang-cut-support}, we obtain $\suppM{\baseB}{L\msetsum K}\psi$, as required.
    \end{itemize}
    This completes the proof of all items. 
\end{proof}

\section{Completeness}\label{sec:Completeness}
We now show that, given an arbitrary valid sequent $(\Gamma : \varphi)$ in our semantics, there exists a valid $\rm N_{ILL}$ proof of it. To this end, we will construct a special base, called $\baseILL$, whose rules will mimic the rules of the natural deduction system $\rm N_{ILL}$, with basic sentences ``simulating'' the subformulae of the arbitrary valid sequent. The key step will be to then show that derivations in $\baseILL$ directly correspond to natural deduction derivations of the formulae being simulated. Thus, we show that $(\Gamma : \varphi)$ is provable in $\rm N_{ILL}$ by, effectively, constructing the proof.

Let us begin by fixing an arbitrary sequent $\mathfrak{S}=(\Gamma:\varphi)$ and let $\Xi$ be the set of subformulae of the sequent $\mathfrak{S}$. That is to say, $\Xi$ is the union of the subformulae of each element of $\Gamma$ and $\varphi$. We additionally fix an injection $\flatILL{(\cdot)}:\Xi\rightarrow\At$, called the flattening map, such that:
\begin{itemize}
    \item It is the identity map on atoms and the units $\top$, $\abot$ and $\mtop$.
    \item For non-atomic formulae, $\phi$, it assigns an atom $p$ where $p\notin\Xi$ and for all $\alpha,\beta\in \Xi$, if $\alpha\neq\beta$ then $\flatILL{\alpha}\neq\flatILL{\beta}$.
\end{itemize}
Such a map has a left inverse, $\deflatILL{(\cdot)}$ defined similarly as:
\begin{itemize}
    \item The identity map on the  units $\top$, $\abot$ and $\mtop$ and on atoms not in the image of $\flatILL{(\cdot)}$.
    \item The original formula, i.e. $\deflatILL{(\flatILL{(\varphi)})} = \varphi$.
\end{itemize}
We further define these functions to be distributing over mulitsets; that is, given multisets $\Gamma=\{\gamma_1,\dots,\gamma_n\}$ and $P=\{p_1,\dots,p_n\}$ then $\flatILL{\Gamma} = \{\flatILL{\gamma_1},\dots,\flatILL{\gamma_n}\}$ and $\deflatILL{P} = \{\deflatILL{p_1},\dots,\deflatILL{p_n}\}$. We now define the simulation base $\baseILL$ relative to $\Xi$ and $\flatILL{(\cdot)}$ according to the rules of Figure~\ref{fig:SimulationBase} where $\varphi$ and $\psi$ range over all elements of $\Xi$ and $p$ ranges over all atoms $\At$. 

\begin{figure}[t]\captionsetup{justification=centering,font=small}
    \hrule \vspace{2mm}
    \small\[{
    \begin{array}{c@{\quad}c} 
    \infer[\irn \mto]{\phi\mto\psi}{\left\openaddrule\raisebox{-0.8em}{\deduce{\flatILL{\psi}}{[\flatILL{\phi}]}}\right\closeaddrule} & \infer[\ern \mto]{\flatILL{\psi}}{\openaddrule\flatILL{(\phi\mto\psi)}\closeaddrule & \openaddrule\flatILL{\phi}\closeaddrule} \\[3mm]
    \infer[\irn \mand]{\flatILL{(\phi\mand\psi)}}{\openaddrule\flatILL{\phi}\closeaddrule & \openaddrule\flatILL{\psi}\closeaddrule} & \infer[\ern \mand]{p}{\openaddrule\flatILL{\phi\mand\psi}\closeaddrule & \left\openaddrule\raisebox{-0.5em}{\deduce{p}{[\flatILL{\phi\msetsum\psi}]}}\right\closeaddrule}\\[3mm]
    \infer[\irn \mtop]{\flatILL{\mtop}}{} & \infer[\ern \mtop]{\flatILL{\phi}}{\openaddrule\flatILL{\mtop}\closeaddrule & \openaddrule\flatILL{\phi}\closeaddrule}\\[3mm]
    \infer[\irn \aand]{\flatILL{(\phi\aand\psi)}}{\openaddrule \flatILL{\phi} & \flatILL{\psi} \closeaddrule} & \infer[\ern{\aand_1}]{\flatILL{\phi}}{\openaddrule\flatILL{(\phi\aand\psi)}\closeaddrule} \quad \infer[\ern{\aand_2}]{\flatILL{\psi}}{\openaddrule\flatILL{(\phi\aand\psi)}\closeaddrule}\\[3mm]
    \infer[\irn{\aor_1}]{\flatILL{(\phi\aor\psi)}}{\openaddrule\flatILL{\phi}\closeaddrule} \quad \infer[\irn{\aor_2}]{\flatILL{(\phi\aor\psi)}}{\openaddrule\flatILL{\psi}\closeaddrule}  & \infer[\ern \aor]{p}{\openaddrule\flatILL{\phi\aor\psi}\closeaddrule & \left\{\raisebox{-0.5em}{ \deduce{p}{[\flatILL{\phi}]} \quad \deduce{p}{[\flatILL{\psi}]}} \right\}}\\[3mm]
    \infer[\irn \top]{\flatILL{\top}}{\openaddrule\emptymultiset\closeaddrule} & \infer[\ern \abot]{p}{\openaddrule\emptymultiset\closeaddrule & \openaddrule\flatILL{\abot}\closeaddrule}\\[3mm]
    \infer[\promotion]{\flatILL{\bang\phi}}{\llbracket\flatILL{\varphi}\rrbracket} & \infer[\dereliction]{\flatILL{\psi}}{\openaddrule\flatILL{(\bang\phi)}\closeaddrule & \left\openaddrule\raisebox{-0.8em}{\deduce{\flatILL{\psi}}{[\flatILL{\phi}]}}\right\closeaddrule}\\[3mm]
    \infer[\weakening]{\flatILL{\psi}}{\openaddrule\flatILL{(\bang\phi)}\closeaddrule & \openaddrule\flatILL{\psi}\closeaddrule} & \infer[\contraction]{\flatILL{\psi}}{\openaddrule\flatILL{(\bang\phi)}\closeaddrule & \left\openaddrule\raisebox{-0.8em}{\deduce{\flatILL{\psi}}{[\flatILL{(\bang\phi)}\msetsum\flatILL{(\bang\phi)}]}}\right\closeaddrule}\\[3mm]
    \end{array}}
    \] 
    \vspace{-1mm}
    
    \caption{The simulation base $\baseILL$.} \vspace{2mm}
    \hrule
    \label{fig:SimulationBase}
\end{figure}
As a result of our definition of $\baseILL$, we note that the only persistent atoms of $\baseILL$ are those atoms $\flatILL{(\bang\varphi)}$ where $\bang\varphi\in\Xi$. We now begin by proving completeness, making use of lemmas that we will prove later in this section.
\begin{theorem}[Completeness]\label{thm:completeness}
    If $\Gamma\suppM{}{}\varphi$ then $\Gamma\provesILL\varphi$.
\end{theorem}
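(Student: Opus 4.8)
The plan is to follow the template of Sandqvist's completeness proof for the base-extension semantics of IPL, carried out now in a resource-sensitive way. We work with the simulation base $\baseILL$ already fixed in Figure~\ref{fig:SimulationBase} relative to the set $\Xi$ of subformulae of the target sequent $(\Gamma:\varphi)$, the flattening injection $\flatILL{(\cdot)}\colon\Xi\to\At$, and its left inverse $\deflatILL{(\cdot)}$. The whole argument rests on two lemmas, to be proved in the rest of the section. The first, a \emph{simulation} (behavioural equivalence) lemma, says that for every $\chi\in\Xi$, every base $\baseC\baseGeq\baseILL$, and every atomic multiset $S$ we have $\suppM{\baseC}{S}\chi$ if and only if $S\deriveBaseM{\baseC}\flatILL{\chi}$; informally, the simulating atom $\flatILL{\chi}$ is interchangeable with $\chi$ throughout every extension of $\baseILL$. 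The second, a \emph{decoding} lemma, says that every atomic derivation in $\baseILL$ is the flattening of an $\rm N_{ILL}$ derivation: if $S\deriveBaseM{\baseILL}p$ then $\deflatILL{S}\provesILL\deflatILL{p}$.

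Granting these, the theorem follows quickly. Suppose $\Gamma\suppM{}{}\varphi$. By Lemma~\ref{lem:ILL-validity} and Lemma~\ref{lem:monotone-support} we obtain $\Gamma\suppM{\baseILL}{\emptymultiset}\varphi$, which by Theorem~\ref{thm:cut-support} unfolds to: $\suppM{\baseC}{K}\Gamma$ implies $\suppM{\baseC}{K}\varphi$ for every $\baseC\baseGeq\baseILL$ and atomic multiset $K$. For each $\gamma\in\Gamma$ we have $\flatILL{\gamma}\deriveBaseM{\baseILL}\flatILL{\gamma}$ by \eqref{eq:derive-ref}, hence $\suppM{\baseILL}{\flatILL{\gamma}}\gamma$ by the simulation lemma, and hence $\suppM{\baseILL}{\flatILL{\Gamma}}\Gamma$ by \ref{BeS:ILL:comma} (using that $\flatILL{(\cdot)}$ distributes over multisets). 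Instantiating the implication above at $\baseC=\baseILL$ and $K=\flatILL{\Gamma}$ gives $\suppM{\baseILL}{\flatILL{\Gamma}}\varphi$, so by the simulation lemma $\flatILL{\Gamma}\deriveBaseM{\baseILL}\flatILL{\varphi}$, and then by the decoding lemma $\Gamma=\deflatILL{\flatILL{\Gamma}}\provesILL\deflatILL{\flatILL{\varphi}}=\varphi$, as required.

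The simulation lemma would be proved by induction on the degree of $\chi$. The atomic and constant cases $p,\top,\mtop,\abot$ are settled directly by \ref{BeS:ILL:at}, the matching clauses of Figure~\ref{fig:ILL:support}, and the rules $\flatILL{\irn\top}$ and $\flatILL{\ern\abot}$ of $\baseILL$ (for $\abot$ one exploits that the clause $(\abot)$ quantifies over all atoms and all atomic multisets). For a compound $\chi$, each direction is obtained by pairing the relevant support clause with an application of the corresponding $\baseILL$ rule, together with the induction hypothesis for the immediate subformulae, cut admissibility (Lemma~\ref{lem:atomic-cut}), monotonicity, and \ref{BeS:ILL:at}; for instance, for $\chi=\alpha\mand\beta$ the forward direction threads $\flatILL{\irn\mand}$ through \ref{BeS:ILL:mand} and \ref{BeS:ILL:inf}, while the backward direction uses $\flatILL{\ern\mand}$, and the additive cases $\aand,\aor$ and the cases $\mto,\mtop$ are entirely analogous. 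The decoding lemma would be proved by induction on the structure of the atomic derivation $S\deriveBaseM{\baseILL}p$: a use of \eqref{eq:derive-ref} maps to $\rm Ax$, and a use of \eqref{eq:derive-app} with a rule of Figure~\ref{fig:SimulationBase} maps, under $\deflatILL{(\cdot)}$, to the corresponding rule of $\rm N_{ILL}$ applied to the decoded subderivations; here one uses that $\deflatILL{(\cdot)}$ is a left inverse of $\flatILL{(\cdot)}$ distributing over multisets and, crucially, that the only persistent atoms of $\baseILL$ are the $\flatILL{(\bang\psi)}$ with $\psi\in\Xi$, so the multiset $D$ of persistent atoms occurring in an application of $\flatILL{\promotion}$ decodes to a multiset of $(\bang)$-formulae, exactly matching the side condition of $\promotion_n$.

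The hard part will be the $(\bang)$ case of the simulation lemma. The direction $\suppM{\baseC}{L}\bang\phi\Rightarrow L\deriveBaseM{\baseC}\flatILL{(\bang\phi)}$ is manageable: unfolding \ref{BeS:ILL:bang} and \ref{BeS:ILL:inf} and using the induction hypothesis converts $\suppM{\baseE}{\emptymultiset}\phi$ into $\deriveBaseM{\baseE}\flatILL{\phi}$, which feeds $\flatILL{\promotion}$ with the \emph{empty} multiset of persistent atoms — precisely the reason the $(\mbox{Inf})$ and $(\bang)$ clauses insist on the empty index on the modal hypothesis. The converse, $L\deriveBaseM{\baseC}\flatILL{(\bang\phi)}\Rightarrow\suppM{\baseC}{L}\bang\phi$, is the delicate step: after unfolding \ref{BeS:ILL:bang} one must, for arbitrary $\baseD\baseGeq\baseC$, atomic $K$, and atom $p$ with $\bang\phi\suppM{\baseD}{K}p$, produce $L\msetsum K\deriveBaseM{\baseD}p$; by \ref{BeS:ILL:inf} and the induction hypothesis the hypothesis $\bang\phi\suppM{\baseD}{K}p$ becomes ``$\deriveBaseM{\baseE}\flatILL{\phi}$ implies $K\deriveBaseM{\baseE}p$ for every $\baseE\baseGeq\baseD$''. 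I would instantiate this at the base $\baseD$ enlarged with the auxiliary rule $\langle\emptymultiset,\emptymultiset,\flatILL{\phi}\rangle$, which makes $\flatILL{\phi}$ derivable from nothing, to obtain a derivation of $p$ from $K$ over the enlarged base, and then transport it back to $\baseD$ by replacing each use of the auxiliary rule with an application of $\flatILL{\dereliction}$ fed an extra occurrence of the persistent atom $\flatILL{(\bang\phi)}$, finally collapsing all the accumulated occurrences of $\flatILL{(\bang\phi)}$ down to the single one supplied by $L\deriveBaseM{\baseD}\flatILL{(\bang\phi)}$ via $\flatILL{\weakening}$ and $\flatILL{\contraction}$. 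This nested induction with its resource bookkeeping is the genuinely hard point; everything else is routine clause-chasing of the kind already seen in the soundness proof.
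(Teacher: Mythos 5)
Your proposal is correct and follows essentially the same route as the paper: the same simulation base $\baseILL$, the same two pillars (your simulation lemma is the paper's Lemma~\ref{lem:completeness-atomic-simulation} combined with Lemma~\ref{lem:atomic-sound-and-complete}, and your decoding lemma is Lemma~\ref{lem:completeness-atomic-deflat}), and the same key device for the $(\bang)$ case, namely extending the base with the axiom $\langle\emptymultiset,\emptymultiset,\flatILL{\varphi}\rangle$ and transporting the derivation back via $\flatILL{\dereliction}$, $\flatILL{\weakening}$ and $\flatILL{\contraction}$, which is exactly Lemma~\ref{lem:completeness-structural-cut}. The only cosmetic difference is that you instantiate the unfolded \ref{BeS:ILL:gen-inf} implication at $K=\flatILL{\Gamma}$ rather than rewriting the whole implication with flattened atoms, which is an equivalent maneuver.
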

\begin{proof}
    Let $\flatILL{(\cdot)}$ be a flattening map with $\deflatILL{(\cdot)}$ its corresponding inverse and $\baseILL$ be the simulation base for the sequent $(\Gamma:\varphi)$ as defined above. Since, by hypothesis, $\Gamma\suppM{}{}\varphi$, then in particular, it holds that $\Gamma\suppM{\baseILL}{\emptymultiset}\varphi$. By~\ref{BeS:ILL:inf}, our hypothesis is equivalent to considering an arbitrary base $\baseB\baseGeq\baseILL$ and atomic multiset $L$, such that $\suppM{\baseB}{L}\Gamma$ implies $\suppM{\baseB}{L}\varphi$. By Lemma~\ref{lem:completeness-atomic-simulation}, this is equivalent to considering an arbitrary base $\baseB\baseGeq\baseILL$ and atomic multiset $L$ where $\suppM{\baseB}{L}\flatILL{\Gamma}$ implies $\suppM{\baseB}{L}\flatILL{\varphi}$. Thus, we have, by~\ref{BeS:ILL:inf}, that $\flatILL{\Gamma}\suppM{\baseILL}{\emptymultiset}\flatILL{\varphi}$. By~\ref{BeS:ILL:at} and Lemma~\ref{lem:atomic-sound-and-complete}, it thus follows that $\flatILL{\Gamma}\deriveBaseM{\baseILL}\flatILL{\varphi}$, wherefore, by Lemma~\ref{lem:completeness-atomic-deflat}, $\Gamma\provesILL\varphi$ follows, as required.
\end{proof}

\begin{lemma}~\label{lem:completeness-atomic-simulation}
    For any $\baseB\baseGeq\baseILL$, $\suppM{\baseB}{L}\varphi$ if and only if $\suppM{\baseB}{L}\flatILL{\varphi}$.
\end{lemma}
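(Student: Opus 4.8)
The plan is to induct on the degree of $\varphi$ (Definition~\ref{def:support}): since $\flatILL{(\cdot)}$ sends every non-atomic formula to a \emph{fresh} atom, each side of the biconditional is ultimately read off through~\ref{BeS:ILL:at} — that is, through atomic derivability in $\baseB$ — and the whole work is to show that the simulation rules of Figure~\ref{fig:SimulationBase} force the two readings to agree. The base cases are immediate or routine: on atoms $\flatILL{(\cdot)}$ is the identity; for $\top$ one shows $L\deriveBaseM{\baseB}\flatILL{\top}$ for every $L$ by the rule $\flatILL{\irn \top}$ (instantiate the premises $q_i$ by the members of $L$ and discharge them by~\ref{eq:derive-ref}), and dually for $\abot$ using $\flatILL{\ern \abot}$; and $\mtop$ follows from $\flatILL{\irn \mtop}$, $\flatILL{\ern \mtop}$, clause~\ref{BeS:ILL:mtop} and Lemma~\ref{lem:mtop-key-lemma}.

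For the inductive step each connective is treated in two directions following a fixed template. In the left-to-right (``introduction'') direction one first isolates a purely \emph{atomic} support fact coming from the corresponding $\flatILL{(\cdot)}$-introduction rule and the induction hypothesis — for $\varphi=\alpha\mand\beta$, that $\alpha\msetsum\beta\suppM{\baseB}{\emptymultiset}\flatILL{(\alpha\mand\beta)}$ (unfold~\ref{BeS:ILL:inf} and~\ref{BeS:ILL:comma}, apply the induction hypothesis to $\alpha,\beta$, then fire $\flatILL{\irn \mand}$ via~\ref{eq:derive-app}); for $\varphi=\alpha\aor\beta$, that $\alpha\suppM{\baseB}{\emptymultiset}\flatILL{(\alpha\aor\beta)}$ and $\beta\suppM{\baseB}{\emptymultiset}\flatILL{(\alpha\aor\beta)}$ (from $\flatILL{\irn{\aor_1}},\flatILL{\irn{\aor_2}}$); for $\varphi=\mtop$, that $\suppM{\baseB}{\emptymultiset}\flatILL{\mtop}$ — and then \emph{cuts} this fact against the hypothesis $\suppM{\baseB}{L}\varphi$ using the matching structural lemma (Lemma~\ref{lem:mand-key-lemma} for $\mand$, Lemma~\ref{lem:aor-key-lemma} for $\aor$, Lemma~\ref{lem:mtop-key-lemma} for $\mtop$). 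The cases $\mto$ and $\aand$ are more direct still: for $\mto$ one rewrites by~\ref{BeS:ILL:mto} and feeds $\suppM{\baseB}{\flatILL{\alpha}}\flatILL{\alpha}$ (which holds by~\ref{eq:derive-ref} and the induction hypothesis for $\alpha$) into~\ref{BeS:ILL:inf}; for $\aand$ one uses~\ref{BeS:ILL:aand}, the induction hypothesis, and $\flatILL{\irn \aand}$ directly. In the right-to-left (``elimination'') direction one starts from $\suppM{\baseB}{L}\flatILL{\varphi}$, i.e.\ $L\deriveBaseM{\baseB}\flatILL{\varphi}$ by~\ref{BeS:ILL:at}, unfolds the support clause for $\varphi$, and at the critical point reduces the required support statement to an atomic derivability discharged by the corresponding $\flatILL{(\cdot)}$-elimination rule, Proposition~\ref{lem:monotone-derivability} (to pass into the ambient $\baseC\baseGeq\baseILL$), Lemma~\ref{lem:atomic-cut} (to glue the pieces), and the induction hypothesis on the immediate subformulae: e.g.\ for $\alpha\mand\beta$, from $\alpha\msetsum\beta\suppM{\baseB}{K}p$ one derives $\flatILL{\alpha}\msetsum\flatILL{\beta}\msetsum K\deriveBaseM{\baseC}p$ (feeding $\suppM{\baseC}{\flatILL{\alpha}}\flatILL{\alpha}$ and $\suppM{\baseC}{\flatILL{\beta}}\flatILL{\beta}$ through~\ref{BeS:ILL:inf}) and then applies $\flatILL{\ern \mand}$ with additive premises $L\deriveBaseM{\baseC}\flatILL{(\alpha\mand\beta)}$ and $\flatILL{\alpha}\msetsum\flatILL{\beta}\msetsum K\deriveBaseM{\baseC}p$ to obtain $L\msetsum K\deriveBaseM{\baseC}p$; the cases $\aor$, $\mtop$, $\abot$, $\aand$, $\mto$ are parallel, via $\flatILL{\ern \aor}$, $\flatILL{\ern \mtop}$, $\flatILL{\ern \abot}$, $\flatILL{\ern{\aand_i}}$, $\flatILL{\ern \mto}$.

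The case $\varphi=\bang\alpha$ is where the argument is genuinely new, and its right-to-left half is the main obstacle. Left to right is still smooth: given $\suppM{\baseB}{L}\bang\alpha$, instantiating the (simplified)~\ref{BeS:ILL:bang} clause at $\baseC=\baseB$, $K=\emptymultiset$ and $p=\flatILL{(\bang\alpha)}$ reduces the goal to showing $\bang\alpha\suppM{\baseB}{\emptymultiset}\flatILL{(\bang\alpha)}$, which by~\ref{BeS:ILL:inf} asks only that $\suppM{\baseC}{\emptymultiset}\alpha$ imply $\suppM{\baseC}{\emptymultiset}\flatILL{(\bang\alpha)}$ for $\baseC\baseGeq\baseB$; by the induction hypothesis on $\alpha$ this reads $\emptymultiset\deriveBaseM{\baseC}\flatILL{\alpha}\Rightarrow\emptymultiset\deriveBaseM{\baseC}\flatILL{(\bang\alpha)}$, an instance of $\flatILL{\promotion}$ (whose modal box holds the single sequent $\seq\flatILL{\alpha}$, dischargeable with an empty multiset of persistent atoms) — equivalently one cuts via Lemma~\ref{lem:bang-cut-support-prim}. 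For the converse, given $L\deriveBaseM{\baseB}\flatILL{(\bang\alpha)}$, it suffices by~\ref{BeS:ILL:bang} to fix $\baseC\baseGeq\baseB$, an atomic multiset $K$ and an atom $p$ with $\bang\alpha\suppM{\baseC}{K}p$ and to establish $\flatILL{(\bang\alpha)}\msetsum K\deriveBaseM{\baseC}p$ (then $\suppM{\baseC}{L\msetsum K}p$ follows by cutting $L\deriveBaseM{\baseC}\flatILL{(\bang\alpha)}$ — Proposition~\ref{lem:monotone-derivability} — against it with Lemma~\ref{lem:atomic-cut}). The delicate point is that $\flatILL{\dereliction}$ produces only atoms of the shape $\flatILL{\psi}$ with $\psi\in\Xi$, whereas $p$ is arbitrary. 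The route through it: unfold $\bang\alpha\suppM{\baseC}{K}p$ by~\ref{BeS:ILL:inf} to get $\suppM{\baseD}{\emptymultiset}\alpha\Rightarrow\suppM{\baseD}{K}p$ for all $\baseD\baseGeq\baseC$; pass to $\baseD^{*}=\baseC\cup\{\langle\emptymultiset,\emptymultiset,\flatILL{\alpha}\rangle\}$, in which $\emptymultiset\deriveBaseM{\baseD^{*}}\flatILL{\alpha}$, hence (induction hypothesis) $\suppM{\baseD^{*}}{\emptymultiset}\alpha$, hence $\suppM{\baseD^{*}}{K}p$, i.e.\ $K\deriveBaseM{\baseD^{*}}p$; then transfer this derivation back to $\baseC$. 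The transfer rests on $\flatILL{(\bang\alpha)}\deriveBaseM{\baseC}\flatILL{\alpha}$ — an instance of $\flatILL{\dereliction}$ with $\psi=\alpha\in\Xi$ — together with a conservativity argument over the adjoined axiom (it merely lets $\flatILL{\alpha}$ be weakened in, and, having empty modal box, adds no persistent atoms, the persistent atoms of $\baseILL$ being exactly the $\flatILL{(\bang\beta)}$): $K\deriveBaseM{\baseD^{*}}p$ then yields a $\baseC$-derivation of $p$ from $K$ together with finitely many copies of $\flatILL{\alpha}$, each of which is replaced by $\flatILL{(\bang\alpha)}$ via Lemma~\ref{lem:atomic-cut}, after which the surplus copies of $\flatILL{(\bang\alpha)}$ are collapsed using $\flatILL{\weakening}$ and $\flatILL{\contraction}$, giving the desired $\flatILL{(\bang\alpha)}\msetsum K\deriveBaseM{\baseC}p$.

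I expect this conservativity step in the $\bang$ case to be the only part demanding real care: one must check that the adjoined axiom for $\flatILL{\alpha}$ does nothing beyond permitting copies of $\flatILL{\alpha}$ to be weakened into a $\baseC$-derivation, which requires a close look at how persistent atoms enter through the~\ref{eq:derive-app} clause; every remaining case is bookkeeping built on the monotonicity and cut lemmas already in hand.
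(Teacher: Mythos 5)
Your proof is correct and takes essentially the same route as the paper: an induction on the structure of $\varphi$ in which each support clause is matched, via the induction hypothesis on subformulae, against the behaviour that the simulation rules force on the flattened atom. The only difference is organisational — the paper factors the per-connective atomic equivalences into Lemma~\ref{lem:completeness-atomic-definitions} and the $\bang$-specific argument into Lemma~\ref{lem:completeness-structural-cut}, whereas you inline them; in particular your adjoined axiom $\seq\flatILL{\alpha}$ followed by the dereliction/contraction transfer back to $\baseC$ is exactly the content of Lemma~\ref{lem:completeness-structural-cut}.
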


\begin{proof}
    We only consider the case of $\varphi=\bang\alpha$ for some $\alpha\in\Xi$, as the rest follow suit.
    We proceed by induction on the structure of $\varphi$. By the definition of $(\bang)$ we have that $\suppM{\baseB}{L}\bang\varphi$ if and only if, for all $\baseC\baseGeq\baseB$, atomic multisets $K$ and atoms $p$, if, for all $\baseD\baseGeq\baseC$ such that $\suppM{\baseD}{\emptymultiset}\varphi$ implies $\suppM{\baseD}{K}p$, then $\suppM{\baseC}{L\msetsum K}p$.
    By the inductive hypothesis, we therefore have that this is equivalent to considering all bases $\baseC\baseGeq\baseB$, atomic multisets $K$ and atoms $p$, if, for all $\baseD\baseGeq\baseC$ such that $\deriveBaseM{\baseD}\flatILL{\varphi}$ implies $K\deriveBaseM{\baseD}p$, then $L\msetsum K\deriveBaseM{\baseC}p$. This, by Lemma~\ref{lem:completeness-atomic-definitions}, is equivalent to $\deriveBaseM{\baseB}\flatILL{(\bang\varphi)}$, as required.
\end{proof}

\begin{lemma}~\label{lem:completeness-atomic-deflat}
    If $L\deriveBaseM{\baseILL}p$ then $\deflatILL{L}\provesILL\deflatILL{p}$.
\end{lemma}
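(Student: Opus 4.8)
The plan is to induct on the structure of the derivation witnessing $L \deriveBaseM{\baseILL} p$, following the two clauses of Definition~\ref{def:derivability-base-ILL}. For the base case, \eqref{eq:derive-ref}, we have $L = \makeMultiset{p}$ and $p \deriveBaseM{\baseILL} p$, so we must produce $\deflatILL{p} \provesILL \deflatILL{p}$, which is an instance of $\rn{Ax}$. For the inductive step, \eqref{eq:derive-app}, the derivation ends with an application of some rule $\langle \mathbf{A}, \mathbf{S}, p\rangle \in \baseILL$, and one splits on which rule of Figure~\ref{fig:SimulationBase} this is. In every case the App-clause supplies atomic multisets $C_1, \dots, C_n$ and a multiset $D = \makeMultiset{d_{m+1}, \dots, d_n}$ of persistent atoms (with $m = |\mathbf{A}|$) such that $L = C_1 \msetsum \dots \msetsum C_n$, with $C_i \msetsum Q \deriveBaseM{\baseILL} q$ for each $Q \seq q$ in the $i$-th box of $\mathbf{A}$, with $C_{m+i} \deriveBaseM{\baseILL} d_{m+i}$, and with $D \msetsum U \deriveBaseM{\baseILL} v$ for each $U \seq v \in \mathbf{S}$.

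I would then apply the induction hypothesis to each of these subderivations, using that $\deflatILL{\cdot}$ is a left inverse of $\flatILL{\cdot}$ (so $\deflatILL{\flatILL{\alpha}} = \alpha$) and that it distributes over multiset union, and conclude by applying the corresponding rule of $\rm N_{ILL}$ from Figure~\ref{fig:NatDedILL}: this turns the (deflattened) premises into $\deflatILL{C_1} \msetsum \dots \msetsum \deflatILL{C_m} \provesILL \deflatILL{p}$, the case $\flatILL{\irn\aand}$ being the one that genuinely uses the shared additive box, since its unique box carries two sequents against the \emph{same} $C_i$. The ``extra'' context pieces $C_{m+1}, \dots, C_n$ hooked on via the persistent atoms are absorbed: since each $d_{m+i}$ is persistent in $\baseILL$ it is of the form $\flatILL{(\bang\alpha_{m+i})}$ for some $\alpha_{m+i} \in \Xi$ (this is exactly the remark made just before Theorem~\ref{thm:completeness}), so $\deflatILL{C_{m+i}} \provesILL \bang\alpha_{m+i}$ by the induction hypothesis, and $n-m$ applications of $\weakening$ glue these in to yield $\deflatILL{C_1} \msetsum \dots \msetsum \deflatILL{C_n} \provesILL \deflatILL{p}$, i.e.\ $\deflatILL{L} \provesILL \deflatILL{p}$.

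The delicate case, which I would treat with care, is $\flatILL{\promotion} = \langle \emptymultiset, [\seq \flatILL{\phi}], \flatILL{(\bang\phi)}\rangle$, where $m = 0$: here $D = \makeMultiset{d_1, \dots, d_n}$ with $C_i \deriveBaseM{\baseILL} d_i$ and $D \deriveBaseM{\baseILL} \flatILL{\phi}$, and $L = C_1 \msetsum \dots \msetsum C_n$. The key point is again that each $d_i = \flatILL{(\bang\alpha_i)}$ with $\alpha_i \in \Xi$, so $\deflatILL{D} = \makeMultiset{\bang\alpha_1, \dots, \bang\alpha_n}$; the induction hypothesis then gives $\deflatILL{C_i} \provesILL \bang\alpha_i$ for each $i$ together with $\bang\alpha_1 \msetsum \dots \msetsum \bang\alpha_n \provesILL \phi$, which is precisely the premise shape of $\promotion_n$, and one application of that rule delivers $\deflatILL{C_1} \msetsum \dots \msetsum \deflatILL{C_n} \provesILL \bang\phi = \deflatILL{\flatILL{(\bang\phi)}}$. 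The analogous bookkeeping for $\flatILL{\dereliction}$, $\flatILL{\weakening}$ and $\flatILL{\contraction}$ is routine once one notes that the premise atom $\flatILL{(\bang\phi)}$ deflattens to a genuine $\bang$-formula, and the remaining cases ($\mto$, $\mand$, $\mtop$, $\aand$, $\aor$, $\top$, $\abot$) are direct instances of the recipe above. The main obstacle throughout is purely organisational: keeping the partition of $L$ and the role of the persistent-atom block $D$ straight so that each $\rm N_{ILL}$ rule is invoked with exactly the right contexts; once the promotion case is pinned down, everything else is mechanical.
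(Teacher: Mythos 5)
Your proposal is correct and follows essentially the same route as the paper: induction on the derivation of $L\deriveBaseM{\baseILL}p$, with \eqref{eq:derive-ref} handled by $\rn{Ax}$ and each \eqref{eq:derive-app} case deflattened to the corresponding $\rm N_{ILL}$ rule, the only delicate case being $\flatILL{\promotion}$, where the observation that every persistent atom of $\baseILL$ has the form $\flatILL{(\bang\alpha)}$ lets you reassemble the premises of $\promotion_n$ exactly as the paper does. If anything, your explicit use of $\weakening$ to absorb a possibly non-empty persistent-atom block $D$ in the non-promotion cases is slightly more careful than the paper, which dismisses those cases as immediate.
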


\begin{proof}
    It is clear that if $L\deriveBaseM{\baseILL}p$ holds due to~\eqref{eq:derive-ref} then $L=p$ and so we immediately have that $\deflatILL{p}\provesILL \deflatILL{p}$, as required. Else, it is the case that $L\deriveBaseM{\baseILL}p$ holds due to~\eqref{eq:derive-app}. In this case, we know that there must exists a rule in $\baseILL$ which, when applied, allows us to conclude $L\deriveBaseM{\baseILL}p$. This rule must be a rule of $\baseILL$, which, when one considers that for each formula $\varphi$ mentioned in each rule, it is the case that $\deflatILL{(\flatILL{(\varphi)})}=\varphi$, we see that we quickly recover instances of the rules of Figure~\ref{fig:NatDedILL2}, that is, the natural deduction system $\rm N_{ILL}$. Thus, we must be careful that by~\eqref{eq:derive-app}, we are not able to derive anything more than is possible in $\rm N_{ILL}$. For all rules, this is immediate except, perhaps, for $\promotion^\flat$, as one might question which atoms are persistent. However, as mentioned, since the only persistent atoms in $\baseILL$ are those which simulate formulae of the form $\bang\varphi$ by the definition of $\baseILL$, then this case becomes immediate, though we show this explicitly below:
    If $L\deriveBaseM{\baseILL}p$ holds due to the $\promotion^\flat$ rule, then $p=\flatILL{(\bang\varphi)}$ and there is a partition of $L$ into $L_1\msetsum\dots\msetsum L_n$, where $n\geq 0$, and some multiset of persistent atoms $D = \{d_1,\dots,d_n\}$ such that $L_i\deriveBaseM{\baseILL}d_i$ for all $i\in[1,n]$ and $D\deriveBaseM{\baseILL}\flatILL{\varphi}$. Since the only persistent atoms in $\baseILL$ are of the form $\flatILL{(\bang\alpha)}$ for $\bang\alpha\in\Xi$, thus we have that $d_i=\flatILL{((\bang\alpha)_i)}$. Thus, by the inductive hypothesis, we have that $\deflatILL{L}_i\provesILL(\bang\alpha)_i$ for all $i\in[1,n]$ and that $(\bang\alpha)_1\msetsum\dots(\bang\alpha)_n\provesILL\varphi$ all hold. Thus, by the $\promotion$ rule, we obtain that $\deflatILL{L_1}\msetsum\dots\msetsum\deflatILL{L_n}\provesILL\bang\varphi$ which is nothing more than $\deflatILL{L}\provesILL\deflatILL{p}$, as required.
\end{proof}

\begin{lemma}~\label{lem:completeness-structural-cut}
Given an arbitrary atom $p$, atomic multiset $K$, base $\baseB\baseGeq\baseILL$ and $\varphi\in\Xi$, then the following statements are equivalent:
\begin{enumerate}
    \item $\flatILL{(\bang\varphi)}\msetsum K\deriveBaseM{\baseB}p$.~\label{eq:structural-cut-1}
    \item For all $\baseC\baseGeq\baseB$, if $\deriveBaseM{\baseC}\flatILL{\varphi}$ then $K\deriveBaseM{\baseC}p$.~\label{eq:structural-cut-2}
\end{enumerate}
\end{lemma}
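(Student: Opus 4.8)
The plan is to prove the two implications separately. In both I use that, since $\baseB \baseGeq \baseILL$, every base in play contains the rules $\flatILL{\promotion} = \langle \emptymultiset, [\emptymultiset \seq \flatILL{\varphi}], \flatILL{(\bang\varphi)} \rangle$, $\flatILL{\dereliction}$, $\flatILL{\weakening}$ and $\flatILL{\contraction}$ of Figure~\ref{fig:SimulationBase} instantiated at $\varphi$ (and at $\psi = \varphi$ where relevant), and that $\flatILL{(\bang\varphi)}$ is a persistent atom.

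For the implication \eqref{eq:structural-cut-1}$\Rightarrow$\eqref{eq:structural-cut-2}, which is short: assume \eqref{eq:structural-cut-1} and fix $\baseC \baseGeq \baseB$ with $\deriveBaseM{\baseC} \flatILL{\varphi}$. Applying $\flatILL{\promotion}$ through~\eqref{eq:derive-app} with no additive boxes and no persistent atoms (so $m = n = 0$), its lone strict premise $\emptymultiset \seq \flatILL{\varphi}$ being discharged by $\deriveBaseM{\baseC}\flatILL{\varphi}$, gives $\emptymultiset \deriveBaseM{\baseC} \flatILL{(\bang\varphi)}$; then Lemma~\ref{lem:atomic-cut} applied to \eqref{eq:structural-cut-1} with cut-atom $\flatILL{(\bang\varphi)}$, extension $\baseC$ and hypothesis-multiset $T_1 = \emptymultiset$ yields $\emptymultiset \msetsum K \deriveBaseM{\baseC} p$, i.e.\ $K \deriveBaseM{\baseC} p$.

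For \eqref{eq:structural-cut-2}$\Rightarrow$\eqref{eq:structural-cut-1}, which carries the content, I would first record three facts about the atom $\flatILL{(\bang\varphi)}$, each a single application of~\eqref{eq:derive-app} to the named rule of $\baseILL$ with~\eqref{eq:derive-ref} supplying the auxiliary derivations, and each holding for every $\baseD \baseGeq \baseILL$, atomic multiset $N$ and atom $r$: \emph{(W)} if $N \deriveBaseM{\baseD} r$ then $\flatILL{(\bang\varphi)} \msetsum N \deriveBaseM{\baseD} r$ (via $\flatILL{\weakening}$); \emph{(C)} if $\flatILL{(\bang\varphi)} \msetsum \flatILL{(\bang\varphi)} \msetsum N \deriveBaseM{\baseD} r$ then $\flatILL{(\bang\varphi)} \msetsum N \deriveBaseM{\baseD} r$ (via $\flatILL{\contraction}$); \emph{(D)} $\flatILL{(\bang\varphi)} \deriveBaseM{\baseD} \flatILL{\varphi}$ (via $\flatILL{\dereliction}$ at $\phi = \psi = \varphi$). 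The conclusion atoms of $\flatILL{\weakening}$ and $\flatILL{\contraction}$ are arbitrary, which is exactly what lets (W) and (C) reach the arbitrary atom $p$ of the statement. Next, put $\baseB^{+} := \baseB \cup \{ \langle \emptymultiset, \emptymultiset, \flatILL{\varphi} \rangle \}$; since $\deriveBaseM{\baseB^{+}} \flatILL{\varphi}$, instantiating \eqref{eq:structural-cut-2} at $\baseC = \baseB^{+}$ yields $K \deriveBaseM{\baseB^{+}} p$. It then suffices to prove, by induction on the structure of the derivation, that $N \deriveBaseM{\baseB^{+}} r$ implies $\flatILL{(\bang\varphi)} \msetsum N \deriveBaseM{\baseB} r$ for all atomic $N$ and $r$, since the instance $N = K$, $r = p$ is precisely \eqref{eq:structural-cut-1}. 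In this induction, \eqref{eq:derive-ref} is handled by (W) and the case where the last step is the new axiom $\langle \emptymultiset, \emptymultiset, \flatILL{\varphi} \rangle$ (in its minimal form) by (D); in the remaining case, where the last step applies some $\mathcal{R} \in \baseB$ via~\eqref{eq:derive-app}, one rewrites each premise-derivation by the inductive hypothesis — each gaining exactly one copy of $\flatILL{(\bang\varphi)}$, uniformly across the sequents sharing an additive box — reassembles $\mathcal{R}$ over $\baseB$ (the new copies riding on the $C_i$ of each box, on the $C_{m+i}$ of each persistent-atom premise, and, when the strict component $\mathbf{S}$ is non-empty, one further copy adjoined to the persistent-atom multiset $D$, which is legitimate as $\flatILL{(\bang\varphi)}$ is persistent), and finally uses (W) and (C) to bring the total number of copies of $\flatILL{(\bang\varphi)}$ down to exactly one.

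I expect this last inductive step to be the main obstacle: one must check that the rewritten premises still meet the exact shape demanded by~\eqref{eq:derive-app} — in particular the single-$C_i$-per-box discipline and the soundness of enlarging $D$ — and one must cope with the fact that~\eqref{eq:derive-app} permits even the added axiom $\langle \emptymultiset, \emptymultiset, \flatILL{\varphi} \rangle$ to be applied with a spurious non-empty multiset of persistent atoms, a case absorbed, once its auxiliary derivations are converted by the inductive hypothesis, by a further use of (W). The remaining bookkeeping is routine, as is the observation that $\baseB$ and $\baseB^{+}$ share the same persistent atoms; and monotonicity (Lemma~\ref{lem:monotone-derivability}) is used tacitly whenever a derivation over a smaller base is reused over a larger one.
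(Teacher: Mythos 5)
Your proposal is correct and follows essentially the same route as the paper: the forward direction via $\flatILL{\promotion}$ plus Lemma~\ref{lem:atomic-cut}, and the converse by adjoining the axiom $\langle\emptymultiset,\emptymultiset,\flatILL{\varphi}\rangle$ to $\baseB$ and inducting on the resulting derivation, discharging the (Ref) case with $\flatILL{\weakening}$, the new-axiom case with $\flatILL{\dereliction}$, and collapsing the accumulated copies of $\flatILL{(\bang\varphi)}$ (including the one adjoined to the persistent-atom multiset $D$) with $\flatILL{\contraction}$. The only differences are presentational — you isolate the induction as an explicit standalone claim and flag a couple of edge cases the paper passes over silently — so there is nothing substantive to change.
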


\begin{proof}
    To show~\eqref{eq:structural-cut-1} implies~\eqref{eq:structural-cut-2}, we first assume an arbitrary $\baseC\baseGeq\baseB$ such that $\deriveBaseM{\baseC}\flatILL{\varphi}$. Since $\deriveBaseM{\baseC}\flatILL{\varphi}$ holds, we have, by applying the $\promotion^\flat$ rule using~\eqref{eq:derive-app}, that $\deriveBaseM{\baseC}\flatILL{(\bang\varphi)}$. Since, by monotonicity (Lemma~\ref{lem:monotone-derivability}), we have that $\flatILL{(\bang\varphi)}\msetsum K\deriveBaseM{\baseC}p$, we can therefore use Lemma~\ref{lem:atomic-cut} to obtain $K\deriveBaseM{\baseC}p$, as required.\\
    To show~\eqref{eq:structural-cut-2} implies~\eqref{eq:structural-cut-1}, we first fix a base $\baseC=\baseB\,\cup\,\{\seq \flatILL{\varphi}\}$. Since it immediately follows by~\eqref{eq:derive-app} that $\deriveBaseM{\baseC}\flatILL{\varphi}$ then, by~\eqref{eq:structural-cut-2}, we have that $K\deriveBaseM{\baseC}p$. We now do a case analysis on how $K\deriveBaseM{\baseC}p$ holds.
    \begin{itemize}
        \item If $K\deriveBaseM{\baseC}p$ holds by~\eqref{eq:derive-ref} then $K=p$ and thus $p\deriveBaseM{\baseB}p$. Since $\flatILL{(\bang\varphi)}\deriveBaseM{\baseB}\flatILL{(\bang\varphi)}$ also holds by~\eqref{eq:derive-ref}, then we can use~\eqref{eq:derive-app}, applying the rule $\weakening^\flat$, to conclude $\flatILL{(\bang\varphi)}\msetsum p\deriveBaseM{\baseB}p$, as required.
        \item Else $K\deriveBaseM{\baseC}p$ holds by~\eqref{eq:derive-app}. We break this into two subcases.
        \begin{itemize}
            \item In the event the rule $\mathcal{R}$ is $\seq \flatILL{\varphi}$, then we have that $K=\emptymultiset$ and $p = \flatILL{\varphi}$. By the inductive hypothesis, we therefore have that $\flatILL{(\bang\varphi)}\deriveBaseM{\baseB}\flatILL{\varphi}$, which holds by $\dereliction^\flat$.
            \item Else, there exists a rule $\mathcal{R}=\langle\mathbf{A},\mathbf{S},p\rangle\in \baseB$ where $|\mathbf{A}| = m$, a partition of $K$ into $K_1\msetsum\dots\msetsum K_n$ and a multiset of persistent atoms $D=\{d_{m+1},\dots,d_{n}\}$ such that for all $\mathbf{T}_i\in\mathbf{A}$ and $Q\seq q\in \mathbf{T}_i$ we have that $K_i\msetsum Q\deriveBaseM{\baseC}q$, for all $i\in [1,m]$, and $K_{m+i}\deriveBaseM{\baseC}d_{m+i}$, for all $i\in [1,n-m]$, and for all $U\seq v \in \mathbf{S}$ we have that $D\msetsum U\deriveBaseM{\baseC}v$. By the inductive hypothesis, we therefore have that for all $\mathbf{T}_i\in\mathbf{A}$ and $Q\seq q\in \mathbf{T}_i$ that $\flatILL{(\bang\varphi)}\msetsum K_i\msetsum Q\deriveBaseM{\baseB}q$, for all $i\in[1,m]$, that $\flatILL{(\bang\varphi)}\msetsum K_{m+i}\deriveBaseM{\baseB}d_{m+i}$, for all $i\in [1, n-m]$ and that for all $U\seq v \in \mathbf{S}$ that $\flatILL{(\bang\varphi)}\msetsum D\deriveBaseM{\baseB}v$. Since $\flatILL{(\bang\varphi)}$ is a persistent atom in $\baseILL$ (due to the fact that $\promotion\in\baseILL$) and $\baseB\baseGeq\baseILL$, we therefore have that $\flatILL{(\bang\varphi)}$ is a persistent atom in $\baseB$. Furthermore, since $\flatILL{(\bang\varphi)}\deriveBaseM{\baseB}\flatILL{(\bang\varphi)}$ by~\eqref{eq:derive-ref}, we can therefore apply the rule $\mathcal{R}$ by~\eqref{eq:derive-app} to obtain $\flatILL{(\bang\varphi)}\msetsum K_1\msetsum\dots\msetsum\flatILL{(\bang\varphi)}\msetsum K_n\msetsum\flatILL{(\bang\varphi)}\deriveBaseM{\baseB}p$, which we can rewrite as $(\flatILL{(\bang\varphi)})^{n+1}\msetsum K\deriveBaseM{\baseB}p$. By repeatedly applying the $\contraction$ rule, we can reduce $(\flatILL{(\bang\varphi)})^{n+1}\msetsum K\deriveBaseM{\baseB}p$ down to the required result, which is $\flatILL{(\bang\varphi)}\msetsum K\deriveBaseM{\baseB}p$.
        \end{itemize}
    \end{itemize}
\end{proof}

It is important to note that the final rule application works because the multiset $\flatILL{(\bang\varphi)}\msetsum D$ is still a multiset of persistent atoms. To apply the rule $\mathcal{R}$, we need, amongst other details, to have a derivation of every element of this multiset as the union of the hypotheses of each derivation forms the context of the conclusion of the rule. Indeed we have that $\flatILL{(\bang\varphi)}\msetsum K_{m+i}\deriveBaseM{\baseB}d_{m+i}$ for all $i\in[1,n-m]$ but also that by~\eqref{eq:derive-ref} that $\flatILL{(\bang\varphi)}\deriveBaseM{\baseB}\flatILL{(\bang\varphi)}$. Thus, the rule application results in $(\flatILL{(\bang\varphi)})^{n+1}\msetsum K\deriveBaseM{\baseB}p$ where the extra $\flatILL{(\bang\varphi)}$ arises as a result of this extra element of the multiset of persistent atoms.

\begin{lemma}~\label{lem:completeness-atomic-definitions}
    The following hold for all $\baseB\baseGeq\baseILL$ and atomic multisets $L$:
    \begin{enumerate}
        \item $L\deriveBaseM{\baseB}\flatILL{(\varphi\mand\psi)}$ iff for all $\baseC\baseGeq\baseB$, atomic multisets $K$ and atoms $p$, if $\flatILL{\varphi}\msetsum\flatILL{\psi}\msetsum K\deriveBaseM{\baseC}p$ then $L\msetsum K\deriveBaseM{\baseC}p$
        \item $L\deriveBaseM{\baseB}\flatILL{(\varphi\mto\psi)}$ iff $L\msetsum\flatILL{\varphi}\deriveBaseM{\baseB}\psi$
        \item $L\deriveBaseM{\baseB}\flatILL{\mtop}$ iff for all $\baseC\baseGeq\baseB$, atomic multisets $K$ and atoms $p$, if $K\deriveBaseM{\baseC}p$ then $L\msetsum K\deriveBaseM{\baseC}p$
        \item $L\deriveBaseM{\baseB}\flatILL{(\varphi\aand\psi)}$ iff $L\deriveBaseM{\baseB}\flatILL{\varphi}$ and $L\deriveBaseM{\baseB}\flatILL{\psi}$
        \item $L\deriveBaseM{\baseB}\flatILL{(\varphi\aor\psi)}$ iff for all $\baseC\baseGeq\baseB$, atomic multisets $K$ and atoms $p$, if $K\msetsum\flatILL{\varphi}\deriveBaseM{\baseC}p$ and $K\msetsum\flatILL{\psi}\deriveBaseM{\baseC}p$ then $L\msetsum K\deriveBaseM{\baseC}p$
        \item $L\deriveBaseM{\baseB}\flatILL{\top}$ always
        \item $L\deriveBaseM{\baseB}\flatILL{\abot}$ iff $L\msetsum K\deriveBaseM{\baseB}p$, for all atomic multisets $K$ and atoms $p$
        \item $L\deriveBaseM{\baseB}\flatILL{(\bang\varphi)}$ iff for all base $\baseC\baseGeq\baseB$, atomic multisets $K$ and atoms $p$, if for all bases $\baseD\baseGeq\baseC$ it holds that $\deriveBaseM{\baseD}\flatILL{\varphi}$ implies $K\deriveBaseM{\baseD}p$, then $L\msetsum K\deriveBaseM{\baseC}p$
    \end{enumerate}
\end{lemma}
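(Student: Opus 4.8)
The plan is to treat each of the eight items as a separate biconditional (item~6 being a plain assertion) and to exploit the fact that, for every connective, the simulation base $\baseILL$ of Figure~\ref{fig:SimulationBase} contains flattened copies of both the introduction and the elimination rule of $\rm N_{ILL}$. In essentially every case the two directions are handled by the same pair of moves: a right-to-left direction fires a flattened \emph{introduction} rule through \eqref{eq:derive-app}, and a left-to-right direction fires the corresponding flattened \emph{elimination} rule through \eqref{eq:derive-app}, after first lifting the hypothesis to the relevant extension by monotonicity (Lemma~\ref{lem:monotone-derivability}); the atomic boxes inside each rule are discharged either by the data that the right-hand side of the biconditional quantifies over, or, where an $\emptymultiset$-premise is called for, by \eqref{eq:derive-ref}.

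Concretely, for the multiplicative-flavoured items 1, 3 and 5 the right-to-left direction proceeds by instantiating the outer quantifier at $\baseC = \baseB$, $K = \emptymultiset$ and $p$ the flattened compound in question, and then discharging the resulting antecedent by one application of $\flatILL{\irn\mand}$, $\flatILL{\irn\mtop}$, or $\flatILL{\irn{\aor_1}}$/$\flatILL{\irn{\aor_2}}$, using \eqref{eq:derive-ref} to fill the boxes; for items 2 and 4 there is no outer quantifier, and the hypothesis is already exactly the premise set of $\flatILL{\irn\mto}$, respectively $\flatILL{\irn\aand}$ (a single box with a shared context), so one application of \eqref{eq:derive-app} closes it. For the left-to-right directions one fixes $\baseC\baseGeq\baseB$ and whatever $K,p$ the right-hand side mentions (trivially, for items 2 and 4), lifts $L\deriveBaseM{\baseB}(\cdots)$ to $\baseC$, and applies the flattened elimination rule: the box demanding a derivation of the flattened compound is fed by the lifted hypothesis, the box(es) encoding the minor premise(s) by the assumed data. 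The only bookkeeping subtlety is that a \emph{single} atomic box forces one shared context --- matching the additive shape of $\flatILL{\irn\aand}$ and of the minor-premise box of $\flatILL{\ern\aor}$ --- whereas separate boxes split the context multiplicatively. Items 6 ($\flatILL{\top}$) and 7 ($\flatILL{\abot}$) are handled by writing $L = \{l_1,\dots,l_n\}$ and invoking the $n$-box copy of $\flatILL{\irn\top}$, respectively $\flatILL{\ern\abot}$, with each box discharged by \eqref{eq:derive-ref} and with the $n = 0$ copy used when $L = \emptymultiset$.

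Item~8, the clause for $\flatILL{(\bang\varphi)}$, is the only case needing the heavier machinery and is where I expect the work to concentrate. For left-to-right, having fixed $\baseC\baseGeq\baseB$, $K$, $p$ and assumed that $\deriveBaseM{\baseD}\flatILL{\varphi}$ implies $K\deriveBaseM{\baseD}p$ for all $\baseD\baseGeq\baseC$, I would read this assumption as precisely the right-hand side of Lemma~\ref{lem:completeness-structural-cut} taken over the base $\baseC$, obtaining $\flatILL{(\bang\varphi)}\msetsum K\deriveBaseM{\baseC}p$; combining this with $L\deriveBaseM{\baseC}\flatILL{(\bang\varphi)}$ (hypothesis plus monotonicity) via cut admissibility (Lemma~\ref{lem:atomic-cut}) yields $L\msetsum K\deriveBaseM{\baseC}p$. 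For right-to-left, I would instantiate the assumed implication at $\baseC = \baseB$, $K = \emptymultiset$, $p = \flatILL{(\bang\varphi)}$; its antecedent --- that $\deriveBaseM{\baseD}\flatILL{\varphi}$ implies $\deriveBaseM{\baseD}\flatILL{(\bang\varphi)}$ for every $\baseD\baseGeq\baseB$ --- is discharged by a single application of $\flatILL{\promotion} = \langle\emptymultiset,[\seq\flatILL{\varphi}],\flatILL{(\bang\varphi)}\rangle$, which is immediate from \eqref{eq:derive-app} because $\mathbf{A}$ is empty and so no persistent atoms enter; the resulting $L\msetsum\emptymultiset\deriveBaseM{\baseB}\flatILL{(\bang\varphi)}$ is what is wanted.

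The principal obstacle is therefore not conceptual but one of care: each flattened rule must be rewritten in its triple form $\langle\mathbf{A},\mathbf{S},p\rangle$, and the partition of contexts forced by \eqref{eq:derive-app} must be matched exactly to the shape of the biconditional --- in particular keeping track of which premises share a box --- while in item~8 one must be sure to apply Lemma~\ref{lem:completeness-structural-cut} at the extension $\baseC$, not at $\baseB$, since only there does the quantifier over $\baseD$ have the correct range.
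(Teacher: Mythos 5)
Your proposal is correct and follows essentially the same route as the paper: every direction is one application of~\eqref{eq:derive-app} with the appropriate flattened introduction or elimination rule, preceded where needed by monotonicity (Lemma~\ref{lem:monotone-derivability}), with the right-to-left directions obtained by instantiating the quantifiers at $\baseC=\baseB$, $K=\emptymultiset$ and $p$ the flattened compound, and with item~8 handled exactly as you describe via Lemma~\ref{lem:completeness-structural-cut} plus cut admissibility (your right-to-left discharge of the quantified antecedent directly by $\flatILL{\promotion}$ is a harmless variant of the paper's, which first reduces the antecedent to $\flatILL{(\bang\varphi)}\msetsum K\deriveBaseM{\baseC}p$ and then appeals to~\eqref{eq:derive-ref}). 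The one bookkeeping slip is in item~7: there it is $K$, not $L$, that must be decomposed into the $q_i$-boxes of $\flatILL{\ern\abot}$, the remaining box being fed by the hypothesis $L\deriveBaseM{\baseB}\flatILL{\abot}$.
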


\begin{proof}
    Here we only include the proof of the last case. The rest can be found in Appendix~\ref{secA:Proofs}.
    To prove the final point, we start by noting that our statement can be simplified by Lemma~\ref{lem:completeness-structural-cut}. Thus, we can restate our lemma to say $L\deriveBaseM{\baseB}\flatILL{(\bang\varphi)}$ iff for all base $\baseC\baseGeq\baseB$, atomic multisets $K$ and atoms $p$, if $\flatILL{(\bang\varphi)}\msetsum K\deriveBaseM{\baseC}p$, then $L\msetsum K\deriveBaseM{\baseC}p$. We now show this instead.
    Going left to right, we note that by monotonicity (Lemma~\ref{lem:monotone-derivability}) we have that $L\deriveBaseM{\baseC}\flatILL{(\bang\varphi)}$. Thus, by Lemma~\ref{lem:atomic-cut}, we therefore have $L\msetsum K\deriveBaseM{\baseC}p$, as required.\\
    Going right to left, we start by considering $\baseC=\baseB$, $K=\emptymultiset$ and $p=\flatILL{(\bang\varphi)}$. Thus, our hypothesis becomes if $\flatILL{(\bang\varphi)}\deriveBaseM{\baseB}\flatILL{(\bang\varphi)}$, then $L\deriveBaseM{\baseB}\flatILL{(\bang\varphi)}$. Since $\flatILL{(\bang\varphi)}\deriveBaseM{\baseB}\flatILL{(\bang\varphi)}$ holds by~\eqref{eq:derive-ref}, we thus get $L\deriveBaseM{\baseB}\flatILL{(\bang\varphi)}$, as required.
\end{proof}

\section{Comments on the semantics}~\label{sec:Conc}

As mentioned in the introduction, the work herein uses the solid framework set up by Gheorghiu, Gu, and Pym in~\cite{AlexTaoDavid_PtS4IMLL} to help us capture substructurality. However, extending to the additives and including the modality required \emph{quite} some additional mathematical machinery. Whilst the modification to~\ref{BeS:ILL:inf} was previously discussed in Section~\ref{sec:BeS}, what is not perhaps clear is the origins of the clause for~\ref{BeS:ILL:bang}. A na\"{i}ve but not incorrect view of the clause is that we simply put things in the right places to obtain an inductive definition that we can ``cut" resources on. This argument can be further justified if one makes use of the identities $\bang\top \equiv \mtop$ and $\bang(\varphi\aand\psi)\equiv(\bang\varphi)\mand(\bang\psi)$, where $\equiv$ represents logical equivalence. Supposing we accept these equivalences and the argument provided in Section~\ref{sec:BeS} for the modified~\ref{BeS:ILL:inf} clause, we can derive the clause for ($\bang$) as follows:
\begin{enumerate}
    \item Start with the fact that $\suppM{\baseB}{L}\bang(\varphi\aand\psi)$ iff $\suppM{\baseB}{L}(\bang\varphi)\mand(\bang\psi)$.
    \item Let $\psi = \top$. Thus, we have that $\suppM{\baseB}{L}\bang(\varphi\aand\top)$ iff $\suppM{\baseB}{L}(\bang\varphi)\mand(\bang\top)$.
    \item Since $\varphi\aand\top \equiv \varphi$ and $\bang\top \equiv \mtop$, we therefore have that $\suppM{\baseB}{L}\bang(\varphi)$ iff $\suppM{\baseB}{L}(\bang\varphi)\mand\mtop$.
    \item By~\ref{BeS:ILL:mand}, the right hand side becomes $\suppM{\baseB}{L}\bang\varphi$ iff for all $\baseC\baseGeq\baseB$, atomic multisets $K$ and atoms $p$, if $\varphi\msetsum\mtop\suppM{\baseC}{K}p$, then $\suppM{\baseC}{L\msetsum K}p$.
    \item Since $\varphi\msetsum\mtop\suppM{\baseX}{M}\psi$ iff $\varphi\suppM{\baseX}{M}\psi$, our equivalence becomes $\suppM{\baseB}{L}\bang\varphi$ iff for all $\baseC\baseGeq\baseB$, atomic multisets $K$ and atoms $p$, if $\bang\varphi\suppM{\baseC}{K}p$, then $\suppM{\baseC}{L\msetsum K}p$.
    \item Finally, by~\ref{BeS:ILL:inf}, we have that $\suppM{\baseB}{L}\bang\varphi$ iff for all $\baseC\baseGeq\baseB$, atomic multisets $K$ and atoms $p$, if, for all $\baseD\baseGeq\baseC$ such that $\suppM{\baseD}{\emptymultiset}\bang\varphi$ implies $\suppM{\baseD}{K}p$, then $\suppM{\baseC}{L\msetsum K}p$.
\end{enumerate}

\noindent This line of reasoning allows one to correctly deduce a clause for~\ref{BeS:ILL:bang} with the right formulaic equivalences, such that the formulae on the right-hand side of the clause are of strictly lower weight than on the left-hand side, as discussed in Section~\ref{sec:BeS}. However, doing so seems to suggest that the meaning of~\ref{BeS:ILL:bang} comes from the~\ref{BeS:ILL:inf} clause, which is at odds with the purpose of the~\ref{BeS:ILL:bang} clause. In this case, we observe that Theorem~\ref{thm:cut-support} puts this issue to rest, as regardless of whether one uses~\ref{BeS:ILL:gen-inf} or~\ref{BeS:ILL:inf}, we can use the~\ref{BeS:ILL:bang} clause without modification. Nevertheless, one might note that the core of the definition of~\ref{BeS:ILL:bang} exactly meets the criterion for formulae with~\ref{BeS:ILL:bang} as a top-level connective on the left-hand side of a sequent in the~\ref{BeS:ILL:inf}; that is, the requirement for an object of the form $\suppM{\baseX}{\emptymultiset}\varphi$. So the question arises, should it not be possible to define~\ref{BeS:ILL:bang} as something along the lines of
\[
    \suppM{\baseB}{L}\bang\varphi \text{ iff } \suppM{\baseB}{\emptymultiset}\varphi \text{ and }L = ?
\]
where $L=?$ is to mean that the condition on $L$ is unclear. If one takes $L=\emptymultiset$ then such a clause fails to be both sound and complete as the requirement that $L$ be empty cannot be enforced \emph{prima facie}. However, the clause~\ref{BeS:ILL:bang} can be viewed as a reflection (in a sense related to the works of Halln\"{a}s and Schroeder-Heister~\cite{SchroederHeister_UniformPtSforLogicalConstants_1991, Schroeder-Heister2007-SCHGDR, HallnasPsH-ProofTheoreticApproachToLP} and the work of Gheorghiu, Gu and Pym~\cite{AlexTaoDavid_PtS4IMLL}) of the expression above with $L=\emptymultiset$. 

The connection between the~\ref{BeS:ILL:bang} clause and the condition $L=\emptymultiset$ is perhaps best understood in the context of the works of Wadler~\cite{Wadler_TasteOfLL_1993} and Pfenning et al.~\cite{PFENNING_DAVIES_2001, ChangChaudhuriPfenning-JudgementalAnalysisOfLinearLogic} and Dual Intuitionistic Linear Logic of Barber~\cite{Barber1996DualIL} where, following Andreolli~\cite{Andreoli_LPwithFocusingProofsInLL_1992}, we represent sequents 

\[\Gamma\seq\varphi \text{ as } \Theta\ctxt\Delta\seq \varphi\]

where the $\ctxt$ is meant to represent a separation of ``types" of hypotheses. Following Pfenning et al., elements of $\Theta$ obtain the interpretation that they are in some sense ``valid" assumptions, whereas elements of $\Delta$ are as before. As shown in~\cite{PFENNING_DAVIES_2001, ChangChaudhuriPfenning-JudgementalAnalysisOfLinearLogic}, the elements of $\Delta$ contain formulae which we can move into $\Theta$. However, when we do so, we prepend each such formula with a $(\bang)$. Similarly, if we move a formula from $\Delta$ to $\Theta$, it must first have a $(\bang)$ as top-level connection and we strip it away when moving it over. If we were to setup a semantic theory along judgements of this form, one would give a clause for~\ref{BeS:ILL:bang} along the lines of
\begin{align*}
    \suppL{\baseB}{G}{L}\bang\varphi &\mbox{ iff for all } \baseC\baseGeq\baseB, \mbox{ atomic multisets } K \mbox{ and atoms } p,\\ 
    &\mbox{ if } \varphi\ctxt\emptymultiset\suppL{\baseC}{G}{K}p, \mbox{ then } \suppL{\baseC}{G}{L,K}p    
\end{align*}

Under the interpretation of Pfenning et al. we see that the real meaning of $\suppM{\baseB}{L}\bang\varphi$ is given by what one can do with it if one assumes $\varphi$ to be \emph{valid}\footnote{One may indeed setup a semantics along this lines and obtain soundness and completeness results following the results presented in this paper. Is is perhaps now clear that the clause for~\ref{BeS:ILL:inf} presented in the semantics of this paper actually comes from this line of work, as $\bang$ formulae are treated completely separately.}. Returning to the semantics presented in this paper, we see that the interpretation of $\suppM{\baseB}{L}\bang\varphi$ expresses this assumed validity by saying that for all $\baseC\baseGeq\baseB$, atomic multisets $K$ and atoms $p$, if for all $\baseD\baseGeq\baseC\, \suppM{\baseD}{\emptymultiset}\varphi$ implies $\suppM{\baseD}{K}p$, then $\suppM{\baseC}{L\msetsum K}p$. The validity of $\varphi$ is perfectly captured by this clause with the expression that we consider all extensions $\baseD$ of the base $\baseC$ where $\varphi$ is a ``theorem" of the base, with modal-like behaviour. This isn't to say that $\varphi$ is necessarily a theorem of $\baseC$ but we consider all extensions where it is. If anything follows from such an assumption, then it is as if we have the assumption that $\bang\varphi$ is on the left-hand side of the support judgement and thus we indeed \emph{can} cut on it. Furthermore, as a result of the fact that $\varphi$ is a theorem, we have the requirement of before that the assumption of a $\bang\varphi$, that is, the assumption that $\varphi$ is a theorem, allows us to deduce as many copies of $\varphi$ as we need, i.e. such formulae are structural in nature. Similarly, a reverse reading shows that indeed we are necessitating if we are to conclude $\bang\varphi$, i.e. capturing the modal-like behaviour previously mentioned. Thus, it should be understood that this clause really is intrinsically capturing the definition of~\ref{BeS:ILL:bang}. Thus, the role of the persistent atoms, as defined in Section~\ref{sec:Basic-Derivability} should perhaps be clearer. Such atoms are those which \emph{may} be considered theorems of the base, but more importantly, they are the atoms whose behaviour is modal in nature. This distinction is important because we do not wish to consider all theorems and axioms of the base when working with persistent atoms, only that subset which has modal-like behaviour, which for us means, can be introduced using promotion-like reasoning. This is, of course, necessary for our completeness argument. Note, importantly, that the definition of a persistent atom does \emph{not} require that the atom is structural in any way. This is perhaps surprising as it says that as far as the semantics is concerned, modal behaviour is still restricted to what it has always been: promotion (or in other words, necessitation).

To conclude, I would like to discuss the prospects for the embedding of the semantics for IPL in our semantics for ILL. We know from~\cite{girard_LLSyntaxAndSemantics,Girard_LinearLogic_1987,Bierman_OnILL_1994} that there are many possible translations of formulae from IPL to \ILL{}, called Girard translations. We present a particular one below:
\begin{definition}
    The mapping $\iplill{\cdot}:Form_\text{IPL}\rightarrow Form_\text{ILL}$ can be defined as follows:
    \begin{enumerate}
        \item $p \mapsto p$, where $p$ is a propositional atom
        \item $\varphi\land\psi \mapsto \iplill{\varphi}\aand \iplill{\psi}$ 
        \item $\varphi \lor\psi \mapsto \bang\iplill{\varphi} \aor \bang\iplill{\psi}$ 
        \item $\varphi \supset \psi \mapsto \bang\iplill{\varphi} \mto \iplill{\psi}$
        \item $\bot \mapsto \iplill{\abot}$
    \end{enumerate}
\end{definition}
Girard, in~\cite{girard_LLSyntaxAndSemantics}, says that the crux of the translation is the following: $(\Gamma:\varphi)_{\text{IPL}}$ is intuitionistically provable if and only if $(\bang\iplill{\Gamma}:\iplill{\varphi})_{\text{ILL}}$ is linearly provable, a relevant proof of which can be found in~\cite{Bierman_OnILL_1994}. This translation ties in to our intuition that structurality in the hypothesis of a sequent is really properly represented by our treatment of ($\bang$) in (Inf). Since we have a sound and complete P-tS for IPL and now a sound and complete P-tS for \ILL{}, we therefore know that we can always map any \emph{valid} IPL sequent $\Gamma\suppIPL{\emptybase}\varphi$ (using $\suppIPL{\baseX}$ for the support relation of Sandqvist's semantics in~\cite{Tor2015}) to a valid sequent in \ILL{} of the form $\bang\iplill{\Gamma}\suppM{\emptybase}{\emptymultiset}\iplill{\varphi}$, and vice-versa. This result is interesting as it gives a way of analysing valid sequents of IPL in the framework we have setup for \ILL{} which is certainly not without its quirks (for example, consider how disjunction maps over!). However, a natural question to ask would be how may one generalise this mapping? What if we were given a formula and base in which inference of the formula is supported i.e. $\suppIPL{\baseB}\varphi$, and wanted to try and understand it in the linear setting, i.e. to find a multiset $L$ and a base $\iplill{\baseB}$ such that the support relation $\suppM{\iplill{\baseB}}{L}\iplill{\varphi}$ now holds? Whilst it is obvious that the formula is mappable directly, we are then stuck with how to obtain $L$ and $\iplill{\baseB}$, as rules and atomic derivability in the two semantics are \emph{quite} differently behaved. At present, it is not clear to me how this mapping should be done, though I do believe such a mapping between the semantics of Sandqvist and ours for \ILL{} is possible. However, I believe that instead, the correct approach to take if we are committed to this line of investigation, is to define a support relation for IPL that is in some sense much closer to ours for \ILL{}, whose treatment of formulae is closer to our own and whose atomic derivability relation mirrors ours in how it uses rules of the base, and whose base rules may also include the additional structure that ours do. Whilst it can easily be shown that one can have a sound and complete P-tS for IPL which keeps track of atoms in much the same way as ours for \ILL{} does, I have had no luck in finding a way of constructing such a mapping. I thus leave this problem open for further study.

\backmatter
\begin{appendices}
\section{Omitted proofs from Sections~\ref{sec:BeS} and~\ref{sec:Completeness}}\label{secA:Proofs}
This appendix contains some lemmas and proofs that were deemed too long to include in the main body of the paper. They are included to provide a complete account of the results presented in the main body of the paper, however the technical details of the proofs are not particularly enlightening. In each case, to aid the reader, we restate the lemma before its' proof.

\begin{lemma*}
    Given $\suppM{\baseB}{L} \varphi \mand \psi$ and $\varphi\msetsum\psi \suppM{\baseB}{K} \chi$ then $\suppM{\baseB}{L\msetsum K} \chi$ holds.
\end{lemma*}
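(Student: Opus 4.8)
The plan is to prove the statement by induction on the degree of $\chi$, reducing every case to the atomic clause $(\mand)$. Throughout I use the readings of the connective clauses afforded by monotonicity of support (Lemma~\ref{lem:monotone-support}); in particular I read $(\mand)$ as: $\suppM{\baseB}{L}\varphi\mand\psi$ holds iff, for all $\baseC\baseGeq\baseB$, every atomic multiset $M$ and every atom $p$, $\varphi\msetsum\psi\suppM{\baseC}{M}p$ implies $\suppM{\baseC}{L\msetsum M}p$. The base cases are immediate: for $\chi=p$ the second hypothesis is literally the antecedent of this reading of $(\mand)$, so $\suppM{\baseB}{L\msetsum K}p$ follows; and $\chi=\top$ holds by $(\top)$.

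For the ``elimination-shaped'' connectives — $\chi$ equal to $\alpha\mand\beta$, $\mtop$, $\alpha\aor\beta$, $\abot$, or $\bang\alpha$ — the uniform strategy is: unfold the defining clause of $\chi$ until it speaks only of atoms; use Theorem~\ref{thm:cut-support} to transport the context $\varphi\msetsum\psi$ across base extensions; then finish with the atomic clause $(\mand)$. Concretely, with $\chi=\bang\alpha$ as the representative case: to show $\suppM{\baseB}{L\msetsum K}\bang\alpha$, fix $\baseC\baseGeq\baseB$, an atomic multiset $M$ and an atom $p$ such that $\suppM{\baseD}{\emptymultiset}\alpha$ implies $\suppM{\baseD}{M}p$ for every $\baseD\baseGeq\baseC$, and aim for $\suppM{\baseC}{L\msetsum K\msetsum M}p$. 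From $\varphi\msetsum\psi\suppM{\baseB}{K}\bang\alpha$, Lemma~\ref{lem:monotone-support} and Theorem~\ref{thm:cut-support}, for every $\baseE\baseGeq\baseC$ and atomic $N$ with $\suppM{\baseE}{N}\varphi\msetsum\psi$ one has $\suppM{\baseE}{K\msetsum N}\bang\alpha$; feeding the fixed $M$ and $p$ into $(\bang)$ — their side condition survives passing from $\baseC$ to $\baseE$ by transitivity of $\baseGeq$ — yields $\suppM{\baseE}{K\msetsum M\msetsum N}p$. Refolding with Theorem~\ref{thm:cut-support} gives $\varphi\msetsum\psi\suppM{\baseC}{K\msetsum M}p$, and applying the $(\mand)$ reading to $\suppM{\baseC}{L}\varphi\mand\psi$ (available by monotonicity) produces $\suppM{\baseC}{L\msetsum K\msetsum M}p$. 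The cases $\alpha\mand\beta$, $\mtop$, $\alpha\aor\beta$, $\abot$ have exactly the same shape, replacing $(\bang)$ by the corresponding clause and with fewer nested extension quantifiers to track.

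For the ``introduction-shaped'' connectives, $\chi=\alpha\mto\beta$ and $\chi=\alpha\aand\beta$, unfolding the clause of $\chi$ exposes a strictly smaller target — $\beta$ for $\mto$ (after absorbing the antecedent $\alpha$ via an arbitrary multiset supporting it) and $\alpha$ together with $\beta$ for $\aand$. In each case one repackages, again via Theorem~\ref{thm:cut-support}, into a support judgement of the form $\varphi\msetsum\psi\suppM{\baseC}{K'}\chi'$ with $\chi'$ a proper subformula of $\chi$, to which the induction hypothesis — together with $\suppM{\baseC}{L}\varphi\mand\psi$ from monotonicity — applies, and then one undoes the unfolding.

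The main obstacle is not conceptual but the bookkeeping of atomic multisets and base extensions through the quantifier nests, and it is worst in the $\bang$ case, where one must check that the hypothesis ``$\suppM{\baseD}{\emptymultiset}\alpha$ implies $\suppM{\baseD}{M}p$ for all $\baseD\baseGeq\baseC$'' is not lost when one moves to a further extension $\baseE\baseGeq\baseC$. A subsidiary point worth care is that the repackaging steps genuinely require the general deduction property of Theorem~\ref{thm:cut-support} rather than the bare clause (Inf), since $\varphi$ and $\psi$ may themselves carry $\bang$ as their principal connective.
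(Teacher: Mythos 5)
Your proof is correct and follows essentially the same route as the paper's: induction on $\chi$, with each case reduced --- via monotonicity and the unfolding of the relevant clause down to atoms --- to an application of the $(\mand)$ clause of the first hypothesis, exactly as in the paper's treatment of the $\aand$, $\aor$, $\abot$ and $\bang$ cases. Your one refinement, invoking Theorem~\ref{thm:cut-support} rather than the bare (Inf) clause to transport $\varphi\msetsum\psi$ across base extensions when $\varphi$ or $\psi$ may have $\bang$ as principal connective, is a legitimate sharpening of a point the paper glosses over, and it introduces no circularity since Theorem~\ref{thm:cut-support} is established independently of this lemma.
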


\begin{proof}[(Proof of Lemma~\ref{lem:mand-key-lemma})]
    In this proof, we only consider the additive connectives. For the multiplicative connectives, I refer the reader to~\cite{AlexTaoDavid_PtS4IMLL}.
    We proceed by proving by induction on the structure of $\chi$. 
    \begin{itemize}[label={-}] 
        \item $\chi=\alpha \aand \beta$. By ($\aand$), we see that we need to show that $\suppM{\baseB}{L\msetsum K}\alpha$ and $\suppM{\baseB}{L\msetsum K}\beta$. 
        
        The second hypothesis states that $\varphi\msetsum\psi \suppM{\baseB}{K}\alpha \aand \beta$ which by ($\aand$) and (Inf) gives $\varphi\msetsum\psi \suppM{\baseB}{K}\alpha$ and $\varphi\msetsum\psi \suppM{\baseB}{K}\beta$. Then, we apply the inductive hypothesis which gives us that $\suppM{\baseB}{L\msetsum K}\alpha$ and $\suppM{\baseB}{L\msetsum K}\beta$ which by ($\aand$) gives $\suppM{\baseB}{L\msetsum K}\alpha \aand \beta$, as required.
        \item $\chi=\alpha \aor \beta$. Spelling out the conclusion of the lemma gives that that we want to show that for all $\baseC \baseGeq \baseB$ atomic multisets $M$ and atoms $p$ such that $\alpha \suppM{\baseC}{M}p$ and $\beta \suppM{\baseC}{M}p$ implies that $\suppM{\baseC}{L\msetsum K\msetsum M}p$. 
        To do that it suffces to show the following two things:
        \begin{itemize}
            \item $\suppM{\baseC}{L}\varphi \mand \psi$. This holds by monotonicity from the first hypothesis.
            \item $\varphi\msetsum\psi \suppM{\baseC}{K\msetsum M} p$. To show this we suppose that we have for all $\baseD \baseGeq \baseC$ and atomic multisets $N$ that $\suppM{\baseD}{N}\varphi \msetsum\psi$. Thus, since $\varphi\msetsum\psi \suppM{\baseB}{K} \alpha\aor\beta$ we get that $\suppM{\baseD}{K\msetsum N} \alpha \aor \beta$. Unfolding the definition of ($\aor$) gives that we have that for all $\base{E} \baseGeq \baseD$, atomic multisets $M$ and atoms $p$ such that $\alpha\suppM{\base{E}}{M}p$ and $\beta\suppM{\base{E}}{M}p$ implies $\suppM{\base{E}}{K\msetsum N\msetsum M}p$ and in particular when $\base{E}=\baseD$ that $\suppM{\baseD}{K\msetsum N\msetsum M}p$. Thus we conclude that $\varphi\msetsum\psi \suppM{\baseC}{K\msetsum M}p$, as required.
        \end{itemize}
        To finish the argument, we unfold $\suppM{\baseC}{L}\varphi \mand \psi$ according to ($\mand$) which gives that for all $\baseD \baseGeq \baseC$, atomic multisets $V$ and atoms $p$ such that $\varphi\msetsum\psi \suppM{\baseD}{V}p$ implies that $\suppM{\baseD}{L\msetsum V}p$. In particular this holds when $\baseD=\baseC$ and $V = K\msetsum M$. Thus we conclude that $\suppM{\baseC}{L\msetsum K\msetsum M}p$, as required.
        \item $\chi=\abot$. To show this we start by unpacking $\varphi\msetsum\psi \suppM{\baseB}{K} \chi$ which gives that we have that for all atomic $p$ and $M$ that $\varphi\msetsum\psi \suppM{\baseB}{K\msetsum M} p$. Further unpacking $\suppM{\baseB}{L} \varphi \mand \psi$ according to ($\mand$) gives that for all $\baseC \baseGeq \baseB$, atomic multisets $V$ and atoms $p$, if $\varphi\msetsum\psi \suppM{\baseC}{V}p$ then $\suppM{\baseC}{V}p$. Since by hypothesis we have for all $p$ and $M$ that $\varphi\msetsum\psi \suppM{\baseB}{K\msetsum M} p$, we can conclude that $\suppM{\baseB}{L\msetsum K\msetsum Q} p$ for all $p$ and all $Q$ which equivalently gives $\suppM{\baseB}{L\msetsum K} \abot$, as required.
        \item $\chi=\bang \alpha$. Unfolding the conclusion gives that supposing that for all bases $\baseC\baseGeq\baseB$, atomic multisets $M$ and atoms $p$, such that $\bang \alpha\suppM{\baseC}{M}p$ we want to show that $\suppM{\baseC}{L\msetsum K\msetsum M}p$.
        To this end, we prove the following:
        \begin{itemize}
            \item $\suppM{\baseC}{L}\varphi \mand \psi$. This holds by monotonicity from the first hypothesis.
            \item $\varphi\msetsum\psi\suppM{\baseC}{K\msetsum M}p$. To show this, we start from the second hypothesis which gives by (Inf), for all bases $\baseD\baseGeq\baseC$ and atomic multisets $N$ such that $\suppM{\baseD}{N}\varphi\msetsum\psi$, then $\suppM{\baseD}{K\msetsum N}\bang\alpha$. This, by ($\bang$), is equivalent to considering further all extensions $\baseE\baseGeq\baseD$, atomic multisets $Q$ and atoms $p\in\At$ such that if $\bang\alpha\suppM{\baseE}{Q}p$ then $\suppM{\baseE}{K\msetsum N\msetsum Q}p$. By our additional hypothesis and monotonicity, if we consider the case when $\baseE=\baseD$ and $Q=M$, we obtain therefore $\suppM{\baseD}{K\msetsum N\msetsum M}p$, which by (Inf) gives $\varphi\msetsum\psi\suppM{\baseC}{K\msetsum M}p$, as required.
        \end{itemize}
        To finish the argument, we note that the first point is equivalent to saying for all $\baseX\baseGeq\baseC$, atomic multisets $N$, and atoms $p\in\At$, if $\varphi\msetsum\psi\suppM{\baseX}{N}p$ then we obtain $\suppM{\baseX}{L\msetsum N}p$. By the second point, setting $\baseX=\baseC$ and $N=K\msetsum M$ we thus obtain $\suppM{\baseC}{L\msetsum K\msetsum M}p$, as required.
    \end{itemize}
\end{proof}

\begin{lemma*}
    Given $\suppM{\baseB}{L} \mtop$ and $\suppM{\baseB}{K} \chi$ then $\suppM{\baseB}{L\msetsum K} \chi$ holds.
\end{lemma*}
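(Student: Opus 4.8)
The plan is to prove the statement by induction on the structure of $\chi$, in close parallel with the proof of Lemma~\ref{lem:mand-key-lemma}. The base case, where $\chi = p$ is atomic, carries the entire content and is an immediate unpacking of the $(\mtop)$ clause: from the hypotheses $\suppM{\baseB}{K}p$ and $\suppM{\baseB}{L}\mtop$, clause~\ref{BeS:ILL:mtop} instantiated at $\baseC = \baseB$ with the given $K$ and $p$ yields $\suppM{\baseB}{L\msetsum K}p$ at once. The constant cases are also short: $\chi = \top$ holds vacuously by~\ref{BeS:ILL:atop}, and for $\chi = \abot$ one unfolds~\ref{BeS:ILL:abot} on the hypothesis $\suppM{\baseB}{K}\abot$ and reduces, for each atom $p$ and each atomic multiset on the right, to the atomic base case.

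For each remaining connective the strategy is uniform. First unfold the definitional clause for $\suppM{\baseB}{L\msetsum K}\chi$, which presents the goal as an implication; assume its premises, which live in some extension $\baseC \baseGeq \baseB$ with some atomic multiset $M$ and atom $p$ where applicable. Then use Lemma~\ref{lem:monotone-support} to lift $\suppM{\baseB}{L}\mtop$ to $\suppM{\baseC}{L}\mtop$, and apply the clause for $\chi$ to the hypothesis $\suppM{\baseB}{K}\chi$ to discharge everything except the extra resource $L$. Finally close the gap: for $\aand$ the goal decomposes into supporting $\alpha$ and $\beta$ from $L\msetsum K$, handled by two invocations of the inductive hypothesis; for $\mto$ the goal reduces to supporting the consequent $\beta$ from $L\msetsum K\msetsum M$, again by the inductive hypothesis; and for $\mand$, $\aor$ and $\bang$ the residual goal is about an atom $p$, so it is dispatched by the already-established atomic base case. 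In the $\bang$ case I would use the simplified form of~\ref{BeS:ILL:bang} recorded just before Lemma~\ref{lem:mand-key-lemma}, to avoid an extra layer of base quantification; the purely multiplicative connectives are treated exactly as in~\cite{AlexTaoDavid_PtS4IMLL}.

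I do not anticipate a real obstacle: the lemma states precisely that $\mtop$ acts as a neutral ``empty resource'' for cut, and the $(\mtop)$ clause was designed to make this true. The only thing requiring care is bookkeeping — tracking which base and which atomic multiset each sub-judgement sits over, and remembering to invoke monotonicity before re-using $\suppM{\baseB}{L}\mtop$ in an extension. This is strictly easier than Lemma~\ref{lem:mand-key-lemma}, since only a single formula $\chi$ is eliminated rather than a pair $\varphi\msetsum\psi$, so no partition of the resources attached to the eliminated side is needed.
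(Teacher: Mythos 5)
Your proposal is correct and follows essentially the same route as the paper's proof in Appendix~\ref{secA:Proofs}: induction on the structure of $\chi$, with the atomic case read off directly from the $(\mtop)$ clause, monotonicity (Lemma~\ref{lem:monotone-support}) used to lift $\suppM{\baseB}{L}\mtop$ into extensions, and the definitional clause of $\chi$ applied to the second hypothesis before the residual atomic or subformula goal is closed by the base case or the inductive hypothesis. The only presentational difference is that the paper writes out just the additive and $\bang$ cases, deferring the multiplicative ones to the IMLL paper, exactly as you note.
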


\begin{proof}[(Proof of Lemma~\ref{lem:mtop-key-lemma})]
    Again, in this proof, we only consider the additive connectives. For the multiplicative connectives, I once more refer the reader to~\cite{AlexTaoDavid_PtS4IMLL}.
    We proceed by proving by induction on the structure of $\chi$. 
    \begin{itemize}[label={-}]
        \item $\chi=\alpha \aand \beta$. 
        We starting from $\suppM{\baseB}{K}\alpha \aand \beta$ which by ($\aand$) gives $\suppM{\baseB}{K}\alpha$ and $\suppM{\baseB}{K}\beta$. We then apply the inductive hypothesis from which it follows that $\suppM{\baseB}{L\msetsum K}\alpha$ and $\suppM{\baseB}{L\msetsum K}\beta$, which by ($\aand$) gives $\suppM{\baseB}{L\msetsum K}\alpha \aand \beta$, as required.
        \item $\chi=\alpha \aor \beta$. Unfolding the conclusion gives that for all $\baseC \baseGeq \baseB$, atomic multisets $M$ and atoms $p$ if $\alpha \suppM{\baseC}{M} p$ and $\beta \suppM{\baseC}{M}$ then $\suppM{\baseC}{L\msetsum K\msetsum M}p$. Thus given such an $\alpha \suppM{\baseC}{M}p$ and $\beta \suppM{\baseC}{M}$ we want to show $\suppM{\baseC}{L\msetsum K\msetsum M}p$. To show this, we do the following:
        \begin{itemize}
            \item[-] $\suppM{\baseC}{L}\mtop$. This follows by monotonicity.
            \item[-] $\alpha \suppM{\baseC}{K\msetsum M} p$. Starting from $\alpha \suppM{\baseC}{M} p$, by (Inf) we have that for all $\baseD \baseGeq \baseC$ and atomic multisets $N$ such that $\suppM{\baseD}{N} \alpha$ implies $\suppM{\baseD}{K\msetsum M\msetsum N} p$. By the previous fact, we thus have $\suppM{\baseD}{K\msetsum M\msetsum N} p$ which by (Inf) gives $\alpha \suppM{\baseC}{K\msetsum M} p$, as required.
            \item[-] $\beta \suppM{\baseC}{K\msetsum M} p$. The proof of this case is identical to the previous case.
        \end{itemize}
        We thus have sufficient grounds to use the second hypothesis to conclude $\suppM{\baseC}{L\msetsum K\msetsum M}p$, as required.
        \item $\chi=\abot$. Unfolding the second hypothesis gives us that for all atoms $p$ and $M$ we have $\suppM{\baseB}{K\msetsum M} p$. Using the first hypothesis we get that this implies that for all $p$ and $M$ that $\suppM{\baseB}{L\msetsum K\msetsum M} p$. Thus we conclude $\suppM{\baseB}{L\msetsum K} \abot$, as required.
        \item $\chi=\bang \alpha$. Unfolding the conclusion gives that supposing that for all bases $\baseC\baseGeq\baseB$, atomic multisets $M$ and atoms $p$, such that $\bang \alpha\suppM{\baseC}{M}p$ we want to show that $\suppM{\baseC}{L\msetsum K\msetsum M}p$.
        To this end, we prove the following:
        \begin{itemize}
            \item $\suppM{\baseC}{L}\mtop$. This holds by monotonicity from the first hypothesis.
            \item $\suppM{\baseC}{K\msetsum M}p$. To show this, we start from the second hypothesis which by ($\bang$) is equivalent to considering further all extensions $\baseC\baseGeq\baseB$, atomic multisets $N$ and atoms $p\in\At$ such that if $\bang\alpha\suppM{\baseC}{N}p$ then $\suppM{\baseC}{K\msetsum N}p$. By our additional hypothesis, we consider the case when $N=M$, thus giving $\suppM{\baseC}{K\msetsum M}p$, as required.
        \end{itemize}
        To finish the argument, we note that the first point is equivalent to saying for all $\baseX\baseGeq\baseC$, atomic multisets $N$, and atoms $p\in\At$, if $\suppM{\baseX}{N}p$ then we obtain $\suppM{\baseX}{L\msetsum N}p$. By the second point, setting $\baseX=\baseC$ and $N=K\msetsum M$ we thus obtain $\suppM{\baseC}{L\msetsum K\msetsum M}p$, as required.
    \end{itemize}
\end{proof}

\begin{lemma*}
     Given that $\suppM{\baseB}{L} \varphi \aor \psi$, $\varphi \suppM{\baseB}{K} \chi$ and $\psi \suppM{\baseB}{K} \chi$ all hold then $\suppM{\baseB}{L\msetsum K} \chi$.
\end{lemma*}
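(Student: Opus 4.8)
The plan is to follow the template of Lemmas~\ref{lem:mand-key-lemma} and~\ref{lem:mtop-key-lemma}: argue by induction on the structure of $\chi$, treating the base case together with the additive and modal connectives explicitly, and referring to~\cite{AlexTaoDavid_PtS4IMLL} for the purely multiplicative cases $\chi = \alpha\mand\beta$, $\chi = \alpha\mto\beta$ and $\chi = \mtop$, whose handling in this induction has the same shape as there; the case $\chi = \top$ is immediate from $(\top)$. In the base case $\chi = p$, the hypotheses read $\suppM{\baseB}{L}\varphi\aor\psi$, $\varphi\suppM{\baseB}{K}p$ and $\psi\suppM{\baseB}{K}p$, and instantiating $(\aor)$ at $\baseC = \baseB$, atomic multiset $K$ and atom $p$ yields $\suppM{\baseB}{L\msetsum K}p$ directly --- the $(\aor)$ clause being tailored precisely for cuts of this form. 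For $\chi = \alpha\aand\beta$, from $\varphi\suppM{\baseB}{K}\alpha\aand\beta$ and $\psi\suppM{\baseB}{K}\alpha\aand\beta$ I would extract, via $(\aand)$ and (Inf), that $\varphi\suppM{\baseB}{K}\alpha$, $\psi\suppM{\baseB}{K}\alpha$ and likewise for $\beta$, apply the inductive hypothesis to each conjunct to get $\suppM{\baseB}{L\msetsum K}\alpha$ and $\suppM{\baseB}{L\msetsum K}\beta$, and recombine with $(\aand)$.

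For $\chi = \abot$ I would mimic the $\abot$ case of Lemma~\ref{lem:mtop-key-lemma}: fix an arbitrary atom $p$ and atomic multiset $M$; unfold $\varphi\suppM{\baseB}{K}\abot$ through (Inf) and $(\abot)$ to obtain $\varphi\suppM{\baseB}{K\msetsum M}p$, and likewise $\psi\suppM{\baseB}{K\msetsum M}p$; then cut against $\suppM{\baseB}{L}\varphi\aor\psi$ by instantiating $(\aor)$ at $\baseC = \baseB$, multiset $K\msetsum M$ and atom $p$ to get $\suppM{\baseB}{L\msetsum K\msetsum M}p$; since $p$ and $M$ were arbitrary this is $\suppM{\baseB}{L\msetsum K}\abot$. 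The two cases carrying real content are $\chi = \alpha\aor\beta$ and $\chi = \bang\alpha$, and both follow the pattern used for these connectives in the previous two lemmas. For $\chi = \alpha\aor\beta$ I would unfold the goal through $(\aor)$: fix $\baseC\baseGeq\baseB$, multiset $M$ and atom $p$ with $\alpha\suppM{\baseC}{M}p$ and $\beta\suppM{\baseC}{M}p$, targeting $\suppM{\baseC}{L\msetsum K\msetsum M}p$; lift $\suppM{\baseB}{L}\varphi\aor\psi$ to $\baseC$ by Lemma~\ref{lem:monotone-support} and instantiate $(\aor)$ again, reducing the goal to $\varphi\suppM{\baseC}{K\msetsum M}p$ and $\psi\suppM{\baseC}{K\msetsum M}p$; each of these comes from unfolding $\varphi\suppM{\baseB}{K}\alpha\aor\beta$ (resp.\ for $\psi$) through (Inf) and then $(\aor)$, feeding in the monotone-lifted $\alpha\suppM{\baseD}{M}p$ and $\beta\suppM{\baseD}{M}p$. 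The case $\chi = \bang\alpha$ is identical except that the inner layer uses the simplified $(\bang)$ clause (stated just after Lemma~\ref{lem:ILL-validity}) in place of $(\aor)$, and monotonicity of $\suppM{\baseB}{L}$ is used once more to transport $\bang\alpha\suppM{\baseC}{M}p$ up to the base where it is needed.

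The hard part will be bookkeeping rather than ideas: in the nested-disjunction and modal cases one must track, for each clause being unfolded, exactly which base extension and which atomic multiset it is instantiated at, so that the fragments align to give precisely $\suppM{\baseC}{L\msetsum K\msetsum M}p$ with no spurious extra resources. With the instantiations chosen as above ($\baseD = \baseC$, inner multiset $N = K\msetsum M$, inner atom $q = p$, and Lemma~\ref{lem:monotone-support} used to transport the given $\alpha,\beta$-supports and the hypothesis $\suppM{\baseB}{L}\varphi\aor\psi$ up to the relevant base), the argument closes exactly as in the proofs of Lemmas~\ref{lem:mand-key-lemma} and~\ref{lem:mtop-key-lemma}.
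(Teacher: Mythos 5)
Your proposal is correct and follows essentially the same route as the paper's own proof: induction on the structure of $\chi$, with the atomic case discharged by instantiating the $(\aor)$ clause directly at $\baseC=\baseB$, the $\aand$ case by componentwise use of the inductive hypothesis, and the quantified cases ($\aor$, $\bang$, $\abot$, and the multiplicatives) by the monotonicity--unfold--recombine pattern with the instantiations $\baseD=\baseC$, $N=K\msetsum M$, $q=p$. The only cosmetic difference is that the paper writes out $\chi=\alpha\mand\beta$ explicitly for this lemma rather than deferring it, but the argument it gives is exactly the template you describe.
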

\begin{proof}[(Proof of Lemma~\ref{lem:aor-key-lemma})]
    In this case we consider the base case, one multiplicative and one additive case. The other cases follow similarly.
    We proceed by induction on the structure of $\chi$.    
    \begin{itemize}[label={-}]
        \item $\chi = p$, for atomic $p$. The second and third hypotheses combined give sufficient conditions to conclude from the first hypothesis and ($\aor$) that $\suppM{\baseB}{L\msetsum K} p$.this
        \item $\chi=\alpha \aand \beta$. From the second hypothesis and by ($\aand$) and (Inf) we get $\varphi \suppM{\baseB}{K}\alpha$ and $\varphi \suppM{\baseB}{K}\beta$. Arguing similarly for the third hypothesis we get $\psi \suppM{\baseB}{K}\alpha$ and $\psi \suppM{\baseB}{K}\beta$. Then by applying the inductive hypothesis we obtain that $\suppM{\baseB}{L\msetsum K}\alpha$ and $\suppM{\baseB}{L\msetsum K}\beta$. Thus, by ($\aand$), we conclude $\suppM{\baseB}{L\msetsum K}\alpha \aand \beta$.
        \item $\chi=\alpha \mand \beta$. Unfolding the conclusion gives that we want to show that for all $\baseC \baseGeq \baseB$, atomic mulitsets $M$, atoms $p$ such that $\alpha\msetsum\beta \suppM{\baseC}{M} p$ then we can conclude $\suppM{\baseC}{L\msetsum K\msetsum M} p$. To do this, we need to show three things:
        \begin{itemize}
            \item $\suppM{\baseC}{L} \varphi \aor \psi$. This follows by monotonicity.
            \item $\varphi \suppM{\baseC}{K\msetsum M} p$. To show this, suppose we have that for all $\baseD \baseGeq \baseC$ and atomic multisets $N$ such that $\suppM{\baseD}{N}\varphi$. Then we have by the second hypothesis that $\suppM{\baseD}{K\msetsum N} \alpha \mand \beta$. Thus by the definition of ($\mand$) we have that for all $\base{E} \baseGeq \baseD$, atomic multisets $Q$ and atomic $p$ if $\alpha\msetsum\beta\suppM{\baseE}{Q}p$ then $\suppM{\base{E}}{K\msetsum N\msetsum Q}p$. In particular, this holds when $\base{E} = \baseD$ and when $Q = M$, so we get $\suppM{\baseD}{K\msetsum N\msetsum M}p$, and thus, we conclude that $\varphi \suppM{\baseC}{K\msetsum M}p$.
            \item $\psi \suppM{\baseC}{K\msetsum M} p$. Follows similarly to the previous case.
        \end{itemize}
        Thus, from the first point, we have that for all $\baseD \baseGeq \baseC$, atomic multisets $V$ and atoms $p$, if $\varphi \suppM{\baseD}{V} p$ and $\psi \suppM{\baseD}{V} p$ then $\suppM{\baseD}{L\msetsum V} p$. Thus, by considering when $\baseD=\baseC$ and $V = K\msetsum M$, we get that $\suppM{\baseC}{L\msetsum K\msetsum M} p$, as required.
    \end{itemize}
    All other cases follow similarly, thus concluding the lemma.
\end{proof}

\noindent Finally, we conclude this Appendix with the remaining cases in the proof of Lemma~\ref{lem:completeness-atomic-definitions}. The statement of the Lemma below contains only the missing cases of Lemma~\ref{lem:completeness-atomic-definitions}.
\begin{lemma*}
    The following hold for all $\baseB\baseGeq\baseILL$ and atomic multisets $L$:
    \begin{enumerate}
        \item $L\deriveBaseM{\baseB}\flatILL{(\varphi\mand\psi)}$ iff for all $\baseC\baseGeq\baseB$, atomic multisets $K$ and atoms $p$, if $\flatILL{\varphi}\msetsum\flatILL{\psi}\msetsum K\deriveBaseM{\baseC}p$ then $L\msetsum K\deriveBaseM{\baseC}p$
        \item $L\deriveBaseM{\baseB}\flatILL{(\varphi\mto\psi)}$ iff $L\msetsum\flatILL{\varphi}\deriveBaseM{\baseB}\psi$
        \item $L\deriveBaseM{\baseB}\flatILL{\mtop}$ iff for all $\baseC\baseGeq\baseB$, atomic multisets $K$ and atoms $p$, if $K\deriveBaseM{\baseC}p$ then $L\msetsum K\deriveBaseM{\baseC}p$
        \item $L\deriveBaseM{\baseB}\flatILL{(\varphi\aand\psi)}$ iff $L\deriveBaseM{\baseB}\flatILL{\varphi}$ and $L\deriveBaseM{\baseB}\flatILL{\psi}$
        \item $L\deriveBaseM{\baseB}\flatILL{(\varphi\aor\psi)}$ iff for all $\baseC\baseGeq\baseB$, atomic multisets $K$ and atoms $p$, if $K\msetsum\flatILL{\varphi}\deriveBaseM{\baseC}p$ and $K\msetsum\flatILL{\psi}\deriveBaseM{\baseC}p$ then $L\msetsum K\deriveBaseM{\baseC}p$
        \item $L\deriveBaseM{\baseB}\flatILL{\top}$ always
        \item $L\deriveBaseM{\baseB}\flatILL{\abot}$ iff $L\msetsum K\deriveBaseM{\baseB}p$, for all atomic multisets $K$ and atoms $p$
    \end{enumerate}
\end{lemma*}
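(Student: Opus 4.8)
The plan is to handle each of the seven cases by the same recipe: unfold the relevant rule of $\baseILL$ from Figure~\ref{fig:SimulationBase} and plug it into the \eqref{eq:derive-app} clause of Definition~\ref{def:derivability-base-ILL}, so that the statement to be proved becomes a near-verbatim restatement of the corresponding clause of Figure~\ref{fig:ILL:support} with $\deriveBaseM{\baseB}$ in place of $\suppM{\baseB}{}$. A single observation trims all the bureaucracy: every rule involved in cases $1$--$7$ has empty modal component $\mathbf{S}$, so in each use of \eqref{eq:derive-app} the multiset $D$ of persistent atoms is empty and disappears; consequently, and unlike the $\bang$ case already treated, the elimination rules of $\baseILL$ alone suffice and no appeal to Lemma~\ref{lem:atomic-cut} is needed. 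Throughout I will use \eqref{eq:derive-ref} to discharge trivial premises and Lemma~\ref{lem:monotone-derivability} to move a derivation from $\baseB$ up to an extension $\baseC$.

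First, for the ``introduction'' directions --- the right-to-left halves of cases $1$, $3$, $5$, the displayed direction of case $2$, the single direction of case $4$, and case $6$ --- I would specialise the universally quantified hypothesis at $\baseC=\baseB$, $K=\emptymultiset$, and $p$ equal to the flattened formula being characterised, reducing the goal to producing an antecedent of the shape $\flatILL{\alpha_1}\msetsum\dots\msetsum\flatILL{\alpha_k}\deriveBaseM{\baseB}\flatILL{\varphi}$; this antecedent is then delivered by one application of the corresponding introduction rule of $\baseILL$ with its premises filled in by \eqref{eq:derive-ref}, after which the hypothesis yields $L\deriveBaseM{\baseB}\flatILL{\varphi}$. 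Dually, for the ``elimination'' directions --- the left-to-right halves of cases $1$--$5$ and $7$ --- I would take the hypothesis $L\deriveBaseM{\baseB}(\text{flattened formula})$, push it to the ambient base $\baseC$ by Lemma~\ref{lem:monotone-derivability}, and apply the corresponding elimination rule of $\baseILL$ via \eqref{eq:derive-app}: the major-premise box is fed by this copy of the hypothesis (with context $L$) and the remaining boxes by the given side hypotheses or by \eqref{eq:derive-ref}, so that \eqref{eq:derive-app} returns exactly the required sequent. Case $6$ is simply one application of $\flatILL{\irn\top}$ with $n=|L|$ premises closed by \eqref{eq:derive-ref}, and the right-to-left half of case $7$ is the specialisation $K=\emptymultiset$, $p=\flatILL{\abot}$ of its own right-hand side.

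The hard part --- really the only part needing care --- is the correct bookkeeping of the additive (shared-context) boxes, which is exactly where the extra structure of Definition~\ref{def:atomic-rule} over the IMLL rules of~\cite{AlexTaoDavid_PtS4IMLL} is exercised. Concretely, $\flatILL{\irn\aand}$ carries a single box containing both $\seq\flatILL{\varphi}$ and $\seq\flatILL{\psi}$, so \eqref{eq:derive-app} forces one common context $C_1$ for both; taking $C_1=L$ (and closing each premise by \eqref{eq:derive-ref}) is what makes $L\deriveBaseM{\baseB}\flatILL{(\varphi\aand\psi)}$ follow from $L\deriveBaseM{\baseB}\flatILL{\varphi}$ \emph{and} $L\deriveBaseM{\baseB}\flatILL{\psi}$ rather than from two disjoint contexts, in contrast to the two separate boxes of $\flatILL{\irn\mand}$. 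Symmetrically, $\flatILL{\ern\aor}$ has a box containing $\flatILL{\varphi}\seq p$ and $\flatILL{\psi}\seq p$, so \eqref{eq:derive-app} demands a common $C_2$ with $C_2\msetsum\flatILL{\varphi}\deriveBaseM{\baseC}p$ and $C_2\msetsum\flatILL{\psi}\deriveBaseM{\baseC}p$, and taking $C_2=K$ is precisely what the two side hypotheses licence; and for the variadic rules $\flatILL{\irn\top}$ and $\flatILL{\ern\abot}$ one must additionally partition the ambient multiset into singletons, one per premise box, before applying \eqref{eq:derive-ref}. Once these box manipulations are set up correctly, every case closes by inspection.
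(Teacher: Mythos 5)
Your proof follows the paper's own argument essentially step for step: for each left-to-right direction you push the hypothesis to $\baseC$ by Lemma~\ref{lem:monotone-derivability} and feed it, together with the given side derivations, into one application of the corresponding elimination rule via \eqref{eq:derive-app}; for each converse you specialise the quantified hypothesis at $\baseC=\baseB$, $K=\emptymultiset$ and $p$ the flattened formula, discharging the resulting antecedent by the introduction rule with \eqref{eq:derive-ref}; and your handling of the shared additive box in $\flatILL{\irn\aand}$ and $\flatILL{\ern\aor}$ and of the variadic rules matches the appendix exactly. The only blemish is the parenthetical suggestion that the premises of $\flatILL{\irn\aand}$ in case 4 are closed by \eqref{eq:derive-ref} --- there they are closed by the two hypotheses $L\deriveBaseM{\baseB}\flatILL{\varphi}$ and $L\deriveBaseM{\baseB}\flatILL{\psi}$ themselves under the common context $C_1=L$ --- but this does not affect the argument.
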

\begin{proof}[(Proof of Lemma~\ref{lem:completeness-atomic-definitions})]
    We take each case in turn:
    \begin{enumerate}
        \item Going left to right, we start by fixing an arbitrary $\baseC$, atomic multiset $K$ and atom $p$ such that  $\flatILL{\varphi}\msetsum\flatILL{\psi}\msetsum K\deriveBaseM{\baseC}p$. We want to show that $L\msetsum K\deriveBaseM{\baseC}p$. We note by monotonicity (Lemma~\ref{lem:monotone-derivability}) that $L\deriveBaseM{\baseC}\flatILL{(\varphi\mand\psi)}$. Thus, we use~\eqref{eq:derive-app} with the rule $\flatILL{\ern\mand}$ to conclude $L\msetsum K\deriveBaseM{\baseC}p$.\\
        Going right to left, we consider the case when $\baseC=\baseB$, $K=\emptymultiset$ and $p=\flatILL{(\varphi\mand\psi)}$. Thus, our hypothesis becomes, if $\flatILL{\varphi}\msetsum\flatILL{\psi}\deriveBaseM{\baseB}\flatILL{(\varphi\mand\psi)}$, then $L\deriveBaseM{\baseB}\flatILL{(\varphi\mand\psi)}$. Since $\flatILL{\varphi}\deriveBaseM{\baseB}\flatILL{\varphi}$ and $\flatILL{\psi}\deriveBaseM{\baseB}\flatILL{\psi}$, both by~\eqref{eq:derive-ref}, then we can use~\eqref{eq:derive-app} with the rule $\flatILL{\irn\mand}$, to conclude that $\flatILL{\varphi}\msetsum\flatILL{\psi}\deriveBaseM{\baseB}\flatILL{(\varphi\mand\psi)}$. Thus, by our hypothesis, we obtain $L\deriveBaseM{\baseB}\flatILL{(\varphi\mand\psi)}$.
        \item Going left to right, we start by supposing that $L\deriveBaseM{\baseB}\flatILL{\varphi\mto\psi}$ and noting that $\flatILL{\varphi}\deriveBaseM{\baseB}\flatILL{\varphi}$ by~\eqref{eq:derive-ref}. Thus, if we use~\eqref{eq:derive-app} with the $\flatILL{\ern\mto}$ rule, we conclude that $L\msetsum\flatILL{\varphi}\deriveBaseM{\baseB}\flatILL{\psi}$.\\
        Going right to left, we immediately use~\eqref{eq:derive-app} with the $\flatILL{\irn\mto}$ rule to obtain $L\deriveBaseM{\baseB}\flatILL{\varphi\mto\psi}$.
        \item Going left to right, we start by fixing an arbitrary $\baseC\baseGeq\baseB$, atomic multiset $K$ and atom $p$ such that $K\msetsum\flatILL{\varphi}\deriveBaseM{\baseC}p$. We wish to show $L\msetsum K\deriveBaseM{\baseC}p$. By monotonicity (Lemma~\ref{lem:monotone-derivability}), we have that $L\deriveBaseM{\baseC}\flatILL{\mtop}$. Thus, we use~\eqref{eq:derive-app} with the $\flatILL{\ern\mtop}$ rule to conclude that $L\msetsum K\deriveBaseM{\baseC}p$.\\
        Going right to left, we consider the case when $\baseC=\baseB$, $K=\emptymultiset$ and $p=\flatILL{\mtop}$. Thus, our hypothesis becomes, if $\deriveBaseM{\baseB}\flatILL{\mtop}$, then $L\deriveBaseM{\baseB}\flatILL{\mtop}$. Since $\deriveBaseM{\baseB}\flatILL{\mtop}$ holds by~\eqref{eq:derive-app}, with the $\flatILL{\irn\mtop}$ rule, we thus conclude $L\deriveBaseM{\baseB}\flatILL{\mtop}$.
        \item Going left to right, we immediately use the $\flatILL{\ern\aand}$ rules on the hypothesis $L\deriveBaseM{\baseB}\flatILL{(\varphi\aand\psi)}$ to conclude that $L\deriveBaseM{\baseB}\flatILL{\varphi}$ and $L\deriveBaseM{\baseB}\flatILL{\psi}$.\\
        Going right to left, since we have that $L\deriveBaseM{\baseB}\flatILL{\varphi}$ and $L\deriveBaseM{\baseB}\flatILL{\psi}$, we use~\eqref{eq:derive-app} with the rule $\flatILL{\irn\aand}$ to obtain $L\deriveBaseM{\baseB}\flatILL{(\varphi\aand\psi)}$.
        \item Going left to right, we start by fixing an arbitrary $\baseC\baseGeq\baseB$, atomic multiset $K$ and atom $p$ such that $K\msetsum\flatILL{\varphi}\deriveBaseM{\baseC}p$ and $K\msetsum\flatILL{\psi}\deriveBaseM{\baseC}p$. Our goal is to show $L\msetsum K\deriveBaseM{\baseC}p$. To this end, we note that, by monotonicity (Lemma~\ref{lem:monotone-derivability}), we have that $L\deriveBaseM{\baseC}\flatILL{(\varphi\aor\psi)}$. Thus, we use~\eqref{eq:derive-app} with the $\flatILL{\ern\aor}$ rule with these hypotheses to obtain $L\msetsum K\deriveBaseM{\baseC}p$.\\
        Going right to left, we consider the case when $\baseC=\baseB$, $K=\emptymultiset$ and $p=\flatILL{(\varphi\aor\psi)}$. Thus, our hypothesis becomes, if $\flatILL{\varphi}\deriveBaseM{\baseB}\flatILL{(\varphi\aor\psi)}$ and $\flatILL{\psi}\deriveBaseM{\baseB}\flatILL{(\varphi\aor\psi)}$, then $L\deriveBaseM{\baseB}\flatILL{(\varphi\aor\psi)}$. Since by~\eqref{eq:derive-ref}, we have that both $\flatILL{\varphi}\deriveBaseM{\baseB}\flatILL{\varphi}$ and $\flatILL{\psi}\deriveBaseM{\baseB}\flatILL{\psi}$, we can thus use~\eqref{eq:derive-app} with both $\flatILL{\irn\aor}$ rules to conclude that indeed $\flatILL{\varphi}\deriveBaseM{\baseB}\flatILL{(\varphi\aor\psi)}$ and $\flatILL{\psi}\deriveBaseM{\baseB}\flatILL{(\varphi\aor\psi)}$. Thus, we conclude that $L\deriveBaseM{\baseB}\flatILL{(\varphi\aor\psi)}$.
        \item By~\eqref{eq:derive-app} with the rule $\flatILL{\irn\top}$, it holds vacuously (due to the presence of the empty additive box) for any $L$, that  $L\deriveBaseM{\baseB}\flatILL{\top}$.
        \item Going left to right, we start by fixing an arbitrary atomic multiset $K$ and atom $p$. We proceed by noting that by~\eqref{eq:derive-app}, with the rule $\flatILL{\ern\abot}$, since $L\deriveBaseM{\baseB}\flatILL{\abot}$, then for any atom $q$ and multiset of atoms $M$, it holds vacuously (due to the presence of the empty additive box) that $L\msetsum M\deriveBaseM{\baseB}q$. Thus, in particular, it holds for $M=K$ and $q=p$, as required.\\
        Going right to left, we are immediately done as we simply consider the case when $K=\emptymultiset$ and $p=\flatILL{\abot}$.
    \end{enumerate}
\end{proof}
\end{appendices}
\section*{Acknowledgements}
I would like to thank Timo Eckhardt, Alex Gheorghiu, Tao Gu, Victor Nascimento, Elaine Pimentel and David Pym for our many discussions on base-extension semantics. In particular, I would also like to thank Katya Piotrovskaya for finding a bug in a previous version of this manuscript. I would like to further thank the anonymous reviewers for their extensive and incredibly helpful comments on earlier drafts of this manuscript. 
\bibliography{PtSILLR2}
\end{document}